\newcommand{\two}{\ensuremath{\mathbf{2}}}
\newcommand{\setfunc}{\ensuremath{\mathbf{Set}}}
\newcommand{\mvconst}{\ensuremath{\delta}}
\newcommand{\consts}{\ensuremath{\Delta}}
\newcommand{\mvaconst}{\ensuremath{\gamma}}
\newcommand{\aconsts}{\ensuremath{\Gamma}}
\newcommand{\dnf}{\ensuremath{\operatorname{dnf}}}
\newcommand{\belief}{\ensuremath{\mathbb{B}}}
\newcommand\vecsp{\mathbb{V}}
\newcommand\hilbsp{\mathbb{H}}
\newcommand{\inner}[2]{\ensuremath{\langle#1, #2\rangle}}
\begin{document}

\title[Elementary Logic in Linear Space]{Elementary Logic in Linear Space}         


\author{Daniel Huang}
\orcid{nnnn-nnnn-nnnn-nnnn}             
\affiliation{
  \institution{University of California, Berkeley, EECS}            
}
\email{dehuang@berkeley.edu}          


\begin{abstract}
First-order logic is typically presented as the study of deduction in a setting with elementary quantification. In this paper, we take another vantage point and conceptualize first-order logic as a linear \emph{space} that encodes ``plausibility". Whereas a deductive perspective emphasizes \emph{how} (\ie, process), a space perspective emphasizes \emph{where} (\ie, location). We explore several consequences that a shift in perspective to ``signals in space" has for first-order logic, including (1) a notion of proof based on orthogonal decomposition, (2) a method for assigning probabilities to sentences that reflects logical uncertainty, and (3) a ``models as boundary" principle that relates the models of a theory to its ``size".
\end{abstract}

\begin{CCSXML}
<ccs2012>
<concept>
<concept_id>10011007.10011006.10011008</concept_id>
<concept_desc>Software and its engineering~General programming languages</concept_desc>
<concept_significance>500</concept_significance>
</concept>
<concept>
<concept_id>10003456.10003457.10003521.10003525</concept_id>
<concept_desc>Social and professional topics~History of programming languages</concept_desc>
<concept_significance>300</concept_significance>
</concept>
</ccs2012>
\end{CCSXML}

\ccsdesc[500]{Software and its engineering~General programming languages}
\ccsdesc[300]{Social and professional topics~History of programming languages}

\keywords{first-order logic, linear space, logical uncertainty, distributive normal form}  

\maketitle

\section{Introduction}
\label{sec:intro}

The inspiration for this paper has its origins in P{\'o}lya's writings on \emph{plausible reasoning}~\cite{polya1990mathematics1,polya1990mathematics2,polya2004solve}.
\begin{quote}
    Finished mathematics presented in a finished form appears as purely demonstrative, consisting of proofs only. Yet mathematics in the making resembles any other human knowledge in the making. You have to guess a mathematical theorem before you can prove it; you have to guess the idea of the proof before you carry through the details. You have to combine observations and follow analogies; you have to try and try again.~\citep[][pg. vi]{polya1990mathematics1}
\end{quote}
This leads us to ask: \emph{Where is the information content of logical experimentation---the guesses, observations, and analogies---located?} In this paper, we give our attempt at answering this question in the restricted setting where we use a first-order language to express statements. 

The standard presentation of first-order logic is as the study of deduction in a setting with elementary (\ie, over elements of sets) quantification. Thus our first task is to identify a space that holds the information content of first-order sentences. Towards this end, we build upon ideas found in~\cite{huang2019oltp} which introduces a probability distribution on first-order sentences and identifies a corresponding space. We organize our exploration as a series of questions.
\begin{itemize}
    \item \emph{Is there a space that encodes the information content of first-order sentences with fixed quantifier rank (Section~\ref{sec:statesp})?}
    We use the observation that every first-order sentence can be written in \emph{distributive normal form}~\cite{hintikka1965distributive} and identify a \emph{vector space} where sentences of fixed quantifier rank are represented as vectors. 
    \item \emph{Is there a space for the case of sentences with arbitrary quantifier rank (Section~\ref{sec:statesp2})?} 
    We will see that an infinite dimensional analogue of a vector space---a \emph{Hilbert space}---is an appropriate generalization to the setting with arbitrary quantifier rank. A first-order sentence can be identified with a signal in this space.
    \item \emph{How does deduction affect the ``plausibility" of a logical system (Section~\ref{sec:plaus})?}
    We can conceptualize a logical system as carrying the ``plausibility" of first-order sentences. Similar to how a physical system carries an invariant amount of energy in different forms, a logical system carries an invariant amount of ``plausibility" in different places. Under this formulation, we will see that deduction is an \emph{entropic} operation.
\end{itemize}  
At this point, we will have a view of first-order logic as ``signals in space" and so can turn our attention towards the \emph{geometry} of sentences in space.
\begin{itemize}
    \item \emph{Where is a first-order theory located (Section~\ref{sec:theory})?}
    A theory manifests itself as a subspace of a certain dimension. It turns out that a complete theory has dimension one whereas an undecidable theory's dimension is not computable.
    \item \emph{What does orthogonal decomposition reveal about proofs (Section~\ref{sec:proofs})?}
    We show how a proof can be constructed by decomposing the ``plausibility" of a first-order sentence into its orthogonal components. Counter-examples (\ie, theorem proving) can be used to approximate a theory from below and \emph{examples} (\ie, model checking) can be used to approximate a theory from above. An application to assigning probabilities to first-order sentences that reflect \emph{logical uncertainty} and a discussion of conjecturing will also be given.
    \item \emph{How much space does a first-order theory occupy (Section~\ref{sec:size})?}
    We explore two aspects of a ``models as boundary" principle that highlight the difficulty of approximating a theory. In particular, we will encounter a familiar (edge) isoperimetric principle on the Boolean hypercube: a theory with small ``volume" (\ie, variance) can have unusually large ``perimeter" (\ie, many models).
\end{itemize}
We emphasize that our aim in this paper is to offer a complementary perspective on first-order logic as ``signals in space". At the end of the paper, we will discuss some directions for future work.

\paragraph{Preliminaries}
We assume familiarity with the syntax and semantics of first-order logic, vector spaces, and basic probability theory.

As notation, we write a formula $\mvform$ with free variables $y_1, \dots, y_k$ as $\mvform[\bar{y}_k]$ or simply $\mvform[\bar{y}]$. When the quantifier rank of a sentence is important, we indicate it as $\mvform^{(r)}$. Throughout the paper, we work with a first-order language $\cL$ with equality and a finite number of predicates and without constants or function symbols.

Let $\two \eqdef \set{0, 1} \cong \set{\false, \true}$ denote the two element set where $\true \eqdef (\forall x) \, x = x \lor \lnot ((\forall x) \, x = x)$ and $\false \eqdef \lnot \true$. Let $\setfunc(X) \eqdef \two^X$ denote the powerset of $X$.

\section{Space: Fixed Quantifier Rank}
\label{sec:statesp}

\emph{Is there a space that encodes the information content of first-order sentences with fixed quantifier rank?}
We begin our exploration of first-order logic as a space starting with the simpler case of fixed quantifier rank. We use the observation that every first-order sentence can be written in \emph{distributive normal form}~\cite{hintikka1965distributive} and identify a \emph{vector space} where sentences of fixed quantifier rank are represented as vectors (Section~\ref{subsec:statesp:vecsp}). We then examine basic properties of the vector space (Section~\ref{subsec:statesp:structure}). Throughout this section, we restrict attention to sentences of quantifier rank $r$.

\subsection{Vector Space}
\label{subsec:statesp:vecsp}

We define a vector space for rank $r$ sentences by identifying a basis of rank $r$ \emph{constituents} for the vector space. Before we review the definition of constituents, we highlight two properties that make them a natural candidate for a basis.
\begin{proposition}[\cite{hintikka1965distributive}]
\begin{description}
    \item[Distributive normal form] Every formula $\mvform^{(r)}[\bar{y}]$ can be written in \emph{distributive normal form}, \ie, as a disjunction
    \[
    \mvform^{(r)}[\bar{y}] \equiv \lOr_{\mvconst^{(r)}[\bar{y}] \in \dnf(\mvform^{(r)}[\bar{y}])} \mvconst^{(r)}[\bar{y}]
    \]
    where $\dnf: \cL[\bar{y}] \rightarrow \setfunc(\consts^{(r)}[\bar{y}])$ is a function that maps a formula (an element of $\cL[\bar{y}]$) to a subset of constituents (an element of $\setfunc(\consts^{(r)}[\bar{y}])$).
    \item[Mutual exclusion] Any two distinct constituents $\mvconst^{(r)}_i[\bar{y}]$ and $\mvconst^{(r)}_j[\bar{y}]$ are mutually exclusive, \ie, $\vDash \mvconst^{(r)}_i[\bar{y}] \rightarrow \lnot \mvconst^{(r)}_j[\bar{y}]$.
\end{description}
\label{prop:statesp:vecsp:constituent}
\end{proposition}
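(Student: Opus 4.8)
The plan is to prove both parts simultaneously by induction on the quantifier rank $r$, together with an auxiliary \emph{exhaustiveness} lemma. I first recall the recursive definition of constituents: a depth-$0$ constituent $\mvaconst^{(0)}[\bar{y}_k]$ is a maximal consistent conjunction of atomic formulas over $\bar{y}_k$ and their negations, and a depth-$(r+1)$ constituent in $\bar{y}_k$ has the form
\[
\mvaconst^{(0)}[\bar{y}_k] \land \bigwedge_{\mvconst \in S} (\exists x)\, \mvconst[\bar{y}_k, x] \land (\forall x) \bigvee_{\mvconst \in S} \mvconst[\bar{y}_k, x]
\]
for some set $S$ of depth-$r$ constituents in the variables $\bar{y}_k, x$. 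Since $\cL$ has finitely many predicates and no function symbols, there are finitely many atomic formulas over any fixed tuple of variables, hence finitely many depth-$0$ constituents, and then --- by a side induction on $r$ --- finitely many depth-$r$ constituents in $\bar{y}_k$ for every $r, k$; I would record this finiteness first, since it is what makes $\dnf$ a finite disjunction. The exhaustiveness lemma, $\vDash \bigvee_i \mvconst_i^{(r)}[\bar{y}_k]$ over all depth-$r$ constituents in $\bar{y}_k$, I would prove by induction on $r$: for $r = 0$ every tuple in every model satisfies the conjunction of literals it makes true; for $r+1$, the set of depth-$r$ constituents realized by the one-element extensions of a given tuple is well defined (exhaustiveness at rank $r$) and nonempty (domains are nonempty), so the tuple satisfies the depth-$(r+1)$ constituent built from its depth-$0$ type and that set.

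For \textbf{mutual exclusion} I would induct on $r$. At depth $0$, two distinct maximal conjunctions of literals disagree on some literal, hence are contradictory. At depth $r+1$, let two distinct constituents have depth-$0$ parts $\mvaconst^{(0)}, \widehat{\mvaconst}^{(0)}$ and sets $S, \widehat{S}$. If $\mvaconst^{(0)} \ne \widehat{\mvaconst}^{(0)}$ we are done by the base case. Otherwise $S \ne \widehat{S}$, so (without loss of generality) pick a depth-$r$ constituent $\mvconst' \in S \setminus \widehat{S}$; the first constituent entails $(\exists x)\, \mvconst'[\bar{y}_k, x]$, while the second entails $(\forall x) \bigvee_{\mvconst'' \in \widehat{S}} \mvconst''[\bar{y}_k, x]$, and by the induction hypothesis $\mvconst'$ is mutually exclusive with every member of $\widehat{S}$, so the second entails $(\forall x)\, \lnot \mvconst'[\bar{y}_k, x]$ --- a contradiction. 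Hence the two constituents are mutually exclusive.

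For the \textbf{distributive normal form} I would induct on $r$, proving the stronger claim that every formula of rank $\le r$ is equivalent to a disjunction of depth-$r$ constituents in its free variables. For $r = 0$: put the quantifier-free formula in propositional disjunctive normal form, then expand each disjunct by conjoining, for every atomic formula over $\bar{y}_k$ not yet occurring, that formula or its negation, and distributing; deleting inconsistent results leaves a disjunction of depth-$0$ constituents. For the step: every rank-$(r+1)$ formula $\mvform^{(r+1)}[\bar{y}_k]$ is a Boolean combination of atomic formulas over $\bar{y}_k$ and maximal subformulas $(\exists x)\, \psi[\bar{y}_k, x]$ with $\psi$ of rank $\le r$ (rewriting $\forall$ as $\lnot \exists \lnot$); by the induction hypothesis $\psi \equiv \bigvee_{\mvconst \in T} \mvconst[\bar{y}_k, x]$ for some set $T$ of depth-$r$ constituents, and since $\exists$ distributes over disjunction, $\mvform^{(r+1)}$ becomes a Boolean combination of atomic formulas over $\bar{y}_k$ and the finitely many existential atoms $E_j \eqdef (\exists x)\, \mvconst_j^{(r)}[\bar{y}_k, x]$, one for each depth-$r$ constituent in $\bar{y}_k, x$. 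Expanding this Boolean combination over the joint truth-value of all atomic formulas over $\bar{y}_k$ and of the $E_j$, each surviving case pins down a depth-$0$ type $\mvaconst^{(0)}[\bar{y}_k]$ and a set $S$ of indices with $E_j$ true; using exhaustiveness and mutual exclusion of depth-$r$ constituents, the assertion ``$E_j$ holds for exactly the $j \in S$'' is equivalent to the clauses $\bigwedge_{j \in S} (\exists x)\, \mvconst_j^{(r)}[\bar{y}_k, x] \land (\forall x) \bigvee_{j \in S} \mvconst_j^{(r)}[\bar{y}_k, x]$, so that case is precisely the depth-$(r+1)$ constituent determined by $\mvaconst^{(0)}$ and $S$. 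Thus $\mvform^{(r+1)}$ is equivalent to a disjunction of depth-$(r+1)$ constituents, which we take as $\dnf(\mvform^{(r+1)})$.

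The main obstacle is this last collapsing step: checking that an arbitrary Boolean combination of the existential atoms $E_j$ together with the quantifier-free part reorganizes exactly into a disjunction of depth-$(r+1)$ constituents --- which is exactly where mutual exclusion and exhaustiveness \emph{at rank $r$} enter, and why the two parts of the proposition must be run in tandem with the exhaustiveness lemma rather than proved in isolation. I would also flag that the resulting disjunction may contain inconsistent constituents (e.g.\ when $S$ contains a constituent whose depth-$0$ reduct clashes with $\mvaconst^{(0)}$); these are harmless for the equivalence, and cannot in general be pruned effectively --- a point worth stating here since it reappears when the dimension of an undecidable theory is discussed.
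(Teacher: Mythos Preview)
The paper does not give its own proof of this proposition: it is stated with a citation to Hintikka and left without argument, so there is nothing to compare your approach against. Your sketch is the standard inductive proof of Hintikka's theorem and is correct.

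A few remarks on alignment with the paper's conventions, since your presentation of constituents differs cosmetically from the paper's. The paper writes a rank-$r$ (attributive) constituent as
\[
\mvaconst^{(0)}[\bar{y}] \land \bigwedge_{\mvaconst[\bar{y};z] \in \aconsts^{(r-1)}[\bar{y};z]} (\pm)^{s(\mvaconst[\bar{y};z])} (\exists^E z)\,\mvaconst[\bar{y};z],
\]
with the conjunction ranging over \emph{all} rank-$(r{-}1)$ attributive constituents and a sign function $s$, rather than your form $\bigwedge_{\mvconst\in S}(\exists x)\,\mvconst \land (\forall x)\bigvee_{\mvconst\in S}\mvconst$. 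The two are equivalent: take $S = s^{-1}(1)$ and use exhaustiveness at rank $r{-}1$ to collapse the conjunction of negated existentials into your universal clause. Your mutual-exclusion and normal-form arguments transfer directly under this translation. The paper also uses exclusive quantification $\exists^E$ and separates \emph{attributive} constituents (built from $\cB[\bar{y}]$, atomic formulas mentioning the last variable) from full constituents (which adjoin the remaining atomic data over $\bar{y}$); neither distinction affects the structure of your induction. Finally, your phrase ``maximal \emph{consistent} conjunction'' for depth-$0$ constituents is slightly stronger than the paper's $\mathbf{S}(\cB[\bar{y}])$, which admits all sign patterns including inconsistent ones --- but since you later allow inconsistent constituents in $\dnf(\mvform)$ anyway, this is only a wording slip.
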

\noindent The first item captures the idea that every vector (formula) can be expressed as a linear combination of basis vectors (constituents) where logical or is interpreted as vector addition. The second item hints suggests an inner product for the vector space: two formulas that denote logically distinct possibilities will be orthogonal. Thus constituents actually give an orthogonal basis. We introduce constituents concretely now following~\cite{hintikka1965distributive}.

The intuitive idea behind the definition of a constituent is that we would like to describe possible kinds of ``logical worlds" by enumerating descriptions of how $r$ individuals in the domain are related to one another uniformly in $r$.\footnote{Notably, this method of describing possibilities yields a finite number of finite descriptions for each quantifier rank $r$. In contrast, attempting to describe possibilities by enumerating the individuals in the domain may result in an infinite description when the domain is infinite.} In order to define constituents, we require (1) several auxiliary definitions and (2) the notion of an \emph{attributive constituent}. We introduce these in turn.

\paragraph{Auxiliary definitions}
Let $\cA[\bar{y}]$ denote the set of all atomic formula (\ie, a predicate applied to a tuple of variables) involving the free variables $\bar{y}$. Let $\cB[\bar{y}]$ denote the subset of $\cA[\bar{y}]$ that mentions the last element of $\bar{y}$ at least once. Let
\[
\mathbf{S}(\set{\mvform_1, \dots, \mvform_k}) \eqdef \set{ \lAnd_{i \in \set{1, \dots, k}} (\pm)^{b_i} \mvform_i \ST b_1 \in \two, \dots, b_k \in \two}
\]
be the set where an element is a conjunction of every $\mvform_i$ or its negation where $(\pm)^0 \mvform \eqdef \lnot \mvform$, and $(\pm)^1 \mvform \eqdef \mvform$.

\paragraph{Attributive constituents}
The set of attributive constituents $\aconsts^{(r)}[\bar{y}]$ with free variables $\bar{y}$ is defined by induction on quantifier rank $r$.

In the base case, 
\[
\aconsts^{(0)}[\bar{y}] \eqdef \mathbf{S}(\cB[\bar{y}]) \,.
\]
Intuitively, a rank $0$ attributive constituent with free variables $\bar{y} = y_1, \dots, y_k$ describes how the individual $y_k$ relates to the other individuals $y_1, \dots, y_{k-1}$ via all the atomic formula in the language.

In the inductive case, 
\begin{multline*}
    \aconsts^{(r)}[\bar{y}] \eqdef \Big\{ \mvaconst^{(0)}[\bar{y}] \land \lAnd_{\mvaconst[\bar{y};z] \in \aconsts^{(r-1)}[\bar{y};z]} (\pm)^{s(\mvaconst[\bar{y};z]) } (\exists^E z) \mvaconst[\bar{y};z] \\
     \ST \mvaconst^{(0)} \in \aconsts^{(0)}[\bar{y}], s: \aconsts^{(r-1)}[\bar{y};z] \rightarrow \two \Big\} \,.
\end{multline*}
The notation $(\exists^E z)$ indicates an exclusive interpretation of quantification so that $z$ is distinct from all other variables in scope. For example, $(\exists^E z) \mvform[\bar{y}; z] \eqdef (\exists z) z \neq y_1 \land \dots \land z \neq y_k \land \mvform[\bar{y}; z]$ where $x \neq y$ is shorthand for $\lnot (x = y)$. Intuitively, a rank $r$ constituent with free variables $\bar{y}$ describes (1) how $y_k$ is related to each $y_1, \dots, y_{k-1}$ where $\bar{y} = y_1, \dots, y_k$ and (2) for every possible smaller description (\ie, a rank $r-1$ attributive constituents), whether or not an additional individual $z$ exists that is related to the other individuals $\bar{y}$ recursively via that description.

\paragraph{Constituents}
The set of rank $r$ constituents $\consts^{(r)}[\bar{y}]$ with free variables $\bar{y}$ is defined using attributive constituents as
\begin{multline*}
\consts^{(r)}[\bar{y}; z] \eqdef \{A[\bar{y}] \land \mvaconst^{(r)}[\bar{y}; z] \\
\ST A[\bar{y}] \in \cA[\bar{y}], \mvaconst^{(r)}[\bar{y}; z] \in \aconsts^{(r)}[\bar{y}; z] \} \,.    
\end{multline*}
Thus a rank $r$ constituent with free variables $\bar{y}; z$ completes the description of a rank $r$ attributive constituent by additionally describing how each term in $\bar{y}$ is related to one another. Thus we can think of a constituent as describing a possible kind of ``logical world".

Some constituents will be unsatisfiable, and hence, describe possibilities that are impossible. The satisfiable constituents, on the other hand, describe logically possible worlds. 
\begin{proposition}[\cite{hintikka1965distributive}]
A formula $\mvform^{(r)}[\bar{y}]$ is logically valid iff its distributive normal form contains all satisfiable rank $r$ constituents.
\end{proposition}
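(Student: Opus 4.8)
The plan is to prove both directions of the biconditional using the distributive normal form theorem (the first part of Proposition~\ref{prop:statesp:vecsp:constituent}) together with the mutual exclusion property. First I would fix a formula $\mvform^{(r)}[\bar{y}]$ and write it in distributive normal form as $\mvform^{(r)}[\bar{y}] \equiv \lOr_{\mvconst \in \dnf(\mvform^{(r)})} \mvconst$. The key structural fact I need is that the full disjunction of \emph{all} rank $r$ constituents is logically valid: every model of $\cL$, together with any assignment to $\bar{y}$, must satisfy exactly one rank $r$ constituent, since the constituents are built by exhaustively enumerating all sign patterns over atomic formulas and over existence claims for lower-rank attributive constituents. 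Uniqueness here is mutual exclusion (the second item of Proposition~\ref{prop:statesp:vecsp:constituent}); existence is a straightforward induction on $r$ mirroring the inductive definition of attributive constituents. Equivalently, $\lnot \lOr_{\mvconst \in \consts^{(r)}} \mvconst$ is unsatisfiable, so the unsatisfiable constituents contribute nothing, and $\lOr_{\mvconst \in \consts^{(r)}, \ \mvconst \text{ satisfiable}} \mvconst$ is logically valid.

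For the ``if'' direction, suppose $\dnf(\mvform^{(r)})$ contains every satisfiable rank $r$ constituent. Then $\mvform^{(r)}$ is logically equivalent to a disjunction that includes all satisfiable constituents (adding unsatisfiable disjuncts changes nothing), hence equivalent to $\lOr_{\mvconst \text{ satisfiable}} \mvconst$, which we just argued is logically valid; therefore $\mvform^{(r)}$ is logically valid. For the ``only if'' direction, suppose $\mvform^{(r)}$ is logically valid but some satisfiable constituent $\mvconst_0$ is \emph{not} in $\dnf(\mvform^{(r)})$. Pick a model $\cM$ and assignment satisfying $\mvconst_0$. By mutual exclusion, $\cM$ falsifies every other constituent, in particular every constituent in $\dnf(\mvform^{(r)})$, so $\cM \not\vDash \lOr_{\mvconst \in \dnf(\mvform^{(r)})} \mvconst$, hence $\cM \not\vDash \mvform^{(r)}$, contradicting validity. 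This also tacitly uses that $\dnf$ is faithful in the sense that the displayed equivalence in Proposition~\ref{prop:statesp:vecsp:constituent} really is a logical equivalence, which I would invoke directly as given.

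The main obstacle is justifying the claim that the disjunction of all rank $r$ constituents is logically valid, i.e. the existence half of ``every structure-with-assignment satisfies exactly one constituent.'' This requires unwinding the inductive definition of $\aconsts^{(r)}[\bar{y}]$: given $\cM$ and an assignment to $\bar{y}$, the rank $0$ part is pinned down by reading off the truth values of the finitely many atomic formulas in $\cB[\bar{y}]$; then for the inductive step one observes that for each rank $r-1$ attributive constituent $\mvaconst[\bar{y};z]$, either there is an element $z$ (distinct from the $\bar y$) realizing it or there is not, which fixes the sign $s(\mvaconst)$, and the element realizing it satisfies (by the induction hypothesis) a unique rank $r-1$ constituent. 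Assembling these choices yields a constituent satisfied by $\cM$, and mutual exclusion gives uniqueness. I would present this as a short induction and otherwise lean on Proposition~\ref{prop:statesp:vecsp:constituent} and standard properties of distributive normal form, citing~\cite{hintikka1965distributive} for the underlying combinatorial bookkeeping.
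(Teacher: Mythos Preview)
The paper does not supply its own proof of this proposition: it is stated as a citation to \cite{hintikka1965distributive} and left unproved, like the other items of Proposition~\ref{prop:statesp:vecsp:constituent}. Your argument is the standard one and is correct; it is essentially the proof one finds in Hintikka's original treatment, so there is nothing to compare against in the paper itself beyond noting that your sketch fills in what the paper outsources to the reference.
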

\noindent Of course, it is not decidable whether or not a constituent is satisfiable because validity of first-order statements is not decidable. 
When there are no free variables, attributive constituents and constituents are identical. We abbreviate $\consts^{(r)}[]$ as $\consts^{(r)}$. Throughout the rest of the paper, we will largely focus on the case with no free variables, \ie, sentences, unless indicated otherwise.\footnote{A constituent $\mvconst^{(r)}[\bar{y}_k]$ is equivalent to $(\exists y_1) \dots (\exists y_k) \mvconst^{(r)}[\bar{y}_k]$ which has a distributive normal form of rank $r+k$.}

\paragraph{Constituents as a basis}
As a reminder, a (real-valued) vector space $(V, +, \cdot)$ is a set $V$ along with vector addition $+: V \times V \rightarrow V$ and scalar multiplication $\cdot: \R \times V \rightarrow V$ satisfying the vector space axioms. We define a vector space for sentences of quantifier rank $r$ by defining the set and the two operations.
\begin{definition}
Let $V \eqdef \set{ a_1 \hat{\mvconst}^{(r)}_1 + \dots + a_n \hat{\mvconst}^{(r)}_n \ST \set{\mvconst^{(r)}_1, \dots \mvconst^{(r)}_n} \in \setfunc(\consts^{(r)}), a_1 \in \R, \dots, a_n \in \R }$. Define $+$ as component-wise addition and $\cdot$ as component-wise multiplication.
\end{definition}
\begin{proposition}
$\vecsp^{(r)} \eqdef (V, +, \cdot)$ is a vector space with zero $\vec{0}$ given by the all $0$ vector and additive inverse given by the negation of the vector components.
\end{proposition}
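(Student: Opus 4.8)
The plan is to show that $V$ is (isomorphic to) the free real vector space on the finite generating set $\consts^{(r)}$, so that every vector space axiom is inherited coordinate-wise from $\R$. First I would use the fact, implicit in the inductive construction of attributive constituents and constituents reviewed above (and noted in the footnote that each quantifier rank yields only finitely many finite descriptions), that for fixed rank $r$ and no free variables the set $\consts^{(r)}$ is \emph{finite}; write $\consts^{(r)} = \set{\mvconst^{(r)}_1, \dots, \mvconst^{(r)}_N}$. The key normalization step is then to observe that every element of $V$ has a canonical form $a_1 \hat{\mvconst}^{(r)}_1 + \dots + a_N \hat{\mvconst}^{(r)}_N$ with exactly one coefficient (possibly $0$) per constituent: given any expression $\sum_j b_j \hat{\mvconst}^{(r)}_{i_j}$ as in the definition of $V$, collect like terms (adding, in $\R$, the $b_j$ over all $j$ with $i_j = i$) and insert $0$ for any constituent not appearing. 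This defines a bijection $\Phi \colon V \to \R^N$ sending a vector to its coefficient tuple $(a_1, \dots, a_N)$.

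Next I would check that $\Phi$ intertwines the operations: component-wise addition $+$ on $V$ corresponds under $\Phi$ to coordinatewise addition on $\R^N$, and component-wise scalar multiplication $\cdot$ corresponds to the usual scalar multiplication on $\R^N$ — and closure of $V$ under both operations is precisely the statement that the canonical form is preserved, which is immediate after collecting like terms. Since $(\R^N, +, \cdot)$ is a vector space, transporting this structure along the bijection $\Phi$ shows that $(V, +, \cdot)$ satisfies the eight vector space axioms. In particular $\Phi^{-1}(0, \dots, 0)$ is the all-$0$ vector $\vec{0}$, hence the additive identity, and $\Phi^{-1}(-a_1, \dots, -a_N)$ — the vector obtained by negating each component — is the additive inverse of $\Phi^{-1}(a_1, \dots, a_N)$, exactly as claimed. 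Equivalently, one can skip $\Phi$ and verify the axioms directly on $V$: associativity and commutativity of $+$, the identity law for $\vec 0$, the inverse law for componentwise negation, $1 \cdot v = v$, $(\alpha\beta)\cdot v = \alpha\cdot(\beta\cdot v)$, and the two distributive laws each reduce, coordinate by coordinate, to the corresponding identity in the field $\R$.

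I do not expect a genuine obstacle here; the only point requiring care is the bookkeeping in the normalization step — confirming that ``component-wise'' operations are well defined on formal $\R$-linear combinations of the \emph{syntactic} objects $\mvconst^{(r)}_i$, i.e. that combining like terms and padding with zeros genuinely produces a normal form, so that $\Phi$ is well defined and bijective. Once that is established, the rest is a routine transport of structure (or a mechanical check of the eight axioms), and the identification $\vecsp^{(r)} \cong \R^{N}$ with $N = \lvert \consts^{(r)} \rvert$ additionally makes transparent the later claims that $\dim \vecsp^{(r)} = \lvert \consts^{(r)} \rvert$ and that the $\hat{\mvconst}^{(r)}_i$ form a basis.
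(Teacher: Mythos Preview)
Your proposal is correct and matches the paper's approach: the paper's proof is simply ``Routine verification of the vector space axioms,'' and what you have written is precisely that routine verification spelled out in detail via the identification $V \cong \R^{|\consts^{(r)}|}$. There is nothing to add.
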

\begin{proof}
Routine verification of the vector space axioms.
\end{proof}
\begin{proposition}
Constituents form a basis for $\vecsp^{(r)}$.
\end{proposition}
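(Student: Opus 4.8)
The plan is to show that the set of rank $r$ constituents $\consts^{(r)}$ is both linearly independent and spanning in $\vecsp^{(r)}$, which are the two defining conditions for a basis. Since $\vecsp^{(r)}$ was constructed precisely as the set of all finite real linear combinations $a_1 \hat{\mvconst}^{(r)}_1 + \dots + a_n \hat{\mvconst}^{(r)}_n$ of constituents with component-wise operations, spanning is essentially immediate: every element of $V$ is by definition such a combination, so $\consts^{(r)}$ spans $\vecsp^{(r)}$.

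The substantive part is linear independence. Here I would invoke Proposition~\ref{prop:statesp:vecsp:constituent}: the \textbf{mutual exclusion} property tells us that distinct constituents $\mvconst^{(r)}_i$ and $\mvconst^{(r)}_j$ denote logically disjoint possibilities, and the intended reading of the construction is that the formal symbols $\hat{\mvconst}^{(r)}_1, \dots, \hat{\mvconst}^{(r)}_n$ are treated as distinct coordinate directions — the ``component-wise'' nature of $+$ and $\cdot$ is exactly what encodes this. Concretely, I would argue that if $a_1 \hat{\mvconst}^{(r)}_1 + \dots + a_n \hat{\mvconst}^{(r)}_n = \vec{0}$, then reading off the coefficient of each $\hat{\mvconst}^{(r)}_i$ (which is well-defined because the constituents $\mvconst^{(r)}_1, \dots, \mvconst^{(r)}_n$ are pairwise distinct elements of $\consts^{(r)}$, hence index distinct components) forces $a_i = 0$ for all $i$. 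Finiteness of $\consts^{(r)}$ for each fixed $r$ guarantees we are working in a finite-dimensional setting, so there are no convergence subtleties.

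The main obstacle — really a point requiring care rather than a deep difficulty — is the relationship between the \emph{syntactic} objects (constituents as formulas, which may be logically equivalent or even unsatisfiable) and the \emph{formal basis vectors} $\hat{\mvconst}^{(r)}_i$. One must be explicit that $\vecsp^{(r)}$ is the formal/free vector space on the set $\consts^{(r)}$ of \emph{syntactically distinct} constituents, so that each constituent, satisfiable or not, contributes its own coordinate; the mutual exclusion and distributive normal form properties from Proposition~\ref{prop:statesp:vecsp:constituent} are what justify this as the ``right'' vector space for logic (addition $=$ disjunction on an orthogonal family, equivalence classes collapse unsatisfiable constituents only at the level of denotation, not at the level of the formal space). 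Once this is pinned down, linear independence is the standard fact that the coordinate functions on a free vector space separate the basis elements, and the proof is a short routine verification.
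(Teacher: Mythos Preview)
Your proposal is correct and takes essentially the same approach as the paper, which dispatches the claim in two words (``By construction''): $\vecsp^{(r)}$ is defined as the free real vector space on the finite set $\consts^{(r)}$ with component-wise operations, so spanning and linear independence of the generators are immediate. Your additional discussion of the syntactic-versus-semantic distinction is a helpful clarification but not needed for the formal argument.
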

\begin{proof}
By construction.
\end{proof}
\noindent Because a vector space can be described by many different bases, we should think of constituents as the \emph{standard basis}, similar to the unit vectors of $\R^d$.

The vector space spans all rank $r$ first-order sentences.
\begin{proposition}
Every (rank $r$) first-order sentence $\mvform^{(r)}$ can be written as
\[
\bm{\mvform}^{(r)} = a_1 \hat{\mvconst}^{(r)}_1 + \dots a_n \hat{\mvconst}^{(r)}_n
\]
where $\dnf(\mvform^{(r)}) = \set{\mvconst^{(r)}_1, \dots, \mvconst^{(r)}_n}$ and any non-zero $a_1, \dots, a_n$. Unless stated otherwise, we assume $a_1 = \dots = a_n = |\dnf(\mvform^{(r)})|$.
\end{proposition}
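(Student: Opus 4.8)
The plan is to read the statement as a direct corollary of the distributive normal form theorem (Proposition~\ref{prop:statesp:vecsp:constituent}) together with the two preceding propositions, namely that $\vecsp^{(r)}$ is a vector space and that the rank $r$ constituents are its standard basis. Concretely, I would proceed in three short steps. First, since $\mvform^{(r)}$ is a \emph{sentence}, the empty-tuple instance of the map $\dnf$ returns a finite subset $\dnf(\mvform^{(r)}) = \set{\mvconst^{(r)}_1, \dots, \mvconst^{(r)}_n} \subseteq \consts^{(r)}$ --- finite because, over a language with finitely many predicates and no function symbols, there are only finitely many rank $r$ constituents. Second, each $\mvconst^{(r)}_i$ is, by the basis proposition, one of the standard basis vectors $\hat{\mvconst}^{(r)}_i$ of $\vecsp^{(r)}$, so for \emph{any} scalars $a_1, \dots, a_n$ the expression $a_1 \hat{\mvconst}^{(r)}_1 + \dots + a_n \hat{\mvconst}^{(r)}_n$ already denotes a legitimate element of $\vecsp^{(r)}$. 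Hence there is nothing to prove about the existence of the vector; the content is only that this particular vector is the right representative of $\mvform^{(r)}$.

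Third, I would justify the identification. The distributive normal form clause gives $\mvform^{(r)} \equiv \mvconst^{(r)}_1 \lor \dots \lor \mvconst^{(r)}_n$, and under the convention --- motivated in the discussion following Proposition~\ref{prop:statesp:vecsp:constituent} --- that logical disjunction corresponds to vector addition and each disjunct to its basis vector, this is exactly the representation with $a_1 = \dots = a_n = 1$. The mutual-exclusion clause says distinct constituents are logically disjoint (the geometric shadow of orthogonality), so none of the $\hat{\mvconst}^{(r)}_i$ can be absorbed into the others and the \emph{support} of the resulting vector is precisely $\dnf(\mvform^{(r)})$. Nothing in this argument uses the magnitudes of the coefficients, only that they are nonzero, which is the meaning of the clause ``any non-zero $a_1, \dots, a_n$''; we are therefore free to pin down one canonical representative, and I would simply record the choice $a_1 = \dots = a_n = |\dnf(\mvform^{(r)})|$, a normalization that pays off when ``plausibility'' and probabilities enter in Section~\ref{sec:plaus} and later.

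The proof involves no real computation, so the ``hard part'' is purely a matter of hygiene: making the assignment $\mvform^{(r)} \mapsto \bm{\mvform}^{(r)}$ \emph{well defined}, i.e., dependent on $\mvform^{(r)}$ and not on the particular syntactic form used to write it. The point to check is that if $\mvform^{(r)} \equiv \psi^{(r)}$ then their distributive normal forms agree; by mutual exclusion this holds up to constituents that are unsatisfiable (such a constituent would be entailed by the sentence yet, being disjoint from every constituent in the other distributive normal form, forced to be unsatisfiable), and such constituents can be dropped without changing the sentence. One therefore fixes the convention, e.g., that $\dnf(\mvform^{(r)})$ collects exactly the satisfiable rank $r$ constituents entailing $\mvform^{(r)}$, after which the assignment depends only on the logical content of $\mvform^{(r)}$. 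I would also flag --- not needed for the proof, but important for the paper's later sections --- that because satisfiability of constituents is undecidable, $\bm{\mvform}^{(r)}$ is well defined yet not in general computable from $\mvform^{(r)}$; this is exactly the gap that Sections~\ref{sec:proofs} and~\ref{sec:size} turn into a feature.
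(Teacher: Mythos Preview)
Your core argument is correct and matches the paper's own proof, which is a single sentence: ``Every first-order sentence of rank $r$ can be written as a disjunction of rank $r$ constituents (Proposition~\ref{prop:statesp:vecsp:constituent}).'' Your first two steps are exactly this, just spelled out more carefully.

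Your third paragraph, however, goes beyond what the proposition asks and in doing so slightly misreads the paper's setup. The proposition only claims that each sentence \emph{can} be written as such a vector; it does not assert that the assignment $\mvform^{(r)} \mapsto \bm{\mvform}^{(r)}$ is invariant under logical equivalence. In fact the paper deliberately does \emph{not} quotient out unsatisfiable constituents: in Section~\ref{subsec:statesp:structure} it remarks that ``there are multiple ways to express falsehood in this standard basis,'' and that removing the inconsistent constituents would give a different basis (the $r$-isomorphism types). So $\dnf$ in this paper is a syntactic map that may return unsatisfiable constituents, and two logically equivalent sentences can map to different vectors. Your proposed convention---restricting $\dnf$ to satisfiable constituents---would make the map semantic but uncomputable, which is precisely what the paper avoids at this stage. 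Drop that paragraph and your proof is complete.
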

\begin{proof}
Every first-order sentence of rank $r$ can be written as a disjunction of rank $r$ constituents (Proposition~\ref{prop:statesp:vecsp:constituent}).
\end{proof}

As a reminder, every sentence of rank $s \leq r$ is equivalent to some sentence of rank $r$. Thus this vector space expresses all first-order sentences up to rank $r$.

\subsection{Structure}
\label{subsec:statesp:structure}

We review some of the structure of the vector space in this section.

\paragraph{Dimensions}
The number of attributive constituents of rank $r$ can be computed recursively as
\[
|\aconsts^{(r)}[\bar{y}]| \eqdef |\aconsts^{(0)}[\bar{y}]| 2^{|\aconsts^{(r-1)}[\bar{y}; z]|}
\]
where $|\aconsts^{(0)}[\bar{y}]|$, is combinatorial in the number of base predicates in $\bar{y}$. The number of constituents of rank $r$ is then
\[
|\consts^{(r)}[\bar{y}]| \eqdef |\consts^{(0)}[\bar{y}]| 2^{|\aconsts^{(r-1)}[\bar{y}; z]|}
\]
where $|\consts^{(0)}[\bar{y}]|$ is combinatorial in the number of base predicates in $\bar{y}$. Consequently, the first-order vector space has a super-exponential number of dimensions as a function of quantifier rank $r$. Thus the vector space has astronomic dimension. Later in Section~\ref{sec:size}, we will see how models of a theory can be used to approximately span the theory.

\paragraph{Orthogonality}
We can define an inner product between two vectors in the standard way:
\[
\inner{v}{w} \eqdef \sum_{\mvconst^{(r)} \in \consts^{(r)}} v_{\mvconst^{(r)}} w_{\mvconst^{(r)}} \,.
\]
It is a routine exercise to check that this is an inner product. As usual, two vector are orthogonal if their inner product is $0$.
\begin{proposition}
Two vectors are orthogonal in the space if they are logically mutually exclusive.
\end{proposition}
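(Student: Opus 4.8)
The plan is to identify each vector with the sentence it represents and then reduce orthogonality to disjointness of supports in the constituent basis. Given $v, w \in \vecsp^{(r)}$, let $\supp(v) \subseteq \consts^{(r)}$ be the set of constituents on which $v$ has a nonzero coordinate, and associate to $v$ the sentence $\mvform_v^{(r)} \eqdef \lOr_{\mvconst' \in \supp(v)} \mvconst'$; since $\dnf(\mvform_v^{(r)}) = \supp(v)$ (Proposition~\ref{prop:statesp:vecsp:constituent}), this is precisely the sentence $v$ represents, and similarly for $w$. Saying $v$ and $w$ are logically mutually exclusive means $\vDash \mvform_v^{(r)} \rightarrow \lnot \mvform_w^{(r)}$, i.e. $\mvform_v^{(r)} \land \mvform_w^{(r)}$ is unsatisfiable. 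Since $\inner{v}{w} = \sum_{\mvconst^{(r)} \in \consts^{(r)}} v_{\mvconst^{(r)}} w_{\mvconst^{(r)}}$ and a summand can be nonzero only when $\mvconst^{(r)} \in \supp(v) \cap \supp(w)$, it suffices to show that $\supp(v) \cap \supp(w)$ contains no satisfiable constituent.

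For the core step, suppose toward a contradiction that some satisfiable $\mvconst^{(r)} \in \supp(v) \cap \supp(w)$. From $\mvconst^{(r)} \in \supp(v)$ and $\mvform_v^{(r)} \equiv \lOr_{\mvconst' \in \supp(v)} \mvconst'$ we get $\vDash \mvconst^{(r)} \rightarrow \mvform_v^{(r)}$, and symmetrically $\vDash \mvconst^{(r)} \rightarrow \mvform_w^{(r)}$. Choosing any model $M$ with $M \vDash \mvconst^{(r)}$ — which exists because $\mvconst^{(r)}$ is satisfiable — yields $M \vDash \mvform_v^{(r)} \land \mvform_w^{(r)}$, contradicting unsatisfiability of the conjunction. (Note that the mutual-exclusion clause of Proposition~\ref{prop:statesp:vecsp:constituent} is what makes the associated sentence analysis clean: distinct constituents overlap only trivially, so the support really is the ``meaning-carrying'' part of the vector.) Hence every constituent in $\supp(v) \cap \supp(w)$ is unsatisfiable, so for every $\mvconst^{(r)} \in \consts^{(r)}$ either $v_{\mvconst^{(r)}} = 0$ or $w_{\mvconst^{(r)}} = 0$ or $\mvconst^{(r)}$ is inconsistent, and in particular every term of $\inner{v}{w}$ vanishes once the inconsistent directions are set aside; thus $\inner{v}{w} = 0$, i.e. $v \perp w$.

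The one delicate point — the step I expect to require the most care — is exactly the status of unsatisfiable constituents: an inconsistent constituent describes an impossible world, so two sentences can share such a constituent in their distributive normal forms while still being (vacuously) mutually exclusive, which would spoil the count above if inconsistent constituents contributed to the inner product. I would resolve this by adopting the convention, consistent with the irredundant distributive normal form, that $\dnf$ returns only satisfiable constituents — equivalently, that the coordinate of any vector along an inconsistent constituent is taken to be $0$ — so that $\supp(v)$ and $\supp(w)$ consist solely of satisfiable constituents and the intersection above is genuinely empty. With that convention the argument is complete; without it the proposition should be read as orthogonality modulo the (logically inert) inconsistent directions, and the proof is otherwise unchanged.
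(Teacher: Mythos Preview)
Your approach is the same as the paper's in spirit: the paper's entire proof is the one-liner ``By Proposition~\ref{prop:statesp:vecsp:constituent}'', i.e.\ it simply invokes the mutual-exclusion property of constituents, and your argument is a careful unpacking of exactly that invocation.

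Where you go beyond the paper is in flagging the unsatisfiable-constituent issue, and you are right to do so. As you observe, two sentences can be logically mutually exclusive (for instance, both unsatisfiable) while sharing an inconsistent constituent in their $\dnf$, giving a nonzero inner product --- e.g.\ take $\mvform_v = \mvform_w = \mvconst^{(r)}$ for a single inconsistent $\mvconst^{(r)}$; these are vacuously mutually exclusive yet $\inner{\bm{\mvform_v}}{\bm{\mvform_w}} = 1$. The paper does not address this in the proof; the paragraph following the proposition mentions that inconsistent constituents complicate matters, but frames it (somewhat oddly) as a failure of the \emph{converse} rather than of the forward direction. Your proposed fix --- restricting $\dnf$ to satisfiable constituents, equivalently passing to the basis of $r$-isomorphism types the paper mentions --- is exactly the natural repair, and is consistent with the paper's own remark that ``if we remove all inconsistent constituents \ldots\ falsehood is represented by $\vec{0}$.'' So: same route as the paper, but you have located and patched a gap the paper leaves open.
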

\begin{proof}
By Proposition~\ref{prop:statesp:vecsp:constituent}.
\end{proof}

The converse does not hold because there can be multiple inconsistent constituents at some quantifier rank $r$. In other words, there are multiple ways to express falsehood in this standard basis so we may have two unsatisfiable sentences that have an inner product of $0$. Note that if we remove all inconsistent constituents of rank $r$ from the basis, then we end up with a basis of $r$-isomorphism types (\eg, see~\cite{libkin2004fmt}). In this case, falsehood is represented by $\vec{0}$.

\paragraph{Gram-Schmidt}
Let $\cE$ be an enumeration $\cL^{(r)}$ that spans it, \ie, every first-order sentence of rank $r$ can be written as a disjunction of elements from $\cE$. Given an enumeration of $\cL^{(r)}$, we can construct an enumeration that spans it as $\lnot \mvform^{(r)}_1, \mvform^{(r)}_1,  \lnot \mvform^{(r)}_1 \land \lnot \mvform^{(r)}_2, \lnot \mvform^{(r)}_1 \land \mvform^{(r)}_2, \mvform^{(r)}_1 \land \lnot \mvform^{(r)}_2, \mvform^{(r)}_1 \land \mvform^{(r)}_2, \dots$.

Recall that the Gram-Schmidt process is a method for constructing an orthonormal basis from a set of vectors that spans the space. Applying the Gram-Schmidt process results in the following basis: $\hat{e_1} \eqdef \mvform^{(r)}_1, \hat{e_2} \eqdef \lnot \mvform^{(r)}_1 \land \mvform^{(r)}_2, \hat{e_3} \eqdef \lnot \mvform^{(r)}_1 \land \lnot \mvform^{(r)}_2 \land \mvform^{(r)}_3, \dots$.

\newcommand{\mvval}{\nu}
\newcommand{\mvmeas}{\beta}

\section{Space: Arbitrary Quantifier Rank}
\label{sec:statesp2}

\emph{Is there a space for the case of sentences with arbitrary quantifier rank?} 
The case of sentences with fixed quantifier rank suffices to illustrate the concepts that we hope to explore in the rest of the paper. Nevertheless, for the sake of completeness, we generalize the space perspective for sentences with fixed quantifier rank to the case where sentences have arbitrary quantifier rank. Towards this end, we identify a \emph{Hilbert space} where first-order sentences are represented as \emph{signals} in this space. 

The construction of the Hilbert space is inspired by the one given in~\cite{huang2019oltp} (Section~\ref{subsec:statesp2:ms} and Section~\ref{subsec:statesp2:hilbert}). The main difference is that we take the space view as primary whereas~\cite{huang2019oltp} defines a probability distribution on first-order sentences and derives the corresponding space.

\subsection{Limits of Logical Descriptions}
\label{subsec:statesp2:ms}

\begin{figure}[t]
    \centering
    \begin{tikzpicture}[edge from parent/.style={draw,-latex},level/.style={level distance=1cm, sibling distance=40mm/#1}]
\node [] (z){$\mvconst^{(0)}$}
child {
  node (a) {$\mvconst^{(1)}_{a}$}
  child {
    node (b) {$\mvconst^{(2)}_{c}$}
      child { node (c) {$\vdots$} } 
      child { node (d) {$\vdots$} }
  }
  child {
    node (g) {$\mvconst^{(2)}_{d}$}
      child { node (e) {$\vdots$} }
      child { node (f) {$\vdots$} }
  }
}
child {
  node (j) {$\mvconst^{(1)}_{b}$}
  child {
    node (k) {$\mvconst^{(2)}_{e}$}
      child { node (m) {$\vdots$} } 
      child { node (n) {$\vdots$} }
  }
  child {
    node (l) {$\mvconst^{(2)}_{f}$}
      child { node (o) {$\vdots$} }
      child { node (p) {$\vdots$} }
  }
}
;
\path (a) -- (j) node [midway] {\dots};
\path (b) -- (g) node [midway] {\dots};
\path (k) -- (l) node [midway] {\dots};
\path (c) -- (d) node [midway] {\dots};
\path (e) -- (f) node [midway] {\dots};
\path (m) -- (n) node [midway] {\dots};
\path (o) -- (p) node [midway] {\dots};
\end{tikzpicture}
    \caption{An illustration of a refinement tree $T$. Each vertex represents a constituent and each edge indicates a refinement relation.}
    \label{fig:statesp2:reftree}
\end{figure}

The intuitive difference between the fixed quantifier rank case and the arbitrary quantifier rank case is ``resolution": the former considers descriptions of logical possibilities at a fixed resolution whereas the latter considers descriptions of logical possibilities at arbitrary resolution. It is thus natural to think of an infinitely precise description of a logical possibility as the limit of a sequence of increasingly finer descriptions.

Our task in this section is to formalize this intuition of the limit of a convergent sequences of logical descriptions. The contents of this section are rather tedious and not essential to the rest of the paper. The high-level takeaway is that we can associate a tree structure (Figure~\ref{fig:statesp2:reftree}) with the set of constituents $\consts \eqdef \bigcup_{r \in \N} \consts^{(r)}$ where each (infinite) path $(\mvconst^{(r)})_{r \in \N}$ in the tree corresponds to a sequence of constituents ordered by quantifier rank that provide increasingly finer descriptions. The limiting description is the ``endpoint" of the corresponding path and provides an infitely precise description.

\paragraph{Expansion}
We begin by recalling a basic fact about constituents and their expansion to higher quantifier rank.
\begin{proposition}[\cite{hintikka1965distributive}]
Every rank $r$ constituent can be written as a disjunction of rank $r+s$ constituents, \ie, there is an \emph{expansion} relation $\succ^s_r: \consts^{(r+s)} \times \consts^{(r)} \rightarrow \two$ such that any $\mvconst^{(r)}$ can be expressed as
\[
\mvconst^{(r)} \equiv \lOr_{\mvconst^{(r+s)} \succ^s_r \mvconst^{(r)}} \mvconst^{(r+s)} \,.
\]
\end{proposition}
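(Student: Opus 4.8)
The plan is to reduce to the unit increment $s=1$ and then expand a single constituent by unfolding the inductive definition of attributive constituents. For the reduction, observe that if every rank $k$ constituent is a disjunction of rank $k+1$ constituents via some relation $\succ^1_k$, then substituting $s$ times expresses $\mvconst^{(r)}$ as a disjunction of rank $r+1$ constituents, each of which is a disjunction of rank $r+2$ constituents, and so on down to rank $r+s$; one takes $\succ^s_r$ to be the composition of the single-step relations, so that $\mvconst^{(r+s)} \succ^s_r \mvconst^{(r)}$ holds iff there is a chain of single-step refinements connecting them. The stated equivalence needs nothing about distinctness of the disjuncts, although distinctness does hold after discarding inconsistent constituents, since constituents of a common rank are mutually exclusive (Proposition~\ref{prop:statesp:vecsp:constituent}).

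For $s=1$, I would use that in the no-free-variable case a rank $r$ constituent $\mvconst^{(r)}$ is a rank $r$ attributive constituent, hence of the form $\mvaconst^{(0)} \land \bigwedge_{\mvaconst[z] \in \aconsts^{(r-1)}[z]} (\pm)^{s(\mvaconst[z])} (\exists^E z)\, \mvaconst[z]$ for a choice function $s$. The key ingredient is the free-variable analogue of the proposition one rank lower: every rank $r-1$ attributive constituent $\mvaconst[z]$ in the single variable $z$ is equivalent to a disjunction of certain rank $r$ attributive constituents in $z$, and, over all $\mvaconst$, these blocks partition $\aconsts^{(r)}[z]$. Substituting this into $\mvconst^{(r)}$, pushing $(\exists^E z)$ through disjunctions in the positive conjuncts and applying De Morgan to the negated ones, rewrites $\mvconst^{(r)}$ as a Boolean combination of the formulas $(\exists^E z)\, \mvaconst'[z]$ for $\mvaconst' \in \aconsts^{(r)}[z]$. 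Expanding that Boolean combination into disjunctive normal form, each surviving disjunct is a full sign assignment $t \colon \aconsts^{(r)}[z] \to \two$ that is consistent with $s$ block by block (at least one positive sign in the block over each $\mvaconst$ with $s(\mvaconst)=1$, all signs negative in the block over each $\mvaconst$ with $s(\mvaconst)=0$); such a $t$ together with $\mvaconst^{(0)}$ is, by the inductive definition, exactly a rank $r+1$ constituent. Declaring $\mvconst^{(r+1)}_t \succ^1_r \mvconst^{(r)}$ precisely for the consistent $t$ then yields $\mvconst^{(r)} \equiv \lOr_{\mvconst^{(r+1)} \succ^1_r \mvconst^{(r)}} \mvconst^{(r+1)}$.

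The hard part will be the free-variable refinement fact together with the claim that the blocks genuinely partition $\aconsts^{(r)}[z]$: an \emph{inconsistent} attributive constituent can formally refine several coarser ones, so the partition holds only once inconsistent constituents are discarded (equivalently, one reads $\succ$ semantically via $\vDash \mvaconst' \to \mvaconst$ and allows inconsistent constituents to be assigned to blocks arbitrarily). Since the statement asserts only that \emph{some} relation $\succ^s_r$ exists making the equivalence hold, this fussiness is harmless. A shorter alternative I would also mention: a rank $r$ constituent is in particular a sentence of rank $r \le r+s$, hence equivalent to some sentence $\psi$ of rank $r+s$, and the distributive normal form (Proposition~\ref{prop:statesp:vecsp:constituent}) applied to $\psi$ gives $\mvconst^{(r)} \equiv \psi \equiv \lOr_{\mvconst^{(r+s)} \in \dnf(\psi)} \mvconst^{(r+s)}$ outright; one then sets $\mvconst^{(r+s)} \succ^s_r \mvconst^{(r)}$ iff $\mvconst^{(r+s)} \in \dnf(\psi)$, the only subtlety being that $\dnf$ is canonical only modulo inconsistent constituents, so one should fix $\dnf$ to return only satisfiable constituents or define $\succ^s_r$ semantically.
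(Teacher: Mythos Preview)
The paper does not supply its own proof of this proposition: it is stated with the citation \cite{hintikka1965distributive} and then immediately used, so there is nothing to compare against line by line. Your proposal is a correct reconstruction of the standard Hintikka argument. The reduction to $s=1$ by composing single-step expansions is right, and your description of the $s=1$ step---replacing each $\mvaconst^{(r-1)}[z]$ by its disjunction of rank $r$ attributive constituents, distributing $(\exists^E z)$ over the positive disjunctions, dualising the negated ones, and then reading off the resulting full sign assignments $t$ as rank $r+1$ constituents---is exactly how Hintikka's expansion works. You are also right that the ``partition'' of $\aconsts^{(r)}[z]$ into blocks over the $\mvaconst^{(r-1)}[z]$ is only literally a partition modulo inconsistent attributive constituents; the paper itself flags this just after the proposition (``a constituent can be the expansion of two distinct constituents, in which case it is necessarily unsatisfiable''), so your handling of that point matches the paper's understanding of $\succ^s_r$.

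Your shorter alternative is worth highlighting: since the paper already records (i) that every rank $s \le r$ sentence is equivalent to a rank $r$ sentence and (ii) Proposition~\ref{prop:statesp:vecsp:constituent} giving a distributive normal form at any fixed rank, one can simply view $\mvconst^{(r)}$ as a rank $r+s$ sentence and apply $\dnf$ there. This is strictly more economical than unwinding the inductive definition, and it makes clear that the present proposition is really a special case of Proposition~\ref{prop:statesp:vecsp:constituent}. The only thing it does not give you for free is the explicit combinatorial description of which $\mvconst^{(r+s)}$ lie above a given $\mvconst^{(r)}$, which the longer argument does provide and which the paper later needs when it discusses prefixes and the refinement tree.
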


Observe that a constituent can be the expansion of two distinct constituents, in which case it is necessarily unsatisfiable. In symbols, if $\mvconst^{(r+s)} \succ^s_r \mvconst^{(r)}_1$ and $\mvconst^{(r+s)} \succ^s_r \mvconst^{(r)}_2$ then $\nvDash \mvconst^{(r+s)}$.

Additionally, observe that the notion of expansion is consistent with logical implication. More concretely, if $\mvconst^{(r+s)} \succ^s_r \mvconst^{(r)}$ then $\vDash \mvconst^{(r+s)} \rightarrow \mvconst^{(r)}$. When $\mvconst^{(r+s)}$ is satisfiable, we can interpret it as extending the description of a logical possibility denoted by $\mvconst^{(r)}$ to account for $s$ additional individuals. In other words, $\mvconst^{(r+s)}$ is a higher resolution description of $\mvconst^{(r)}$. When $\mvconst^{(r+s)}$ is not satisfiable, we can interpret it as an inconsistent description obtained by extending some description (either consistent or inconsistent).

These two observations highlight a certain asymmetry between satisfiable and unsatisfiable descriptions. Whereas a satisfiable constituent is necessarily in the expansion of another unique satisfiable constituent at a given rank due to mutual exclusivity, an unsatisfiable constituent can be in the expansion of any constituent. We thus have to make a choice about how different logical impossibilities are related to one another in order to prevent two sequences that eventually describe logical impossibilities from merging into each other.

\paragraph{Logical impossibilities}
We use a syntactic criterion to aid with uniquely associating constituents with their expansions.\footnote{The syntactic criterion we use is one of the conditions used in the definition of \emph{trivial inconsistency}~\cite{hintikka1965distributive}. Trivial inconsistency is a syntactic criterion for determining whether or not a constituent is satisfiable that satisfies a completeness property~\cite{hintikka1965distributive}: a constituent that is not satisfiable has a quantifier rank where all of its expansions are trivially inconsistent.} The idea is to define a suitable notion for a constituent $\mvconst^{(r)}$ to be a prefix of another constituent $\mvconst^{(r+1)}$.

The notation
\[
\dagger \mvconst^{(r)} \eqdef \set{ \langle (\exists^E z) \mvaconst^{(r-1)}[z] \rangle^{\mvaconst^{(r-1)}[z]}_{\mvaconst^{(r)}} \ST \mvaconst^{(r-1)}[z] \in \aconsts^{(r-1)}[z]}
\]
gives the \emph{set representation} of a constituent where the notation $\langle \cdot \rangle^{\mvaconst^{(r-1)}[z]}_{\mvaconst^{(r)}}$ indicates that whether or not the formula inside is negated depends on $\mvaconst^{(r-1)}[z]$ and $\mvaconst^{(r)}$. When the quantifier rank is $0$, we have $\dagger \mvconst^{(0)} \eqdef \set{\true}$. The notation $\tilde{\mvconst}^{(r)}$ indicates that we remove the deepest layer of quantification from $\mvconst^{(r)}$, \ie, all formula involving the rank $r$ quantifiers.\footnote{We can see the effect that removing a layer of quantification has on a constituent by rewriting it to only use rank $0$ attributive constituents as
\begin{align*}
    & \mvconst^{(r+1)} \equiv \lAnd_{\mvaconst^{(r-1)}[z_1] \in \aconsts^{(r)}[z_1]} \Big\langle (\exists^E z_1) \mvaconst^{(0)}[z_1] \land \ldots \Big\langle \\
    & \lAnd_{\mvaconst^{(0)}[\bar{z}_{r+1}] \in \aconsts^{(0)}[\bar{z}_{r+1}]} \Big\langle (\exists^E z_{r+1}) \mvaconst^{(0)}[\bar{z}_{r+1}] \Big\rangle^{\mvaconst^{(1)}[\bar{z}_{r}]}_{\mvaconst^{(r+1)}} \ldots \Big\rangle^{\mvaconst^{(r)}[z_1]}_{\mvaconst^{(r+1)}} \,.
\end{align*}
}

We say that $\mvconst^{(r)}$ is a \emph{prefix} of $\mvconst^{(r+1)}$, written $\mvconst^{(r)} \sqsubseteq \mvconst^{(r+1)}$, if $\dagger \mvaconst^{(r)} = \dagger \tilde{\mvaconst}^{(r+1)}$. 
\begin{proposition}
\begin{enumerate}
    \item Every $\mvconst^{(r)}$ is a prefix of some $\mvconst^{(r+1)}$.
    \item If $\mvconst^{(r)} \sqsubseteq \mvconst^{(r+1)}$ and $\cM \vDash \mvconst^{(r+1)}$ for some $\cM$, then $\cM \vDash \mvconst^{(r)}$.
    \item If $\mvconst^{(r)} \sqsubseteq \mvconst^{(r+1)}$ and $\nvDash \mvconst^{(r)}$, then $\nvDash \mvconst^{(r+1)}$.
\end{enumerate}
\label{prop:statesp2:ms:prefix}
\end{proposition}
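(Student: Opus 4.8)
The plan is to prove a slightly strengthened form of item~(2) with free variables by induction on $r$, obtain item~(3) from it by contraposition, and establish item~(1) by a direct construction. For item~(2) I would prove: whenever $\mvaconst^{(r)}[\bar y] \sqsubseteq \mvaconst^{(r+1)}[\bar y]$ one has $\vDash \mvaconst^{(r+1)}[\bar y] \to \mvaconst^{(r)}[\bar y]$, \ie\ every structure $\cM$ with an assignment $\bar a$ to $\bar y$ satisfying $\mvaconst^{(r+1)}[\bar y]$ also satisfies $\mvaconst^{(r)}[\bar y]$; taking $\bar y$ empty recovers item~(2), since a model of $\mvconst^{(r+1)}$ is in particular such a structure. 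Two standing facts drive the argument. (i) Every structure-with-assignment satisfies exactly one attributive constituent of each rank — existence by reading it off the structure, uniqueness by mutual exclusivity (Proposition~\ref{prop:statesp:vecsp:constituent}). (ii) Unwinding $\dagger$ and $\tilde{\cdot}$ yields the sign-collapse rule: the clause $(\exists^E z)\mvaconst^{(r-1)}[\bar y;z]$ occurs unnegated in $\dagger\tilde{\mvaconst}^{(r+1)}[\bar y]$ exactly when some rank-$r$ attributive constituent $\mvaconst^{(r)}[\bar y;z]$ with $\mvaconst^{(r-1)}[\bar y;z] \sqsubseteq \mvaconst^{(r)}[\bar y;z]$ occurs unnegated in $\mvaconst^{(r+1)}[\bar y]$; equivalently, $(\exists^E z)\mvaconst^{(r-1)}[\bar y;z]$ is negated in $\dagger\tilde{\mvaconst}^{(r+1)}[\bar y]$ iff \emph{all} such refinements are negated in $\mvaconst^{(r+1)}[\bar y]$.

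For the inductive step of item~(2): $\mvaconst^{(r)}[\bar y]$ and $\mvaconst^{(r+1)}[\bar y]$ share the same rank-$0$ atomic conjunct, so that part of $\mvaconst^{(r)}[\bar y]$ holds in $\cM$ automatically; it remains to verify each $\exists^E$-clause of $\mvaconst^{(r)}[\bar y] = \tilde{\mvaconst}^{(r+1)}[\bar y]$. If $(\exists^E z)\mvaconst^{(r-1)}[\bar y;z]$ is asserted positively, fact~(ii) supplies a rank-$r$ refinement asserted positively in $\mvaconst^{(r+1)}[\bar y]$, hence a witness $b$; the inductive hypothesis applied to that refinement and its $\sqsubseteq$-prefix $\mvaconst^{(r-1)}[\bar y;z]$ pushes $b$ down to a witness of $\mvaconst^{(r-1)}[\bar y;z]$. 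If it is asserted negatively, all rank-$r$ refinements of $\mvaconst^{(r-1)}[\bar y;z]$ are asserted negatively in $\mvaconst^{(r+1)}[\bar y]$; a putative witness $b$ of $(\exists^E z)\mvaconst^{(r-1)}[\bar y;z]$ satisfies a unique $\mvaconst^{(r)}[\bar y;z] \in \aconsts^{(r)}[\bar y;z]$, and by the inductive hypothesis the unique $\sqsubseteq$-prefix of that $\mvaconst^{(r)}[\bar y;z]$ is also satisfied by $b$, hence by fact~(i) equals $\mvaconst^{(r-1)}[\bar y;z]$; so $\mvaconst^{(r)}[\bar y;z]$ is one of the refinements, is asserted negatively, and yet is witnessed by $b$ (which is distinct from $\bar a$), a contradiction. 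Item~(3) is then immediate: if $\mvconst^{(r)} \sqsubseteq \mvconst^{(r+1)}$ and $\cM \vDash \mvconst^{(r+1)}$, item~(2) gives $\cM \vDash \mvconst^{(r)}$, so $\nvDash \mvconst^{(r)}$ forces $\nvDash \mvconst^{(r+1)}$.

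For item~(1): a short unwinding of $\tilde{\cdot}$ and $\dagger$ shows that every rank-$(r{+}1)$ attributive constituent has a unique $\sqsubseteq$-prefix of rank $r$ (the reduced set of signed $\exists^E$-clauses, together with the retained atomic part, pins down exactly one constituent). So, given $\mvconst^{(r)}$ — more generally any $\mvaconst^{(r)}[\bar y]$ — define $\mvaconst^{(r+1)}[\bar y]$ to copy the rank-$0$ part of $\mvaconst^{(r)}[\bar y]$ and to assert each $\mvaconst^{(r)}[\bar y;z] \in \aconsts^{(r)}[\bar y;z]$ with exactly the sign that $\mvaconst^{(r)}[\bar y]$ gives to the rank-$(r{-}1)$ clause $\tilde{\mvaconst}^{(r)}[\bar y;z]$ it reduces to. This is a well-defined choice of signs, hence a genuine element of $\aconsts^{(r+1)}[\bar y]$; moreover all rank-$r$ refinements of a given rank-$(r{-}1)$ clause now carry the same sign, so $\tilde{\cdot}$-reduction is conflict-free and returns $\mvaconst^{(r)}[\bar y]$ exactly, \ie\ $\dagger\tilde{\mvaconst}^{(r+1)}[\bar y] = \dagger\mvaconst^{(r)}[\bar y]$, which is $\mvaconst^{(r)}[\bar y] \sqsubseteq \mvaconst^{(r+1)}[\bar y]$.

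I expect the negative-clause case of item~(2) to be the crux: showing that a witness of a rank-$(r{-}1)$ description must, in the strict $\sqsubseteq$-sense, witness one of the rank-$r$ refinements of that description. This is where uniqueness of $\sqsubseteq$-prefixes, the inductive hypothesis (to move semantically between a rank-$r$ constituent and its prefix), and the one-constituent-per-structure fact all have to be combined. A secondary, purely bookkeeping nuisance recurs throughout: since $\dagger$ records only the signed $\exists^E$-clauses, the rank-$0$ atomic part has to be carried separately both when asserting that $\sqsubseteq$-prefixes are unique and when reconstructing a constituent from its set representation.
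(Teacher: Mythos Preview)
Your proposal is correct and takes a genuinely different route from the paper on items~(1) and~(2).

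For item~(1), the paper argues by counting: it observes that $|\{\tilde{\mvconst}^{(r+1)} : \mvconst^{(r+1)} \in \consts^{(r+1)}\}| = 3^{|\aconsts^{(r-1)}[z_1]|}$ (each rank-$(r{-}1)$ clause can appear positive, negative, or both after reduction) while $|\consts^{(r)}| = 2^{|\aconsts^{(r-1)}[z_1]|}$, and invokes pigeonhole. You instead construct an explicit $\mvconst^{(r+1)}$ by lifting signs along the $\tilde{\cdot}$-reduction map. Your construction is more informative---it exhibits a canonical witness and makes clear why the ``pure'' (conflict-free) reductions are all achieved---whereas the paper's cardinality argument is slicker but leaves the surjectivity onto the pure signings implicit.

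For item~(2), the paper's proof is a two-line direct argument: from $\cM \vDash \mvconst^{(r+1)}$ it asserts that $\cM$ satisfies each signed clause of $\mvconst^{(r)}$ ``because $\mvconst^{(r)} \sqsubseteq \mvconst^{(r+1)}$''. This silently uses that $\cM \vDash \mvconst^{(r+1)}$ implies $\cM \vDash \tilde{\mvconst}^{(r+1)}$, which is exactly the point where the negated-$\exists^E$ clauses need care. Your inductive argument with the positive/negative case split, and in particular your use of the one-constituent-per-structure fact to close the negative case, supplies precisely the justification the paper omits. So your approach buys rigor at the cost of length; the paper's buys brevity at the cost of leaving the crux you identified unargued. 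Item~(3) is handled identically (contraposition of~(2)) in both.
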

\begin{proof}
See appendix.
\end{proof}
The first item ensures that every constituent is the prefix of some expansion constituent so that the resolution of its description can increased. The second and third items ensure that prefixes are compatible with logical implication. More concretely, the second ensures that satisfiable expansions have satisfiable prefixes while the third ensures that an unsatisfiable constituent is the prefix of some unsatisfiable expansion.

\begin{algorithm}[t]
\begin{algorithmic}[1]
\Function{edges}{$r$, $\xi$}
\For{$\mvconst^{(r+1)} \in \consts^{(r+1)}$}
    \If{$\mvconst^{(r)} \sqsubseteq \mvconst^{(r+1)}$ for some $\mvconst^{(r)}$}
    \State Add the edge $(\mvconst^{(r)}, \mvconst^{(r+1)})$ to $\xi$.
    \Else
    \State Let $A \eqdef \set{\mvconst^{(r)} \ST \mvconst^{(r+1)} \succ^1_r \mvconst^{(r)}}$.
    \If{$|A| = 0$}
      \State Choose an arbitrary $\mvconst^{(r)} \in \consts^{(r)}$.
      \State Add the edge $(\mvconst^{(r)}, \mvconst^{(r+1)})$ to $\xi^{(r+1)}$. 
    \Else
      \State Choose an arbitrary $\mvconst^{(r)} \in A$.
      \State Add the edge $(\mvconst^{(r)}, \mvconst^{(r+1)})$ to $\xi^{(r+1)}$. 
    \EndIf
    \EndIf
\EndFor
\State \textbf{return} $\xi$
\EndFunction
\end{algorithmic}  
\caption{Algorithm for adding edges to a refinement tree.}
\label{alg:statesp2:ms:edges}
\end{algorithm}
\setlength{\textfloatsep}{0.1cm}

\paragraph{Refinement tree}
Define a sequence of graphs as follows. Let $T_0 \eqdef (\consts^{(0)}, \set{})$ so that the vertices are the constituents of rank $0$ and no edges. Let $T_{r+1} \eqdef T_r \cup (\consts^{(r+1)}, \xi^{(r+1)})$ with edges $\xi^{(r+1)} \eqdef \operatorname{edges}(r, \xi^{(r)})$ where $\operatorname{edges}$ is defined in Algorithm~\ref{alg:statesp2:ms:edges}.
\begin{definition}
The graph $T \eqdef \bigcup_{r \in \N} T_r$ is a \emph{refinement tree}.
\end{definition}
\noindent  We say that $\mvconst^{(r+s)}$ \emph{refines} $\mvconst^{(r)}$ if there is a path from $\mvconst^{(r)}$ to $\mvconst^{(r+s)}$ in $T$.
We check that $T$ is a tree with the desired properties.
\begin{proposition}
For any $r$, $T_r$ is a tree such that $\vDash \mvconst^{(s+1)} \rightarrow \mvconst^{(s)}$ for any $(\mvconst^{(s)}, \mvconst^{(s+1)}) \in \xi^{(r)}$.
\label{prop:statesp2:ms:tree}
\end{proposition}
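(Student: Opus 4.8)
The plan is to prove Proposition~\ref{prop:statesp2:ms:tree} by induction on $r$, maintaining two invariants simultaneously: (a) $T_r$ is a tree (connected, acyclic, with $\consts^{(0)}$ supplying the roots — more precisely a forest whose components are rooted at the rank $0$ constituents), and (b) every edge $(\mvconst^{(s)}, \mvconst^{(s+1)})$ added so far satisfies $\vDash \mvconst^{(s+1)} \rightarrow \mvconst^{(s)}$. The base case $r=0$ is immediate since $T_0 = (\consts^{(0)}, \emptyset)$ is an edgeless graph and there are no edges to check.

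For the inductive step, assume $T_r$ satisfies (a) and (b); we analyze $T_{r+1} = T_r \cup (\consts^{(r+1)}, \xi^{(r+1)})$ where $\xi^{(r+1)} = \operatorname{edges}(r, \xi^{(r)})$. The key structural observation is that $\operatorname{edges}$ adds exactly one edge for each $\mvconst^{(r+1)} \in \consts^{(r+1)}$, and that edge connects $\mvconst^{(r+1)}$ (a brand-new vertex) to some vertex in $\consts^{(r)} \subseteq T_r$. Since each new vertex gets exactly one incident edge going ``down'' to the previous level, no cycle can be created and connectivity of each component is preserved — so $T_{r+1}$ remains a tree (forest rooted at $\consts^{(0)}$), giving invariant (a). For invariant (b), I must check the implication for the newly added edges; the old edges are handled by the inductive hypothesis. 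There are three cases matching the branches of Algorithm~\ref{alg:statesp2:ms:edges}. In the prefix case, the edge is $(\mvconst^{(r)}, \mvconst^{(r+1)})$ with $\mvconst^{(r)} \sqsubseteq \mvconst^{(r+1)}$, and $\vDash \mvconst^{(r+1)} \rightarrow \mvconst^{(r)}$ follows from Proposition~\ref{prop:statesp2:ms:prefix}(2) (contrapositive of its failure, or directly: any model of $\mvconst^{(r+1)}$ is a model of $\mvconst^{(r)}$). In the $|A| > 0$ case, the edge is $(\mvconst^{(r)}, \mvconst^{(r+1)})$ with $\mvconst^{(r+1)} \succ^1_r \mvconst^{(r)}$, and the implication is exactly the remark following the expansion proposition that $\mvconst^{(r+s)} \succ^s_r \mvconst^{(r)}$ implies $\vDash \mvconst^{(r+s)} \rightarrow \mvconst^{(r)}$ (here $s=1$). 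In the $|A| = 0$ case, $\mvconst^{(r+1)}$ is the expansion of no rank $r$ constituent, so by the remark following the expansion proposition it is unsatisfiable, $\nvDash \mvconst^{(r+1)}$; hence $\vDash \mvconst^{(r+1)} \rightarrow \mvconst^{(r)}$ holds vacuously for the arbitrarily chosen $\mvconst^{(r)}$.

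The step I expect to be the main obstacle is verifying cleanly that $T_{r+1}$ is still a tree rather than merely a forest with possible extra structure — in particular, confirming that $\operatorname{edges}$ really does add precisely one edge per new vertex and that both the ``$|A|=0$'' arbitrary choice and the prefix choice land in $\consts^{(r)}$ (so the edge stays within $T_r \cup \consts^{(r+1)}$ and attaches the new leaf to an existing vertex). One also wants to note that ``tree'' here should be read as ``rooted forest with roots $\consts^{(0)}$'' if $|\consts^{(0)}| > 1$, or else one takes $T$ to mean each connected component; the cited Proposition~\ref{prop:statesp2:ms:prefix}(1) guarantees every rank $r$ constituent is the prefix of \emph{some} rank $r+1$ constituent, which is what makes the description-refinement picture sensible but is not actually needed for treeness (that direction matters for showing every node has a child, used later for infinite paths). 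I would state the forest/tree convention explicitly at the start of the proof to avoid ambiguity, then carry out the two-invariant induction as above, deferring the purely bookkeeping verification that $\operatorname{edges}$ adds one edge per vertex to a one-line inspection of the algorithm's \textbf{for} loop.
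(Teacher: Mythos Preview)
Your proposal is correct and follows essentially the same approach as the paper: induction on $r$, treeness via the observation that each new rank $r{+}1$ vertex receives exactly one edge into $\consts^{(r)}$, and the implication invariant via a three-way case split on the branches of Algorithm~\ref{alg:statesp2:ms:edges} (prefix case by Proposition~\ref{prop:statesp2:ms:prefix}(2), $|A|>0$ case by the expansion-respects-implication remark, $|A|=0$ case vacuously). Your side remark that Proposition~\ref{prop:statesp2:ms:prefix}(1) is not actually needed for treeness (only for ensuring every node has a child) is a sharpening of the paper's own argument, which cites item~1 somewhat loosely at that point.
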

\begin{proof}
See appendix.
\end{proof}
\begin{proposition}
$T$ is a tree such that $\vDash \mvconst^{(r+1)} \rightarrow \mvconst^{(r)}$ for any $(\mvconst^{(r)}, \mvconst^{(r+1)}) \in E$.
\label{prop:statesp2:ms:prefix2}
\end{proposition}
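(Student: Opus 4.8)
The plan is to bootstrap from the finite‑rank statement (Proposition~\ref{prop:statesp2:ms:tree}) to the union $T = \bigcup_{r \in \N} T_r$, using that $T_0 \subseteq T_1 \subseteq \cdots$ is an increasing chain of graphs: each step $T_{r+1} = T_r \cup (\consts^{(r+1)}, \xi^{(r+1)})$ only adjoins the fresh vertices $\consts^{(r+1)}$ and the edges $\xi^{(r+1)} = \operatorname{edges}(r, \xi^{(r)})$, and by inspection of Algorithm~\ref{alg:statesp2:ms:edges} every edge of $\xi^{(r+1)}$ joins a rank‑$(r+1)$ vertex to a rank‑$r$ vertex. Two finiteness observations follow and carry the proof: every edge of $T$ lies in some $T_r$, and every finite subgraph of $T$ is contained in some $T_r$.

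The implication property is then immediate. Given an edge $(\mvconst^{(r)}, \mvconst^{(r+1)}) \in E$, it was introduced in $\xi^{(r+1)}$, which is a subset of the edge set of $T_{r+1}$, so Proposition~\ref{prop:statesp2:ms:tree} applied to the tree $T_{r+1}$ yields $\vDash \mvconst^{(r+1)} \rightarrow \mvconst^{(r)}$. If a self‑contained check is preferred, one reads off the three branches of Algorithm~\ref{alg:statesp2:ms:edges}: a prefix edge satisfies the implication by Proposition~\ref{prop:statesp2:ms:prefix}(2); an edge with $\mvconst^{(r+1)} \succ^1_r \mvconst^{(r)}$ satisfies it because expansion is consistent with logical implication (as noted when expansion was introduced); and in the remaining branch $\mvconst^{(r+1)}$ is the expansion of no rank‑$r$ constituent, hence unsatisfiable, so $\mvconst^{(r+1)} \rightarrow \mvconst^{(r)}$ holds vacuously.

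It remains to see that $T$ is a tree. Acyclicity is finitary: a cycle would use finitely many edges and hence sit inside some $T_r$, contradicting that $T_r$ is a tree. For the branching structure, note that when a rank‑$(r+1)$ vertex $\mvconst^{(r+1)}$ is created at stage $r+1$, Algorithm~\ref{alg:statesp2:ms:edges} attaches to it exactly one edge, pointing to a rank‑$r$ vertex (its parent), while the rank‑$0$ vertex $\mvconst^{(0)}$, with $\consts^{(0)} = \set{\mvconst^{(0)}}$ a singleton as in Figure~\ref{fig:statesp2:reftree}, receives no parent; no subsequent stage touches any edge incident to $\mvconst^{(r+1)}$ except edges down to its rank‑$(r+2)$ children. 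Hence in $T$ every vertex of positive rank has a unique parent and $\mvconst^{(0)}$ is the root, so $T$ is connected and acyclic, i.e.\ a tree.

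I expect the only real subtlety to be bookkeeping rather than mathematics: one must confirm that taking the union neither creates a cycle spanning unboundedly many ranks (excluded by the finiteness observation) nor gives some vertex a second lower‑rank neighbour (excluded because $\operatorname{edges}(r, \cdot)$ only ever adds edges incident to the new level $\consts^{(r+1)}$, so the local structure at a vertex is frozen once its children are attached). Everything else is a direct appeal to the per‑level result of Proposition~\ref{prop:statesp2:ms:tree}.
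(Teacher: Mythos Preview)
Your argument is correct and, in its core, matches the paper's: both reduce the global claim to the finite‐level Proposition~\ref{prop:statesp2:ms:tree} by observing that every edge (and every finite configuration) of $T$ already sits inside some $T_r$. The difference is packaging. The paper phrases this reduction via Zorn's lemma on the chain $\{T_r\}_r \cup \{T\}$ under subgraph inclusion, whereas you argue directly: acyclicity is finitary so any putative cycle lives in some $T_r$, and connectedness follows because Algorithm~\ref{alg:statesp2:ms:edges} gives each positive‐rank vertex exactly one lower‐rank parent, tracing back to the singleton root $\mvconst^{(0)}$. Your route is the more elementary one---Zorn's lemma is overkill for a countable increasing union whose limit is already explicitly given---and it makes the tree structure (unique parent, hence unique root path) visible rather than implicit. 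The paper's version, by contrast, hides connectedness inside the phrase ``identified by a unique path'' without separately checking that the union stays connected. Both are fine; yours is tidier.
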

\begin{proof}
See appendix.
\end{proof}
\noindent The tree structure lends itself to constructing a topology that expresses what an initial segment of refining constituents could converge to.

\paragraph{Topology}
Recall that we can define a topology on a tree by defining a basis of open sets identified by finite initial segments of the tree.\footnote{For background on topology, we refer the reader to~\cite{munkres2000topology}.} Let $\Psi$ be the set of paths of a refinement tree. We write $\bar{\mvconst}^{(r)}$ to indicate the path $(\mvconst^{(0)}, \dots, \mvconst^{(r)})$ in a refinement tree. Then a basic open set has the form
\[
B_{\mvconst^{(r)}} \eqdef \set{\rho \in \Psi \ST \bar{\mvconst}^{(r)} \sqsubseteq \rho}
\]
where $\sqsubseteq$ indicates a prefix relation on refinement tree paths. We can think of a basic open $B_{\mvconst^{(r)}}$ as the description of $\mvconst^{(r)}$ along with all possible ways in which that description can be extended.
\begin{definition}
The \emph{constituent refinement} space is the topological space $(\Psi, \cO(\Psi))$ where the topology $\cO(\Psi)$ is generated by the basis of clopen sets
\[
\set{B_{\mvconst^{(r)}} \ST \mvconst^{(r)} \in \consts^{(r)}} \,.
\]
\end{definition}

\subsection{Hilbert Space}
\label{subsec:statesp2:hilbert}

At the end of the previous section, we obtained a topological space on appropriate sequences of constituents. In this section, we will use the notion of convergence given by the topological space to define a Hilbert space that encodes sentences of arbitrary quantifier rank. Towards this end, we reintroduce metric notions to the topological space so that we can recover a geometry.

The inner product $\inner{\cdot}{\cdot}$ is the fundamental metric notion in a Hilbert space. It intuitively gives a measure of similarity between two elements of a Hilbert space. In our case, we have a space of paths $\Psi$ through a refinement tree. Recall that a path in a refinement tree describes a logical description in finer and finer detail. Thus, if we were to measure the similarity between two such paths, we would intuitively weigh the difference in coarser descriptions (the big picture) more than differences in finer descriptions (the minutiae). This leads us to construct a Hilbert space using $L^2(X, \mu)$, \ie, the space of square summable sequences of $X$ weighted by a measure $\mu$.

\begin{definition}
Let $\hilbsp \eqdef L^2(\Psi, \mvmeas)$ be the \emph{Hilbert space} for sentences of arbitrary quantifier rank where $\mvmeas$ is any measure on $\Psi$'s Borel $\sigma$-algebra. Unless stated otherwise, we assume that $\mvmeas$ is the uniform measure, \ie, $\mvmeas(B_{\mvconst^{(r)}}) = \frac{1}{|\consts^{(r)}|}$.\footnote{Recall that we can uniquely extend a valuation on a clopen basis to a measure on $\sigma$-algebra. For background on measure-theoretic probability, we refer the reader to~\citep{kallenberg2006foundations}.}
\end{definition}
\noindent Recall that $L^2(X, \mu)$ is a Hilbert space so $\hilbsp$ is a Hilbert space.

The Hilbert space spans all first-order sentences. Let $\chi_A(\cdot)$ be the characteristic function over the set $A$.
\begin{proposition}
Every first-order sentence $\mvform^{(r)}$ can be written as
\[
\bm{\mvform}^{(r)} = \sum_{i=1}^n a_i \chi_{B_{\mvconst^{(r)}_i}}(\cdot)
\]
where $\dnf(\mvform^{(r)}) = \set{\mvconst^{(r)}_1, \dots, \mvconst^{(r)}_n}$ and $a_1, \dots, a_n$ are non-zero. Unless stated otherwise, we assume $a_1 = \dots = a_n = |\dnf(\mvform^{(r)})|$.
\end{proposition}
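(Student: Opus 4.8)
The plan is to show that the candidate expression lies in $\hilbsp = L^2(\Psi, \mvmeas)$ and that it faithfully represents $\mvform^{(r)}$ in the same sense that the fixed-rank vector space construction did, namely that the coordinates correspond exactly to the constituents in $\dnf(\mvform^{(r)})$. First I would observe that each summand $\chi_{B_{\mvconst^{(r)}_i}}$ is the characteristic function of a basic clopen set of finite measure $\mvmeas(B_{\mvconst^{(r)}_i}) = 1/|\consts^{(r)}|$, hence a bona fide element of $L^2(\Psi, \mvmeas)$; since $\hilbsp$ is a vector space, any finite linear combination $\sum_{i=1}^n a_i \chi_{B_{\mvconst^{(r)}_i}}$ is again in $\hilbsp$. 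So the right-hand side is well-defined as a vector in the Hilbert space.

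Next I would argue that this vector is the natural image of $\mvform^{(r)}$. By Proposition~\ref{prop:statesp:vecsp:constituent}, $\mvform^{(r)} \equiv \lOr_{i} \mvconst^{(r)}_i$ with $\dnf(\mvform^{(r)}) = \set{\mvconst^{(r)}_1, \dots, \mvconst^{(r)}_n}$, and by mutual exclusivity the constituents $\mvconst^{(r)}_i$ are pairwise disjoint as logical possibilities. The key geometric fact is that distinct rank-$r$ constituents give \emph{disjoint} basic open sets: $B_{\mvconst^{(r)}_i} \cap B_{\mvconst^{(r)}_j} = \emptyset$ for $i \neq j$, because a path $\rho \in \Psi$ has a unique vertex at level $r$ (the refinement tree is a tree, by Proposition~\ref{prop:statesp2:ms:prefix2}), so the initial segment $\bar{\mvconst}^{(r)}$ is determined by $\rho$. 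Hence $\chi_{B_{\mvconst^{(r)}_i}}$ play exactly the role that the orthogonal basis vectors $\hat{\mvconst}^{(r)}_i$ played in $\vecsp^{(r)}$: they are mutually orthogonal in $L^2(\Psi,\mvmeas)$ since $\inner{\chi_{B_{\mvconst^{(r)}_i}}}{\chi_{B_{\mvconst^{(r)}_j}}} = \mvmeas(B_{\mvconst^{(r)}_i} \cap B_{\mvconst^{(r)}_j}) = 0$. The map sending $\mvconst^{(r)} \mapsto \chi_{B_{\mvconst^{(r)}}}$ and interpreting logical disjunction of constituents as the sum of characteristic functions over the union of disjoint sets is therefore consistent, so $\mvform^{(r)}$ is sent to $\sum_i a_i \chi_{B_{\mvconst^{(r)}_i}}$; the constant choice $a_1 = \dots = a_n = |\dnf(\mvform^{(r)})|$ is just a normalization convention as in the fixed-rank case.

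Finally I would note that every first-order sentence of rank $s \leq r$ is equivalent to a rank-$r$ sentence, and every first-order sentence has \emph{some} finite quantifier rank, so the above covers all first-order sentences; this gives the spanning claim. The main obstacle I anticipate is not any hard estimate — everything is a finite sum of finite-measure indicators — but rather the bookkeeping needed to confirm that the representation is well-\emph{defined} independently of how one writes $\mvform^{(r)}$: one must check that if $\mvform^{(r)}$ is re-expressed at a higher rank $r' > r$ via the expansion relation $\succ^{r'-r}_r$, then $\chi_{B_{\mvconst^{(r)}_i}} = \sum_{\mvconst^{(r')} \succ \mvconst^{(r)}_i} \chi_{B_{\mvconst^{(r')}}}$ as elements of $L^2(\Psi,\mvmeas)$, which follows because refinement of a constituent partitions its basic open set (again using that $T$ is a tree and that every constituent is refined by its expansions, Propositions~\ref{prop:statesp2:ms:prefix}--\ref{prop:statesp2:ms:prefix2}). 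With that compatibility in hand the statement follows.
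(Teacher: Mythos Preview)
Your proposal is correct and rests on the same core fact the paper uses: the existence of the distributive normal form (Proposition~\ref{prop:statesp:vecsp:constituent}), which gives $\dnf(\mvform^{(r)}) = \{\mvconst^{(r)}_1,\dots,\mvconst^{(r)}_n\}$ and thereby the claimed expression. The paper's own proof is literally that one sentence and nothing more; it treats the displayed equation essentially as a definition of $\bm{\mvform}^{(r)}$, with the only content being that the DNF exists.

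Your additional checks---that each $\chi_{B_{\mvconst^{(r)}_i}}$ lies in $L^2(\Psi,\mvmeas)$, that distinct rank-$r$ basic opens are disjoint so the indicators are orthogonal, and that the representation is compatible across ranks via the refinement-tree partition---are all sound and arguably worth saying, but they go well beyond what the paper actually proves here. In short: same approach, but you are supplying rigor the paper omits.
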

\begin{proof}
Every first-order sentence of rank $r$ can be written as a disjunction of rank $r$ constituents (Proposition~\ref{prop:statesp:vecsp:constituent}).
\end{proof}
\noindent Thus we can think of a first-order sentence $\mvform^{(r)}$ as the \emph{signal} $\bm{\mvform}^{(r)}$ in $\hilbsp$.

\paragraph{Orthogonality}
The inner product $\inner{\cdot}{\cdot}$ is given by
\[
\inner{f}{g} = \int f \cdot g \, d\mvmeas
\]
for $f, g \in \hilbsp$ (\ie, $f$ and $g$ are square integrable).

Although there are an uncountable number of infinitely precise descriptions, there is a sense in which we really only have a denumerable handle on them. 
\begin{proposition}
$\hilbsp$ has a countable basis.
\end{proposition}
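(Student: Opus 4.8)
The plan is to exhibit an explicit countable set of functions in $\hilbsp = L^2(\Psi, \mvmeas)$ and show it is a (topological) basis, i.e., its closed linear span is all of $\hilbsp$. The natural candidates are the characteristic functions $\chi_{B_{\mvconst^{(r)}}}$ of the basic clopen sets $B_{\mvconst^{(r)}}$ as $\mvconst^{(r)}$ ranges over $\consts \eqdef \bigcup_{r \in \N} \consts^{(r)}$. This collection is countable because each $\consts^{(r)}$ is finite (by the dimension count in Section~\ref{subsec:statesp:structure}, $|\consts^{(r)}|$ is finite for every $r$) and there are countably many ranks. So the whole burden is density of their span.

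First I would recall that the basic clopen sets $\set{B_{\mvconst^{(r)}}}$ form a countable basis for the topology $\cO(\Psi)$, and that they generate the Borel $\sigma$-algebra on which $\mvmeas$ is defined. Since $\Psi$ with this topology is a second-countable, totally disconnected compact (or at least Polish) space — the standard situation for the space of paths through a finitely-branching tree — the clopen sets form a Boolean algebra generating the Borel sets. Then I would invoke the standard measure-theoretic fact that simple functions built from a generating algebra of sets are dense in $L^2$ of the corresponding $\sigma$-algebra: finite linear combinations of $\chi_{B_{\mvconst^{(r)}}}$ are dense in $L^2(\Psi, \mvmeas)$. Concretely, given $f \in \hilbsp$ and $\varepsilon > 0$, approximate $f$ in $L^2$ by a Borel simple function, then approximate each Borel set involved (up to small measure) by a finite disjoint union of basic clopen sets using the approximation property of $\mvmeas$ on the generating algebra, and finally use that each $B_{\mvconst^{(r)}}$ is itself a finite disjoint union of the $B_{\mvconst^{(r+1)}_i}$ with $\mvconst^{(r+1)}_i$ refining $\mvconst^{(r)}$ (which follows from the expansion/refinement relation, Proposition~\ref{prop:statesp2:ms:prefix2}) to get everything expressed in one level. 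This shows the closed span is all of $\hilbsp$, so $\hilbsp$ has a countable (Schauder, or after Gram--Schmidt, orthonormal) basis; one may then cite the Gram--Schmidt remark from Section~\ref{subsec:statesp:structure} to upgrade to an orthonormal basis if desired.

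The main obstacle I anticipate is purely bookkeeping around the measure $\mvmeas$: one must be careful that $\mvmeas$ really is a countably additive Borel measure (the footnote asserts the Carathéodory-style extension of the valuation $\mvmeas(B_{\mvconst^{(r)}}) = 1/|\consts^{(r)}|$ from the clopen algebra), and that the clopen algebra approximates Borel sets in measure — this is immediate when $\Psi$ is compact and the clopen sets are a basis, since then every open set is a countable increasing union of clopen sets, but it is worth stating cleanly. A secondary subtlety is that "basis" for a Hilbert space should be read as "countable orthonormal basis" (equivalently, the space is separable); the cleanest phrasing is to prove separability — the rational-coefficient finite linear combinations of the $\chi_{B_{\mvconst^{(r)}}}$ form a countable dense subset — and then note that every separable Hilbert space admits a countable orthonormal basis via Gram--Schmidt. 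I would structure the final writeup as: (1) the $\chi_{B_{\mvconst^{(r)}}}$ are countably many; (2) their rational span is dense (measure-theoretic approximation); (3) hence $\hilbsp$ is separable; (4) conclude via the standard Hilbert space theorem.
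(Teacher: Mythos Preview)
Your proposal is correct and takes essentially the same approach as the paper: exhibit the countable family $\cB = \bigcup_{r \in \N} \set{\chi_{B_{\mvconst^{(r)}}} \ST \mvconst^{(r)} \in \consts^{(r)}}$, argue it spans, and apply Gram--Schmidt. The paper's own proof simply asserts that $\cB$ ``spans the space'' without justification, whereas you supply the measure-theoretic density argument that actually fills this gap; your version is the more complete one.
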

\begin{proof}
Let $\cB_r \eqdef \set{\chi_{B_\mvconst^{(r)}} \ST \mvconst^{(r)} \in \consts^{(r)}}$ and $\cB \eqdef \bigcup_{r \in \N} \cB_r$. Then $\cB$ is countable and spans the space.
\end{proof}
\noindent We can apply Gram-Schmidt to $\cB$ to obtain a orthonormal basis.

This concludes the construction of a space for first-order logic. For simplicity, we will largely restrict attention to the case of fixed quantifier rank throughout the rest of the paper as it suffices to illustrate the main ideas of what a ``signals in space" perspective might offer.

\section{Plausibility}
\label{sec:plaus}

\emph{How does deduction affect the ``plausibility" of a logical system?}
In this section, we conceptualize a first-order space as carrying the ``plausibility" of first-order sentences (Section~\ref{subsec:plaus:plaus}). Like its physical counterpart which carries a conserved amount of energy in various forms, a logical system carries a conserved amount of ``plausibility" in various places. We will then see that deduction is an entropic operation (Section~\ref{subsec:plaus:mp}). This provides an alternative explanation for the failure of \emph{logical omniscience}, \ie, why a rational agent fails to know all the consequences of a set of axioms. We restrict attention to sentences of quantifier rank less than $R$ in this section.

\subsection{Plausibility as Energy}
\label{subsec:plaus:plaus}

Energy is a concept that is fundamental to the physical world. In particular, the \emph{law} of conservation of energy states that energy cannot be created or destroyed---it can only be converted from one form into another by some \emph{physical process}. We can thus reason about the effects of physical processes by accounting for the system's energy.

We propose that ``plausibility" is the analogous concept of energy for a logical system. A \emph{logical system} is comprised of a single sentence.\footnote{A logical system consisting of $n$ sentences $\mvform_1, \dots, \mvform_n$ is encoded as the one sentence system $\mvform_1 \land \dots \land \mvform_n$.} We define the \emph{plausibility} of a logical system $\bm{\mvform}$ to be
\[
\cP(\bm{\mvform}) \eqdef \sum_{\mvconst^{(R)} \in \consts^{(R)}} p(\mvconst^{(R)}) \inner{\hat{\mvconst}^{(R)}}{\bm{\mvform}}
\]
where $p$ is some probability mass function on $\consts^{(R)}$.

We thus have the following analogy with energy:
\begin{align*}
    \text{plausibility} & \sim \text{energy} \\
    \text{logical possibilities} & \sim \text{forms of energy} \,.
\end{align*}
Each logical possibility corresponds to a different form of energy (\eg, kinetic). 

The \emph{law of conservation of plausibility} can then be stated as follows: ``Plausibility cannot be created or destroyed---it can only be shifted from one logical possibility to another by some \emph{reasoning process}". As with conservation of energy, conservation of plausibility only holds when we consider the entire system, \ie, $\bm{\true}$. We use the phrase ``reasoning process" informally to suggest that the process is not necessarily deductive.\footnote{For example, \citet{huang2019oltp} shows how a local Bayesian update process can be used to update the assignment of plausibility to logical possibilities.} Nevertheless, deduction is an essential reasoning process and so we turn our attention towards this kind of reasoning next.

\subsection{Deduction}
\label{subsec:plaus:mp}

From the perspective of language, deduction or \emph{modus ponens} (mp) is formulated as the inference rule $\mvform_1 \rightarrow \mvform_2, \mvform_1 \implies \mvform_2$ where $\implies$ transforms the left hand side to the right hand side. From the perspective of space, mp will be formulated as an \emph{operator}.

\paragraph{Modus ponens as an operator}
We use the observation that $\mvform_1 \rightarrow \mvform_2$ is syntactic sugar for $\lnot \mvform_1 \lor \mvform_2$ to express mp as an operator.
\begin{definition}
Define the operator $\textbf{MP}_{\mvform_1 \rightarrow \mvform_2}: \vecsp^{(R)} \rightarrow \vecsp^{(R)}$ as $\textbf{MP}_{\mvform_1 \rightarrow \mvform_2}(\bm{\mvform}) \eqdef (\bm{\mvform} \ominus \bm{\mvform_1}^\perp) \oplus \bm{\mvform_2}$ where:

\begin{tabular}{|l|rl|}\hline
$\land$  & $\mvform_1 \oplus \mvform_2 \eqdef$ & $\sum_{\mvconst^{(R)} \in \consts^{(R)}} \max(\inner{\hat{\mvconst}^{(R)}}{\bm{\mvform_1}}, \inner{\hat{\mvconst}^{(R)}}{\bm{\mvform_2}}) \hat{\mvconst}^{(R)}$ \\
$\lor$   & $\mvform_1 \ominus \mvform_2 \eqdef$ & $\sum_{\mvconst^{(R)} \in \consts^{(R)}} \min(\inner{\hat{\mvconst}^{(R)}}{\bm{\mvform_1}}, \inner{\hat{\mvconst}^{(R)}}{\bm{\mvform_2}}) \hat{\mvconst}^{(R)}$ \\
$\lnot$  & $\mvform^\perp \eqdef$ & $\sum_{\mvconst^{(R)} \notin \dnf(\mvform)} \inner{\hat{\mvconst}^{(R)}}{\bm{\mvform}} \hat{\mvconst}^{(R)}$ \\ \hline
\end{tabular}
\end{definition}
\noindent The additional vector $\bm{\mvform} \ominus \bm{\mvform_1}^\perp$ is $\vec{0}$ when $\bm{\mvform} = \bm{\mvform_1}$. Thus this definition of mp from the space perspective additionally keeps track of how well the antecedent to an implication matches with the current state. It is an over approximation of logical mp.

Analogous to how we can model the evolution of a physical system by considering the sequencing of operators that correspond to physical processes, we can model the evolution of a logical system by considering the effects of sequencing mp operators that correspond to deductive processes.

Observe that $\textbf{MP}_{\mvform_1 \rightarrow \mvform_2}$ is a continuous operator. We can interpret this as another instantiation of the locality of computation: Any sentence that can be deduced is a sequence of local and small steps. As it turns out, $\textbf{MP}_{\mvform_1 \rightarrow \mvform_2}$ is ``entropic" in a sense that we describe next.

\paragraph{Modus ponens is entropic}
We begin with the observation that mp increases plausibility of a logical system.
\begin{proposition}
$\cP(\bm{\mvform_1 \land \hdots \land \mvform_n}) \leq \cP(\textbf{MP}_{\mvform_{n-1} \rightarrow \mvform_n} \circ \dots \circ \textbf{MP}_{\mvform_{1} \rightarrow \mvform_2}(\bm{\mvform_1}))$.
\end{proposition}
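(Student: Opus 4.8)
The plan is to reduce the inequality to three elementary facts: $\cP$ is monotone with respect to the coordinate-wise order on $\vecsp^{(R)}$; the outermost operator in the composition forces its output above $\bm{\mvform_n}$ in that order; and $\bm{\mvform_n}$ in turn dominates $\bm{\mvform_1 \land \dots \land \mvform_n}$. First I would note that, since $p$ is a probability mass function, $\cP(\bm{\mvform}) = \sum_{\mvconst^{(R)} \in \consts^{(R)}} p(\mvconst^{(R)}) \inner{\hat{\mvconst}^{(R)}}{\bm{\mvform}}$ is a non-negative linear combination of the coordinates of $\bm{\mvform}$; hence whenever $\bm{v} \leq \bm{w}$ coordinate-wise we get $\cP(\bm{v}) \leq \cP(\bm{w})$. (The case $n = 1$ is vacuous, both sides being $\cP(\bm{\mvform_1})$, so assume $n \geq 2$.)

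Next I would observe that for \emph{every} $\bm{v} \in \vecsp^{(R)}$ one has $\textbf{MP}_{\mvform_{n-1} \rightarrow \mvform_n}(\bm{v}) \geq \bm{\mvform_n}$ coordinate-wise. Indeed, by definition this operator outputs $(\bm{v} \ominus \bm{\mvform_{n-1}}^\perp) \oplus \bm{\mvform_n}$, and the $\mvconst^{(R)}$-th coordinate of any $\oplus$ is the \emph{maximum} of the two corresponding coordinates, so the $\mvconst^{(R)}$-th coordinate of the output is at least $\inner{\hat{\mvconst}^{(R)}}{\bm{\mvform_n}}$. Taking $\bm{v}$ to be the intermediate vector $\textbf{MP}_{\mvform_{n-2} \rightarrow \mvform_{n-1}} \circ \dots \circ \textbf{MP}_{\mvform_1 \rightarrow \mvform_2}(\bm{\mvform_1})$ --- whatever it happens to be --- yields $\textbf{MP}_{\mvform_{n-1} \rightarrow \mvform_n} \circ \dots \circ \textbf{MP}_{\mvform_1 \rightarrow \mvform_2}(\bm{\mvform_1}) \geq \bm{\mvform_n}$ coordinate-wise.

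It then remains to compare $\bm{\mvform_1 \land \dots \land \mvform_n}$ with $\bm{\mvform_n}$. Writing each $\mvform_i$ in disjunctive normal form and distributing the conjunction, every cross term pairing two distinct constituents is unsatisfiable by mutual exclusivity (Proposition~\ref{prop:statesp:vecsp:constituent}), so the surviving constituents of $\mvform_1 \land \dots \land \mvform_n$ all lie in $\dnf(\mvform_n)$; hence $\dnf(\mvform_1 \land \dots \land \mvform_n) \subseteq \dnf(\mvform_n)$ and $|\dnf(\mvform_1 \land \dots \land \mvform_n)| \leq |\dnf(\mvform_n)|$. Since $\bm{\mvform}$ carries the uniform coefficient $|\dnf(\mvform)|$ on each constituent of $\dnf(\mvform)$ and $0$ elsewhere, a coordinate-by-coordinate check --- splitting constituents into those in $\dnf(\mvform_1 \land \dots \land \mvform_n)$, those in $\dnf(\mvform_n)$ but not the former, and those outside $\dnf(\mvform_n)$ --- gives $\bm{\mvform_1 \land \dots \land \mvform_n} \leq \bm{\mvform_n}$ coordinate-wise. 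Combining the three observations, $\cP(\bm{\mvform_1 \land \dots \land \mvform_n}) \leq \cP(\bm{\mvform_n}) \leq \cP\big(\textbf{MP}_{\mvform_{n-1} \rightarrow \mvform_n} \circ \dots \circ \textbf{MP}_{\mvform_1 \rightarrow \mvform_2}(\bm{\mvform_1})\big)$, which is the claim.

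I expect the only real friction to be bookkeeping rather than mathematics: everything must first be expanded to the common quantifier rank $R$ so that $\dnf(\cdot)$, its cardinality, and the coefficient convention all refer to rank-$R$ constituents, and one should fix a convention for how unsatisfiable constituents are distributed among the disjunctive normal forms (the excerpt already notes there are several ways to write $\false$). Neither issue disturbs the containment $\dnf(\mvform_1 \land \dots \land \mvform_n) \subseteq \dnf(\mvform_n)$ on which the argument rests. It is worth emphasizing that the proof uses nothing about the starting vector $\bm{\mvform_1}$ or about the $\ominus$ and $\cdot^\perp$ ingredients of $\textbf{MP}$ --- it relies only on the fact that the construction ends by taking a coordinate-wise maximum with $\bm{\mvform_n}$, which is exactly the sense in which a reasoning step can only move plausibility upward, matching the ``deduction is entropic'' reading.
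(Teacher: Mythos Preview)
Your proposal is correct and follows essentially the same route as the paper: both reduce the inequality to $\cP(\bm{\mvform_1 \land \dots \land \mvform_n}) \leq \cP(\bm{\mvform_n})$ via the containment $\dnf(\mvform_1 \land \dots \land \mvform_n) \subseteq \dnf(\mvform_n)$. The only difference is in how the composition of $\textbf{MP}$ operators is handled. The paper argues by induction on $n$ that the composition applied to $\bm{\mvform_1}$ is \emph{equal} to $\bm{\mvform_n}$ (using that $\bm{\mvform_k} \ominus \bm{\mvform_k}^\perp = \vec{0}$ at each stage), whereas you only use the outermost $\oplus \, \bm{\mvform_n}$ to obtain the coordinate-wise lower bound $\geq \bm{\mvform_n}$. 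Your variant is a mild simplification: it avoids tracking the intermediate vectors and does not rely on the $\ominus$ or $\cdot^\perp$ parts of the operator at all, at the cost of yielding only an inequality where the paper has equality.
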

\begin{proof}
We get that $\textbf{MP}_{\mvform_{n-1} \rightarrow \mvform_n}(\mvform_{n-1}) \circ \dots \circ \textbf{MP}_{\mvform_{1} \rightarrow \mvform_2}(\mvform_1) = \mvform_n$ by induction on $n$. It then follows that $\cP(\mvform_1 \land \dots \land \mvform_n) \leq \cP(\mvform_n)$ because $\dnf(\mvform_1 \land \dots \land \mvform_n) \subseteq \dnf(\mvform_n)$.
\end{proof}
\noindent Consequently, knowledge of a proof contains less plausibility than knowledge of a theorem statement.

Intuitively, the plausibility of a logical system increases because the number of possibilities we consider possible after an application of mp increases. Note that this fact does not contradict the law of conservation of plausibility. In particular, an application of mp changes the logical system under consideration which may no longer be the total system $\bm{\true}$.

We can rephrase the increase in plausibility as an increase in unnormalized entropy. Recall the unnormalized entropy of a vector $(a_1, \dots, a_n)$ is defined as $U(a_1, \dots, a_n) \eqdef \sum_{i=1}^n a_i (1 - \log(a_i))$. Let $\bm{\mvform}^\ddagger$ be the vector where $\inner{\hat{\mvconst}^{(R)}}{\bm{\mvform}} = 1$ whenever $\mvconst^{(R)} \in \dnf(\mvform)$ and $\inner{\hat{\mvconst}^{(R)}}{\bm{\mvform}} = \alpha$ otherwise where $\alpha > 0$.
\begin{proposition}
\[
U((\bm{\mvform_1 \land \mvform_2})^\ddagger) \leq U((\textbf{MP}_{\mvform_1 \rightarrow \mvform_2}(\bm{\mvform_1}))^\ddagger)
\]
\end{proposition}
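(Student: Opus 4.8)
The plan is to reduce the statement to a one-variable inequality about $g(t) \eqdef t(1-\log t)$ and then finish with elementary calculus. First I would observe that $\textbf{MP}_{\mvform_1 \rightarrow \mvform_2}(\bm{\mvform_1}) = \bm{\mvform_2}$: this is exactly the $n=2$ instance of the induction carried out in the proof of the preceding proposition (alternatively, one invokes the remark that $\bm{\mvform_1} \ominus \bm{\mvform_1}^\perp = \vec{0}$ together with $\vec{0} \oplus \bm{\mvform_2} = \bm{\mvform_2}$, which holds because the components of $\bm{\mvform_2}$ are non-negative). Consequently $(\textbf{MP}_{\mvform_1 \rightarrow \mvform_2}(\bm{\mvform_1}))^\ddagger = \bm{\mvform_2}^\ddagger$, and the claim becomes
\[
U((\bm{\mvform_1 \land \mvform_2})^\ddagger) \leq U(\bm{\mvform_2}^\ddagger) \,.
\]

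Next I would note that $U(\bm{\mvform}^\ddagger)$ depends on $\mvform$ only through $k \eqdef |\dnf(\mvform)|$: writing $N \eqdef |\consts^{(R)}|$, exactly $k$ components of $\bm{\mvform}^\ddagger$ equal $1$ and the other $N-k$ equal $\alpha$, so $U(\bm{\mvform}^\ddagger) = k\,g(1) + (N-k)\,g(\alpha)$ with $g(t) \eqdef t(1-\log t)$. Hence, setting $k_1 \eqdef |\dnf(\mvform_1 \land \mvform_2)|$ and $k_2 \eqdef |\dnf(\mvform_2)|$,
\[
U(\bm{\mvform_2}^\ddagger) - U((\bm{\mvform_1 \land \mvform_2})^\ddagger) = (k_2 - k_1)\bigl(g(1) - g(\alpha)\bigr) \,.
\]
By the proof of the preceding proposition, $\dnf(\mvform_1 \land \mvform_2) \subseteq \dnf(\mvform_2)$, so $k_2 - k_1 \geq 0$, and it remains only to show $g(1) - g(\alpha) \geq 0$, i.e. $g(\alpha) \leq 1$ for all $\alpha > 0$.

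Finally I would establish $g(\alpha) \leq 1$ by computing $g'(t) = -\log t$, so $g$ increases on $(0,1]$ and decreases on $[1,\infty)$ and thus has global maximum $g(1) = 1$; this is the only place $\alpha > 0$ is used (to make $\log\alpha$ meaningful). Combining the three steps yields the inequality, with equality precisely when $\alpha = 1$ or $\dnf(\mvform_1\land\mvform_2) = \dnf(\mvform_2)$. I do not anticipate a genuine obstacle: once the identity $\textbf{MP}_{\mvform_1 \rightarrow \mvform_2}(\bm{\mvform_1}) = \bm{\mvform_2}$ and the inclusion on distributive normal forms are in hand, the rest is a short computation together with the elementary bound $t(1-\log t) \leq 1$ — which is exactly the ``entropic'' content of the claim.
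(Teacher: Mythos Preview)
Your proposal is correct and follows essentially the same route as the paper: identify $\textbf{MP}_{\mvform_1 \rightarrow \mvform_2}(\bm{\mvform_1})$ with $\bm{\mvform_2}$, invoke the inclusion $\dnf(\mvform_1 \land \mvform_2) \subseteq \dnf(\mvform_2)$, and conclude by computation. The paper simply writes ``routine calculation'' where you spell out the one-variable analysis of $g(t)=t(1-\log t)$; your added detail (in particular the observation that the inequality needs $g(\alpha)\leq g(1)$, hence genuinely uses $\alpha>0$) is a welcome clarification rather than a different argument.
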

\begin{proof}
Observe that $\dnf(\mvform_1) \subseteq \dnf((\mvform \land \lnot \mvform_1) \lor \mvform_2)$. The result thus follows by routine calculation.
\end{proof}
\noindent Because deduction increases (unnormalized) entropy, a deductive process may spread out information over too many possibilities as to render it impossible to find the proof of a conjecture.

\section{Theories}
\label{sec:theory}

\emph{Where is a first-order theory located?}
As we might expect, a first-order theory $T$ forms a subspace. We can visualize the subset of sentences $T^{(r)}$ that have quantifier rank $r$ as a hyperplane sitting in $\vecsp^{(r)}$ (Figure~\ref{fig:theory:subsp}).
\begin{proposition}
Let $T$ be some theory. Then $T^{(r)}$ forms a subspace of $\vecsp^{(r)}$ for every $r$.
\end{proposition}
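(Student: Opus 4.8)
The plan is to verify the subspace criterion directly: $T^{(r)}$, interpreted as a set of vectors in $\vecsp^{(r)}$, must contain $\vec{0}$ and be closed under vector addition and scalar multiplication. The key conceptual point is that, under the identification of a sentence $\mvform^{(r)}$ with the vector supported on $\dnf(\mvform^{(r)})$, membership in the theory $T$ corresponds to a condition on the \emph{support} (the set of constituents with nonzero coefficient) rather than on the particular coefficient values. So I would first pin down exactly which vectors of $\vecsp^{(r)}$ count as elements of $T^{(r)}$, and then show that collection is closed under the vector space operations.

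First I would establish the correspondence between ``$\mvform^{(r)} \in T$'' and a property of $\dnf(\mvform^{(r)})$. Since every rank $r$ sentence is equivalent to the disjunction of the constituents in its distributive normal form (Proposition~\ref{prop:statesp:vecsp:constituent}), and constituents are mutually exclusive, whether $T \vdash \mvform^{(r)}$ is determined by the set $\dnf(\mvform^{(r)}) \subseteq \consts^{(r)}$ together with $T$. Concretely, I would argue that there is a distinguished subset $S_T^{(r)} \subseteq \consts^{(r)}$ — the constituents consistent with $T$ — such that $T \vdash \mvform^{(r)}$ iff $S_T^{(r)} \subseteq \dnf(\mvform^{(r)})$ (this is the relativized form of the ``logically valid iff dnf contains all satisfiable constituents'' proposition). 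Thus the vectors representing theorems of $T$ are exactly those whose support contains $S_T^{(r)}$; additionally, following the paper's convention that a sentence may be represented with \emph{any} nonzero coefficients on its dnf constituents, the relevant set of vectors is $\{\, v \in \vecsp^{(r)} : v_{\mvconst^{(r)}} \neq 0 \text{ for all } \mvconst^{(r)} \in S_T^{(r)} \,\}$ — or, if we also allow disjoining with further constituents, any $v$ whose support is a superset of $S_T^{(r)}$, which is the same set.

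Second, I would check the three subspace axioms against this description. The zero vector: if $T$ is inconsistent then $S_T^{(r)} = \consts^{(r)}$ and every sentence, including $\bm{\false} = \vec 0$, is a theorem, so $\vec 0 \in T^{(r)}$; if $T$ is consistent, the appropriate reading is that $T^{(r)}$ is the span of the vectors representing its theorems, for which $\vec0$ is automatic — this is the point I would flag as slightly delicate, since literally the set $\{v : \supp(v) \supseteq S_T^{(r)}\}$ is not itself a subspace when $S_T^{(r)} \neq \emptyset$, so the proposition should be read as ``$T^{(r)}$ spans a subspace'' or the intended object is the linear span. Closure under scalar multiplication is immediate since scaling by $a \neq 0$ preserves the support, and $a = 0$ gives $\vec 0$. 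Closure under addition: given $v, w$ with $\supp(v), \supp(w) \supseteq S_T^{(r)}$, component-wise addition can only cancel a coordinate where both are nonzero, i.e.\ outside... no — cancellation can occur inside $S_T^{(r)}$, which is exactly why one passes to the span. Taking the linear span resolves this, and then one notes the span is contained in $\{v : \supp(v) \subseteq \bigcup \text{(supports of theorems)}\}$, still a subspace.

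The main obstacle, then, is not a hard calculation but a matter of correctly interpreting the statement: the naive set of ``vectors equal to some theorem of $T$'' is closed under scalar multiplication but not cleanly under addition because of coordinate cancellation, so the honest statement is that the theorems of rank $r$ \emph{span} a subspace $T^{(r)}$ of $\vecsp^{(r)}$, and the proof amounts to observing that this span is well-defined and, via the $S_T^{(r)}$ characterization, identifying it as $\{v \in \vecsp^{(r)} : v_{\mvconst^{(r)}} = 0 \text{ for all } \mvconst^{(r)} \notin S_T^{(r)}\}$ when $T$ is consistent (equivalently, everything when $T$ is inconsistent) — which is manifestly a linear subspace. I would present the argument in that order: characterize theorems by their supports, define $T^{(r)}$ as the span, identify the span with a coordinate-vanishing subspace, and conclude.
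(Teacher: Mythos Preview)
Your characterization of theorems is correct: $T \vdash \mvform^{(r)}$ iff $\dnf(\mvform^{(r)}) \supseteq S_T^{(r)}$, so every theorem-vector has support \emph{containing} $S_T^{(r)}$. But your final identification reverses the containment: you claim the span of the theorem-vectors equals $\{v : v_{\mvconst} = 0 \text{ for all } \mvconst \notin S_T^{(r)}\}$, i.e.\ vectors with support \emph{contained in} $S_T^{(r)}$. These are different objects. The span of the theorem-vectors is strictly larger: from $\bm{\true}$ (support $\consts^{(r)}$) and $\bm{\lnot\mvconst}$ (support $\consts^{(r)}\setminus\{\mvconst\}$, a theorem whenever $\mvconst \notin S_T^{(r)}$) you recover every $\hat{\mvconst}$ with $\mvconst \notin S_T^{(r)}$, so the span already contains all coordinates \emph{outside} $S_T^{(r)}$. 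Under the unit-coefficient convention the span has dimension $|\consts^{(r)}| - |S_T^{(r)}| + 1$; allowing arbitrary nonzero coefficients it is the whole space. Neither matches the paper's subspace of dimension $D(r) = |S_T^{(r)}|$.

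The paper avoids all of this. Its one-line proof simply takes as the subspace $T^{(r)}$ the span of those basis constituents $\hat{\mvconst}^{(r)}$ that are satisfiable relative to $T$, and notes that a span is a subspace. The proposition is therefore closer to a definition than a theorem; no analysis of theorem-vectors, supports, or closure under addition is required. Your instinct that ``the literal set of vectors representing theorems'' is not itself a subspace is correct, but the fix is not to pass to its linear span --- it is to change the object: the intended $T^{(r)}$ is spanned by the $T$-satisfiable constituents, not by the theorems of $T$.
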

\begin{proof}
Observe that the subset of satisfiable constituents that satisfy the theory $T$ forms a basis for the subspace.
\end{proof}

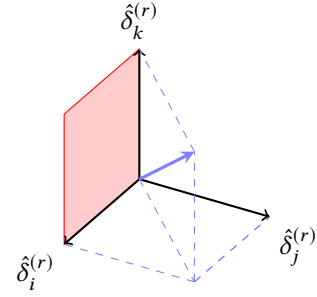
\begin{figure}[t]
    \centering
    \tdplotsetmaincoords{60}{120} 
\begin{tikzpicture} [scale=2, tdplot_main_coords, axis/.style={->,thick}, 
vector/.style={-stealth,blue!50,very thick}, 
vector guide/.style={dashed,blue!50}]

\coordinate (O) at (0,0,0);


\pgfmathsetmacro{\ax}{1}
\pgfmathsetmacro{\ay}{1}
\pgfmathsetmacro{\az}{1}

\coordinate (P) at (\ax,\ay,\az);

\filldraw[
        draw=red,%
        fill=red!20,%
    ]          (0,0,0)
            -- (1,0,0)
            -- (1,0,1)
            -- (0,0,1)
            -- cycle;

\draw[axis] (0,0,0) -- (1,0,0) node[anchor=north east]{$\hat{\mvconst}^{(r)}_i$};
\draw[axis] (0,0,0) -- (0,1,0) node[anchor=north west]{$\hat{\mvconst}^{(r)}_j$};
\draw[axis] (0,0,0) -- (0,0,1) node[anchor=south]{$\hat{\mvconst}^{(r)}_k$};

\draw[vector] (O) -- (P);

\draw[vector guide]         (O) -- (\ax,\ay,0);
\draw[vector guide] (\ax,\ay,0) -- (P);
\draw[vector guide]         (P) -- (0,0,\az);
\draw[vector guide] (\ax,\ay,0) -- (0,\ay,0);
\draw[vector guide] (\ax,\ay,0) -- (0,\ay,0);
\draw[vector guide] (\ax,\ay,0) -- (\ax,0,0);

\end{tikzpicture}
    \caption{A theory $T$ (shaded in red) that is spanned by $\set{\hat{\mvconst}^{(r)}_i, \hat{\mvconst}^{(r)}_j}$. Thus the sentence denoting the vector (blue) is not part of the theory when $\hat{\mvconst}^{(r)}_j$ is not satisfiable.}
    \label{fig:theory:subsp}
\end{figure}

\begin{definition}
The \emph{dimension} of a theory is a function $D: \N \rightarrow \N$ such that $D(r)$ gives the number of satisfiable constituents at rank $r$.
\end{definition}
\noindent The dimension of a theory has can be interpreted as a measure of its complexity. We introduce a complete theory and an undecidable theory next to give examples of the lower and upper limits of a theory's complexity.

\begin{example}[Unbounded dense linear order]
The theory of unbounded dense linear orders (UDLO) expresses the ordering of the real line using the binary relation $<$.\footnote{For reference, the axioms for UDLO are: $(\forall x) (\forall y) x < y \rightarrow \lnot (x = y \lor y < x)$ (antisymmetry), $(\forall x) (\forall y) (\forall z) x < y \land y < z \rightarrow x < z$ (transitivity), $(\forall x) (\forall y) x < y \lor y < x \lor x = y$ (trichotomy), $(\forall x) (\forall y) x < y \rightarrow (\exists z) x < z \land y < z$ (dense), and $(\forall x) (\exists y) (\exists z) y < x \land x < z$ (unbounded).}
\end{example}

UDLO is a complete theory so it is intuitively ``simple".
\begin{proposition}
A complete theory has dimension $D(r) = 1$ the constant one function.
\end{proposition}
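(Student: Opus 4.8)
The plan is to connect the syntactic notion of completeness to the semantic/geometric notion of dimension via the fact that, at each fixed quantifier rank $r$, a complete theory pins down exactly one satisfiable constituent. Recall that $D(r)$ counts the satisfiable rank $r$ constituents that are consistent with $T$, i.e. those $\mvconst^{(r)}$ with $T \nvDash \lnot \mvconst^{(r)}$. So the claim $D(r) = 1$ is equivalent to: for a complete theory $T$, there is exactly one rank $r$ constituent consistent with $T$.

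First I would establish $D(r) \geq 1$. Since $T$ is a theory it is consistent, so it has a model $\cM$. By Proposition~\ref{prop:statesp:vecsp:constituent} (distributive normal form applied to $\true$, or equivalently the fact that the satisfiable constituents partition the logical possibilities and $\cM$ must satisfy one of them), $\cM$ satisfies some rank $r$ constituent $\mvconst^{(r)}$, which is therefore satisfiable and consistent with $T$; hence $D(r) \geq 1$. Next, for $D(r) \leq 1$, suppose toward contradiction that two distinct rank $r$ constituents $\mvconst^{(r)}_i$ and $\mvconst^{(r)}_j$ are each consistent with $T$. Then $T \nvDash \lnot \mvconst^{(r)}_i$ and $T \nvDash \lnot \mvconst^{(r)}_j$. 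Here is where completeness enters: for any sentence $\psi$, either $T \vDash \psi$ or $T \vDash \lnot \psi$; applying this to $\psi = \mvconst^{(r)}_i$ (a genuine sentence, since we are in the no-free-variable case) and using $T \nvDash \lnot \mvconst^{(r)}_i$ forces $T \vDash \mvconst^{(r)}_i$, and symmetrically $T \vDash \mvconst^{(r)}_j$. But by the Mutual exclusion clause of Proposition~\ref{prop:statesp:vecsp:constituent}, $\vDash \mvconst^{(r)}_i \rightarrow \lnot \mvconst^{(r)}_j$, so $T \vDash \lnot \mvconst^{(r)}_j$, contradicting consistency of $T$ (a model of $T$ would satisfy both $\mvconst^{(r)}_j$ and $\lnot \mvconst^{(r)}_j$). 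Hence at most one rank $r$ constituent is consistent with $T$, and combining the two bounds gives $D(r) = 1$ for every $r$.

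The main subtlety — and the only place one must be a little careful — is reconciling "consistent with $T$" with the way $D$ is phrased ("number of satisfiable constituents at rank $r$"): the intended reading is clearly the number of satisfiable constituents \emph{that are not refuted by $T$}, i.e. that describe a possible logical world for models of $T$; with the naive reading ($D(r) = |\{\text{satisfiable rank } r \text{ constituents}\}|$, independent of $T$) the statement is false, so I would note explicitly that $D$ is taken relative to $T$, matching the subspace picture of the preceding proposition (the basis of the subspace $T^{(r)}$ consists exactly of the satisfiable constituents satisfying $T$). Everything else is routine: the existence half is just "a consistent theory has a model and that model picks out a constituent," and the uniqueness half is the one-line observation that completeness plus mutual exclusivity cannot tolerate two surviving constituents. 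No obstacle of real depth arises; the proof is essentially the constituent-level restatement of the classical fact that a complete theory is, up to $r$-equivalence, determined at every quantifier rank.
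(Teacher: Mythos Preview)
Your proof is correct and follows essentially the same approach as the paper: existence of at least one satisfiable constituent from consistency, and uniqueness from completeness plus mutual exclusivity. The only cosmetic difference is that the paper dispatches uniqueness in one stroke by noting $\lnot \mvconst^{(r)} \equiv \lOr_{\mvconst \in \consts^{(r)} \setminus \{\mvconst^{(r)}\}} \mvconst$ is refuted by $T$, whereas you argue pairwise via mutual exclusion; these are the same argument, and your remark about reading $D$ relative to $T$ is a useful clarification the paper leaves implicit.
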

\begin{proof}
Let $\mvconst^{(r)}$ be some satisfiable constituent (which exists because the theory is consistent). Then $\lnot \mvconst^{(r)} \equiv \lOr_{\mvconst \in \consts^{(r)} \backslash \set{\mvconst^{(r)}}} \mvconst$ is not satisfiable by completeness of the theory. Thus every other constituent at rank $r$ is not satisfiable.
\end{proof}

To give a flavor of what a constituent looks like, the only satisfiable rank $1$ and $2$ constituents for UDLO are
\begin{enumerate}
    \item $((\exists^E x_1) \, x_1 \nless x_1) \land (\forall^E x_1) \, x_1 \nless x_1$; and
    \item $((\exists^E x_1) \, x_1 \nless x_1 \land E[x_1] \land U[x_1]) \land (\forall^E x_1) \, x_1 \nless x_1 \land E[x_1] \land U[x_1]$
    where $E[x_1] \eqdef (\exists^E x_2) \, x_2 < x_1 \land (\exists^E x_2) \, x_1 < x_2 \land (\exists^E x_2) \, x_2 \nless x_2$ and $U[x_1] \eqdef (\forall^E x_2) \, x_2 < x_1 \lor x_1 < x_2 \lor x_2 \nless x_2$
\end{enumerate}
where $x \nless y$ is syntactic sugar for $\lnot (x < y)$. In words, the only satisfiable rank $2$ constituent describes that either $x_1 < x_2$ or $x_2 < x_1$, \ie, the only two possible linear orderings involving two elements.

\begin{example}[Groups]
The theory of groups expresses groups from abstract algebra and involves one binary operator $\cdot$ expressing group multiplication.\footnote{For reference, the axioms of groups are: $(\exists e) (\forall x) (x \cdot e = x \land e \cdot x = x) \land ((\exists y) x \cdot y = e \land y \cdot x = e)$ (identity) and $(\forall x) (\forall y) (\forall z) (x \cdot y) \cdot z = x \cdot (y \cdot z)$ (transitivity).
Recall that we do not consider function symbols in this paper. Thus the axioms should replace the binary operator $\cdot$ with a ternary predicate $M$ and additional formula asserting that $M$ is a function.}
\end{example}

Groups is not decidable so it is intuitively ``complex".
\begin{proposition}
An undecidable theory has uncomputable dimension.
\end{proposition}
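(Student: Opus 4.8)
The plan is to reduce the computability of the dimension function $D$ to the decidability of the theory. Recall from the definition that $D(r)$ counts the satisfiable rank $r$ constituents. The key link is Hintikka's completeness property for trivial inconsistency, already invoked in the footnotes of Section~\ref{subsec:statesp2:ms}: a constituent is unsatisfiable iff at some finite quantifier rank all of its expansions are trivially inconsistent, and trivial inconsistency is a decidable syntactic condition. Hence satisfiability of constituents is co-semidecidable (semidecidable for unsatisfiability), and $D$ is semidecidable from above in the sense that one can enumerate, with more and more witnesses of inconsistency, a non-increasing sequence of upper bounds converging to $D(r)$. If $D$ were computable, this would already be enough to get decidability, but it is cleaner to argue the contrapositive directly.

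First I would establish the reduction from provability in $T$ to the value of $D$. Fix a rank $r$ large enough that the sentence $\mvform$ in question (and its negation) has quantifier rank at most $r$; every sentence of lower rank is equivalent to one of rank $r$. Write $\dnf(\mvform^{(r)})$ and $\dnf(\lnot\mvform^{(r)})$ for the corresponding sets of rank $r$ constituents; these partition $\consts^{(r)}$. Now $T \vdash \mvform$ iff every satisfiable constituent consistent with $T$ lies in $\dnf(\mvform^{(r)})$, i.e. iff no constituent in $\dnf(\lnot\mvform^{(r)})$ is jointly satisfiable with the (finitely many, fixed rank) axioms of $T$. Using the standard reduction, being jointly satisfiable with the axioms amounts to a constituent of the combined theory $T \cup \{\mvform\}$ or $T \cup \{\lnot\mvform\}$ being satisfiable, and the count of such constituents is exactly the dimension function of that (finitely axiomatized, and hence effectively presentable) theory. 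So a $D$-oracle for the relevant rank decides whether $T \vdash \mvform$.

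Next I would assemble the pieces into a contradiction. If $D$ (for the theory of groups, or for each finite extension of it by a sentence) were computable, then by the reduction above the provability relation of groups would be decidable; but the theory of groups is undecidable (as stated), so $D$ cannot be computable. To make the argument uniform over the finite extensions one only needs that group theory plus a single extra axiom is again finitely axiomatized and that its dimension function is obtained by a fixed computable transformation of the data, which it is. The main obstacle, and the step deserving the most care, is the bookkeeping that converts ``$T \vdash \mvform$'' into a clean statement about counting satisfiable constituents of a single effectively given theory at a single computable rank: one must handle the rank bump caused by relativizing/combining with the axioms (the footnote remark that $\mvconst^{(r)}[\bar y_k]$ unfolds to rank $r+k$ is relevant here), and one must be careful that the reduction is genuinely many-one computable rather than merely a Turing reduction using finitely much oracle information — though for ruling out computability of $D$, even the weaker Turing reduction suffices. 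Everything else is the routine observation that trivial inconsistency gives semidecidability of unsatisfiability, which we are free to cite from the earlier footnotes.
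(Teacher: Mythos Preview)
Your core idea matches the paper's one-sentence proof exactly: if $D$ were computable, then knowing $D(r)$ together with the semidecidability of unsatisfiability (via trivial inconsistency) lets you enumerate refutations until exactly $|\consts^{(r)}| - D(r)$ constituents are eliminated, after which the remaining ones are certified $T$-satisfiable; this decides $T \vdash \mvform^{(r)}$ by checking whether $\dnf(\mvform^{(r)})$ contains all of them. Your first paragraph already says this.

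The detour in your second and third paragraphs through the dimension functions of the \emph{extensions} $T \cup \{\mvform\}$ and $T \cup \{\lnot\mvform\}$ is unnecessary and introduces a small gap. The proposition asserts only that $D$ for $T$ is uncomputable, yet your reduction as written queries the dimension of each finite extension. That stronger hypothesis does follow from computability of $D$ for $T$ (once you know which rank-$r$ constituents are $T$-satisfiable, the dimension of $T \cup \{\psi\}$ at rank $r$ is just the size of the intersection with $\dnf(\psi)$), but you do not make that link; you simply assume it. The cleaner route, which the paper takes implicitly and which you already sketched before the detour, avoids extensions entirely: decide satisfiability of constituents in $T$ directly, then read off $T$-validity of any sentence from its distributive normal form. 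The bookkeeping about rank bumps and many-one versus Turing reductions that you flag is not needed for this argument.
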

\begin{proof}
A decision procedure for validity of constituents would decide validity of first-order sentences.
\end{proof}

Although the dimension is not computable, we do have a (trivial) upper-bound for the dimension of any theory given by the number of constituents.
\begin{proposition}
The dimension $U(r) = |\consts^{(r)}|$ is an upper bound on the dimension of $D$ of any theory.
\end{proposition}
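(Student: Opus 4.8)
The plan is to observe that this bound is immediate from unwinding the definitions of the two quantities involved. Recall that the dimension of a theory $T$ is the function $D: \N \to \N$ where $D(r)$ is the number of \emph{satisfiable} rank $r$ constituents (equivalently, by the earlier proposition, the number of satisfiable constituents that satisfy $T$, which is at most the number of satisfiable constituents outright). The quantity $U(r)$ is simply $|\consts^{(r)}|$, the total number of rank $r$ constituents, satisfiable or not.

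First I would fix an arbitrary theory $T$ and an arbitrary rank $r$. The set of satisfiable rank $r$ constituents that are consistent with $T$ is, by definition, a subset of $\consts^{(r)}$. Hence its cardinality $D(r)$ is at most $|\consts^{(r)}| = U(r)$. Since $r$ was arbitrary, $D(r) \leq U(r)$ for all $r$, which is exactly the claim that $U$ is an upper bound on $D$. (One may additionally note, using the recursive count of $|\consts^{(r)}|$ from Section~\ref{subsec:statesp:structure}, that $U$ is itself a computable function, in contrast to $D$, which by the preceding proposition need not be; but this is not needed for the statement.)

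There is essentially no obstacle here: the only thing to be careful about is matching the two definitions precisely — that $D(r)$ genuinely counts a subset of the objects counted by $U(r)$ — and the bound then follows purely from monotonicity of cardinality under inclusion. The content of the statement is really that the (in general uncomputable) complexity measure $D$ is always sandwiched below a concrete combinatorial quantity, making $U$ a uniform, theory-independent ceiling.

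\begin{proof}
By definition, the dimension $D(r)$ of a theory $T$ is the number of rank $r$ constituents that are satisfiable (and consistent with $T$), and every such constituent lies in $\consts^{(r)}$. Therefore $D(r) \leq |\consts^{(r)}| = U(r)$ for every $r$.
\end{proof}
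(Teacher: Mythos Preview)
Your proposal is correct and matches the paper's own argument, which simply reads ``Straightforward.'' You have spelled out exactly the intended reasoning: $D(r)$ counts a subset of $\consts^{(r)}$, so $D(r) \leq |\consts^{(r)}| = U(r)$.
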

\begin{proof}
Straightforward.
\end{proof}
\noindent Thus there is a super-exponential upper bound on the dimension of a theory as a function of quantifier rank $r$.

Compared to UDLO which only has one satisfiable constituent at any depth, there are many satisfiable constituents at any depth for groups. Intuitively, these constituents correspond to classifying the kinds of groups that are expressible in a first-order language. The most granular kind of group simply satisfies the axioms, a finer kind of group differentiates abelian groups from non-abelian groups, and so on.

\section{Proofs}
\label{sec:proofs}

\emph{What does orthogonal decomposition reveal about proofs?}
We revisit the notion of proof from the space perspective in this section and explore what one based on orthogonal decomposition would look like (Section~\ref{subsec:proofs:comp}). We will see that each component can be lower bounded by a counter-example (\ie, theorem proving) or upper bounded by an \emph{example} (\ie, model checking) (Section~\ref{subsec:proofs:approx}). This provides an explanation of the utility of examples to the process of proving. Finally, we highlight an application to assigning probabilities to first-order sentences that reflects logical uncertainty and provide a discussion of conjecturing (Section~\ref{subsec:proofs:beliefs}).

\subsection{Component Proofs}
\label{subsec:proofs:comp}

The idea of decomposing a signal into its orthogonal components is one that has numerous applications in engineering and the sciences. It is a powerful method because it enables one to analyze each component separately and combine the results to obtain the behavior of the original signal. For example, we can understand the motion of an object in a 2D plane by separately analyzing its motion along the x-axis and y-axis and superimpose the results.

Decomposition is similar to the usage of \emph{compositionality} in language where we give the meaning of a sentence be composing the meanings of its phrases. The difference between the two is that the former implicitly assumes the pieces to be composed are semantically independent whereas the latter does not. Thus we will be looking at proofs under an additional constraint in this section.

In the case of proving, the assumption of semantic independence means that each piece of the proof corresponds to a distinct logical possibility, and consequently, can be tested for satisfiability independently of one another. We formalize this idea now.
\begin{definition}
A (rank $r$) \emph{component proof system} is a tuple $(\cB, \tau, \chi)$ where
\begin{itemize}
    \item $\cB$ is a choice of orthogonal basis for $\vecsp^{(r)}$;
    \item $\tau: \cB \rightarrow \two$ is a \emph{valuation} such that $\tau(\psi) = 0$ means that $\psi$ is unsatisfiable and $\tau(\psi) = 1$ means that $\psi$ is satisfiable; and
    \item $\chi: \cL \times \cB \rightarrow \two$ is a \emph{component classifier} that indicates whether a sentence contains the corresponding basis element as a component or not. In symbols, $\chi(\mvform^{(r)}, \psi) = 1$ if $\vDash \psi \rightarrow \mvform^{(r)}$ and $\chi(\mvform^{(r)}, \psi) = 0$ otherwise. The component function is given by
    \[
    \chi(\mvform^{(r)}, \mvconst) = \mvconst \in \dnf(\mvform^{(r)})
    \]
    when $\cB$ is the standard basis.
\end{itemize}
\end{definition}
\begin{definition}
A \emph{component proof} of $\mvform$ is a sentence
\[
\lAnd_{\psi \in \cB} \tau(\psi) \land \chi(\mvform^{(r)}, \psi)
\]
where $(\cB, \tau, \chi)$ is a component proof system. Each $\tau(\psi) \land \chi(\mvform^{(r)}, \psi)$ is a \emph{component}.
\end{definition}

\emph{A priori}, neither the valuation $\tau$ nor the component classifier $\chi$ are required to reflect logical validity. We say that a valuation $\tau$ is \emph{sound} if $\tau(\psi) \neq 1$ whenever $\psi$ is not satisfiable and a valuation $\tau$ is \emph{complete} if $\tau(\psi) = 1$ whenever $\psi$ is satisfiable. Similarly, we say that a component classifier is \emph{sound} if $\chi(\mvform^{(r)}, \psi) \neq 1$ whenever $\nvDash \psi \rightarrow \mvform$ and a component classifier is \emph{complete} if $\chi(\mvform^{(r)}, \psi) = 1$ whenever $\vDash \psi \rightarrow \mvform$.

Because neither the valuation nor component classifier are required to reflect logical validity, a proof constructed in a component proof system can exhibit error. In particular, a component proof can be incorrect on some components and correct on others, leading to a ``partially correct" proof. 
\begin{definition}
\begin{itemize}
    \item The \emph{error} of the valuation $\hat{\tau}$ with respect to the ground truth $\tau$ is 
    \[
    E_V \eqdef \sum_{\psi \in \cB} |\hat{\tau}(\psi) - \tau(\psi)| \,.
    \]
    \item The \emph{error} of the component classifier $\hat{\chi}$ with respect to the ground truth $\chi$ on the sentence $\mvform^{(r)}$ is
    \[
    E_{\mvform^{(r)}} \eqdef \sum_{\psi \in \cB} |\hat{\chi}(\mvform^{(r)}, \psi) - \chi(\mvform^{(r)}, \psi)| \,.
    \]
\end{itemize}
The \emph{total error} of a component proof for $\mvform^{(r)}$ is $E_V + E_{\mvform^{(r)}}$.
\end{definition}
\noindent We say that a component proof is \emph{partially correct} if it has non-zero total error.

The error is related to the soundness and completeness of the valuation and component classifiers in the obvious way.
\begin{proposition}
The total error $E_V + E_{\mvform^{(r)}} = 0$ for every sentence $\mvform^{(r)}$ iff both the valuation and component classifier are sound and complete.
\end{proposition}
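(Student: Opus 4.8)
The plan is to unwind the definitions. Since $E_V$ and $E_{\mvform^{(r)}}$ are finite sums of terms of the form $|\hat{\tau}(\psi) - \tau(\psi)|$ and $|\hat{\chi}(\mvform^{(r)}, \psi) - \chi(\mvform^{(r)}, \psi)|$, each of which is nonnegative, the sum $E_V + E_{\mvform^{(r)}}$ vanishes iff every summand vanishes. So first I would record that $E_V + E_{\mvform^{(r)}} = 0$ is equivalent to the conjunction of $\hat{\tau}(\psi) = \tau(\psi)$ for all $\psi \in \cB$ and $\hat{\chi}(\mvform^{(r)}, \psi) = \chi(\mvform^{(r)}, \psi)$ for all $\psi \in \cB$; this step uses nothing beyond nonnegativity of absolute values.

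The second step is to observe that for a $\two$-valued function, agreeing pointwise with a fixed reference function is the same as being simultaneously sound and complete with respect to it. Concretely, $\hat{\tau}(\psi) = \tau(\psi)$ for all $\psi \in \cB$ means $\hat{\tau}(\psi) = 1$ exactly when $\tau(\psi) = 1$; since $\tau(\psi) = 1$ iff $\psi$ is satisfiable, this says $\hat{\tau}(\psi) = 1$ whenever $\psi$ is satisfiable (completeness) and $\hat{\tau}(\psi) \neq 1$ whenever $\psi$ is unsatisfiable (soundness), and conversely those two conditions together force $\hat{\tau} = \tau$ because $\two$ has only two elements. The identical argument applied to the function $\psi \mapsto \chi(\mvform^{(r)}, \psi)$, using $\chi(\mvform^{(r)}, \psi) = 1$ iff $\vDash \psi \rightarrow \mvform^{(r)}$, shows that $\hat{\chi}(\mvform^{(r)}, \cdot) = \chi(\mvform^{(r)}, \cdot)$ iff $\hat{\chi}$ is sound and complete on the ``row'' indexed by $\mvform^{(r)}$.

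Finally I would combine the quantifiers. Quantifying the equivalence obtained from the first two steps over all rank $r$ sentences: $E_V + E_{\mvform^{(r)}} = 0$ for every $\mvform^{(r)}$ holds iff $\hat{\tau} = \tau$ on $\cB$ and $\hat{\chi}(\mvform^{(r)}, \cdot) = \chi(\mvform^{(r)}, \cdot)$ for every $\mvform^{(r)}$, i.e.\ $\hat{\chi} = \chi$ on all of $\cL \times \cB$, which by the second step is exactly soundness and completeness of the valuation together with soundness and completeness of the component classifier.

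The argument is a definitional unfolding, so I do not expect a genuine obstacle; the one point that needs care is the bookkeeping of the universal quantifier over $\mvform^{(r)}$, since for a single fixed sentence $E_{\mvform^{(r)}} = 0$ certifies $\hat{\chi}$ only on one row, whereas soundness and completeness of the component classifier is a statement about the whole table — it is precisely the outer ``for every sentence'' that bridges this gap. A secondary point worth stating explicitly is why soundness plus completeness is equivalent to, and not merely implied by, exact agreement with the ground truth; this is where two-valuedness of $\two$ is used.
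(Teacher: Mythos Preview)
Your proposal is correct and matches the paper's approach: the paper records the proof simply as ``Straightforward,'' and what you have written is precisely the definitional unfolding that justifies that one word.
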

\begin{proof}
Straightforward.
\end{proof}

The notion of partial correctness is an essential difference between a standard proof and a component proof. As a reminder, a standard proof is a sequence of sentences where \emph{every} sentence follows from the previous one by deduction. One invalid proof step thus invalidates the entire proof---a standard proof is brittle. In contrast, a component proof is somewhat robust because each component is semantically independent of the other. The robustness to error leads us to consider how approximation can be used to construct a component proof.

\subsection{Approximation}
\label{subsec:proofs:approx}

\begin{figure}[t]
    \centering
    \includegraphics{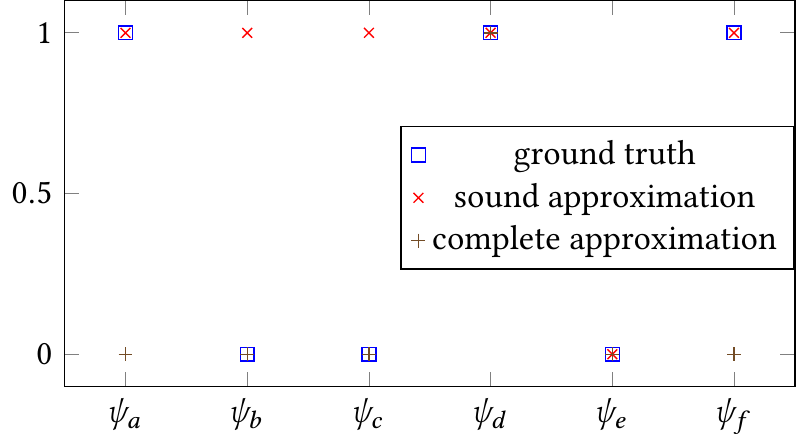}
    \caption{An illustration of approximating a component in a component proof. The idea is to determine each basis element's ground truth value by either upper or lower bounding its value. In this example, we know the ground truth value for the basis elements $\psi_d$ and $\psi_e$ because the upper and lower bounds coincide. For basis elements $\psi_a$, $\psi_b$, $\psi_c$, and $\psi_f$ we do not yet know their ground truth value.}
    \label{fig:proofs:sandwich:sandwich}
\end{figure}

The inspiration for an approximation procedure for constructing a component proof comes from sandwiching a function in analysis: use a simpler family of functions to simultaneously lower bound and upper bound the unwieldy function of interest. In our setting, we can use the family of characteristic functions $\set{\chi_{\mvconst^{(r)}}(\cdot) \ST \mvconst^{(r)} \in \consts^{(r)}}$.

Figure~\ref{fig:proofs:sandwich:sandwich} illustrates the idea behind the approximation procedure for the components of a component proof. Let $F_{\mvform}(\psi) \eqdef \tau(\psi) \land \chi(\mvform, \psi)$.
\begin{description}
    \item[Lower bound] We lower bound $F_{\mvform}(\psi)$ by attempting to show that $\lnot \psi$ is logically valid. We can use any first-order proof system (\eg, an automated theorem prover) to show that $\lnot \psi$ is derivable. This corresponds to showing that this specific component has a counter-example. 
    \item[Upper bound] We upper bound $F_{\mvform}(\psi)$ by (1) ``choosing" a model $\cM$ and then (2) attempting to show  that $\cM \vDash \psi$ (\eg, using a model checker). This corresponds to showing that this specific component has an example.
\end{description}

The approximation procedure reveals an asymmetry in the efficacy of lower bounding (\ie, proving) versus upper bounding (\ie, model checking). In particular, any disjunction of unsatisfiable basis elements will be unsatisfiable as well. Consequently, by refuting $\lOr_{\psi \in B} \psi$ for some subset $B$ of unsatisfiable basis elements, we can eliminate $|B|$ basis elements in one go.

In contrast, the model checking approach can only show that one component is satisfiable at a time due to the mutual exclusivity of the basis. Moreover, it requires the user to devise a concrete example. Consequently, from the perspective of a component proof, it should be more effective to put more effort into proving as opposed to model checking.

\subsection{Logical Uncertainty}
\label{subsec:proofs:beliefs}

\renewcommand{\kbldelim}{(}
\renewcommand{\kbrdelim}{)}

Intuitively, we would expect proving and assigning \emph{logical uncertainty}---assessing how probable a conjecture is---to be related tasks. For instance, we might believe that a conjecture is more likely to be true than false even when we have no clear path for proving it.

In this section, we show how an approximation procedure for constructing component proofs gives rise to a probability distribution on first-order sentences such that logically equivalent sentences are not necessarily assigned the same probability. This provides another take on the failure of logical omniscience that complements the one given in Section~\ref{sec:plaus} where we saw that deduction was an entropic operation. We begin by introducing a representation of a theorem proving agent's knowledge.

\paragraph{Representation}
Let $\operatorname{refute}: \cL \rightarrow \two$ be a function that soundly approximates the deduction relation, \ie, $\operatorname{refute}(\mvform) = \true$ implies that $\nvDash \mvform$. We can think of $\operatorname{refute}(\mvform)$ as a resource-bound prover. Similarly, let $\operatorname{check}_\cM \cL \rightarrow \two$ be a function that soundly approximates a model checker, \ie, $\operatorname{check}_\cM(\mvform) = \true$ implies that $\cM \vDash \mvform$. As before, we can think of $\operatorname{check}_\cM(\mvform)$ as a resource-bound model checker.

Suppose we have sentences $\mvform^F_1, \dots, \mvform^F_N$ that are known to be false, knowledge of models $\cM_1, \dots, \cM_M$, and sentences $\mvform_1, \dots, \mvform_L$ whose satisfiability is not known. We can interpret $\lnot \mvform^F_1, \dots, \lnot \mvform^F_N$ as theorems, $\cM_1, \dots, \cM_M$ as common examples, $\mvform_1, \dots, \mvform_L$ as conjectures that are specific to the agent. (We will see why we include conjectures momentarily.) Let $\mvform_0 \eqdef \lAnd_{i = 1}^N \mvform^F_i$. Let $N_{ij} \eqdef 1$ when $\operatorname{refute}(\mvform_i \rightarrow \mvform_j) = \true$ and
\[
N_{ij} \eqdef \frac{|\set{\cM_k \ST 0 \leq k \leq M, \operatorname{check}_{\cM_k}(\mvform_i \rightarrow \mvform_j) = \true}|}{M + \alpha}
\]
otherwise where $0 \leq i \leq L$, $0 \leq j \leq L$, and $\alpha > 0$.
\begin{definition}
The matrix
\[
  K \eqdef \kbordermatrix{
    & \mvform_0 & \mvform_1 & \mvform_2 & \dots & \mvform_L \\
    \mvform_0 & 1 & 1 & 1 & \dots & 1 \\
    \mvform_1 & N_{10} & 1 & N_{12} & \dots & N_{1L} \\
    \mvform_2 & N_{20} & N_{21} & 1 & \dots & N_{2L} \\
    \vdots & \vdots & \vdots & \vdots & \ddots & \vdots \\
    \mvform_L & N_{L0} & N_{L1} & N_{L2} & \dots & 1 \\
  }
\]
represents the \emph{knowledge} of an agent $\cA$.
\end{definition}

Each row and column of the matrix is associated with the sentences $F, \mvform_1, \dots, \mvform_L$. Each entry of the matrix can be thought of as the probability that the row sentence $\mvform_r$ implies the column sentence $\mvform_c$, \ie, $N_{rc} \approx \Pr(\vDash \mvform_r \rightarrow \mvform_c)$. Thus the first row of the matrix is all $1$ (because false statements imply any statement) and the first column of the matrix gives the probability that the given sentence is false (\ie, $N_{r0} \approx \Pr(\nvDash \mvform_r)$). Note that $N_{ij} \neq N_{jk}$ in general (unless we have an iff), and consequently, we should not think of the matrix $K$ as a covariance matrix. Nevertheless, there is a sense in which we can think of each entry as a measure of logical overlap, \ie, logical correlation.

\paragraph{Conjectures}
Consider the subspace of $\vecsp^{(r)}$ spanned by $B \eqdef \set{F, \mvform_1, \dots, \mvform_L}$ where $r$ is the maximum rank of all the sentences involved. We call $B$ a \emph{conjecturing basis}.

In general, $B$ will not be (1) linearly independent nor (2) contain $\true$. In the setting of first-order logic, the first is not an issue for the purpose of determining validity. The second, however, is problematic because it means that there are certain sentences whose validity an agent will not be able to determine unless it considers alternative conjectures. We can thus view conjecturing from the perspective of linear algebra as finding a useful conjecturing basis.

An application of principal component analysis (pca) to $K$ gives an ordering on the agent's conjectures with respect to $K$'s eigenvalues. Because the matrix $K$ is constructed with respect to the satisfiability of an agent's models, we can interpret the sentences with the lowest eigenvalues as those that have no known examples to the sentences with the highest eigenvalues as those with no known counterexamples. Thus there is an exploration versus exploitation tradeoff in determining which conjectures to invest time in. An agent that wishes to explore should invest in the sentences with the smallest eigenvalues while an agent that wishes to exploit should invest in the sentences with the largest eigenvalues.

\paragraph{Beliefs and probabilities}
We return to the problem of defining probabilities on sentences, starting with a proxy for them.
\begin{definition}
An agent with knowledge $K$ has \emph{belief}
\[
\belief(\mvform_i) \eqdef 1 - N_{i0} \,.
\]
\end{definition}
\noindent Note that an agent's beliefs do not represent probabilities because they are not normalized.

Although an agent's beliefs are not probabilities, they can be converted into them. 
\begin{definition}
Define a valuation on basic opens by on induction on quantifier rank $r$ as $\nu(B_{\mvconst^{(0)}}) \eqdef 1$ and 
\[
\nu(B_{\mvconst^{(r+1)}}) \eqdef \begin{cases}
0 & \mbox{$\nu(B_{\mvconst^{(r)}}) = 0$} \\
\frac{\belief(\mvconst^{(r+1)})}{\sum_{\mvconst^{(r+1)} \in \child(\mvconst^{(r)})} \belief(\mvconst^{(r+1)})} \nu(B_{\mvconst^{(r)}}) & \mbox{otherwise}
\end{cases}
\]
where $\mvconst^{(r)}$ is parent of $\mvconst^{(r+1)}$ and $\child(\mvconst^{(r)})$ denotes $\mvconst^{(r)}$'s children nodes.
\end{definition}
\begin{proposition}
$\nu$ defines unique probability measure on $\Psi$.
\end{proposition}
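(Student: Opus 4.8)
The plan is to view $\nu$ as a valuation on the countable clopen basis $\set{B_{\mvconst^{(r)}} \ST \mvconst^{(r)} \in \consts^{(r)}, r \in \N}$ of the constituent refinement space $(\Psi, \cO(\Psi))$ and to extend it, in the standard way, to a Borel probability measure, with uniqueness then coming from a $\pi$--$\lambda$ argument. Concretely I would establish: (i) $\nu$ is a well-defined, nonnegative, finitely additive set function on the algebra $\mathcal{R}$ generated by the basic opens, with $\nu(\Psi)=1$; (ii) $\nu$ is countably additive on $\mathcal{R}$; (iii) its Carath\'eodory extension is the unique probability measure on the Borel $\sigma$-algebra of $\Psi$ agreeing with $\nu$ on basic opens.

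For (i): the recursion defining $\nu$ is on quantifier rank, hence well-founded, and $\child(\mvconst^{(r)})$ is read off the tree $T$ (Proposition~\ref{prop:statesp2:ms:prefix2}); nonnegativity is clear since $\belief \ge 0$ and each factor $\belief(\mvconst^{(r+1)})/\sum_{\mvconst'\in\child(\mvconst^{(r)})}\belief(\mvconst')$ lies in $[0,1]$ (with the convention that this factor is $0$ when the denominator vanishes), and $\nu(B_{\mvconst^{(0)}})=\nu(\Psi)=1$ because $\consts^{(0)}=\set{\true}$ and $B_{\true}=\Psi$. The key identity is one-step additivity,
\[
\sum_{\mvconst^{(r+1)}\in\child(\mvconst^{(r)})} \nu(B_{\mvconst^{(r+1)}}) \;=\; \nu(B_{\mvconst^{(r)}}),
\]
which is immediate when $\nu(B_{\mvconst^{(r)}})=0$ and otherwise reduces to $\nu(B_{\mvconst^{(r)}})\cdot\big(\sum_{\mvconst^{(r+1)}}\belief(\mvconst^{(r+1)})\big)\big/\big(\sum_{\mvconst'}\belief(\mvconst')\big)=\nu(B_{\mvconst^{(r)}})$. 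Because $T$ is a tree, two basic opens are either disjoint or nested, so $\mathcal{R}$ consists exactly of the finite disjoint unions of basic opens; refining each basic open repeatedly into its children shows that any two such representations of a member of $\mathcal{R}$ have a common refinement on which $\nu$ agrees by iterated one-step additivity, so $\nu$ is a well-defined, finitely additive function on $\mathcal{R}$.

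For (ii) and (iii): the refinement tree is finitely branching ($|\consts^{(r+1)}|<\infty$) and leafless (Proposition~\ref{prop:statesp2:ms:prefix}, item~1), so $\Psi$ is homeomorphic to a closed subset of the compact product $\prod_{r}\consts^{(r)}$ and is therefore compact, with every basic open clopen. Hence if $A_1\supseteq A_2\supseteq\cdots$ lie in $\mathcal{R}$ with $\bigcap_n A_n=\emptyset$, compactness forces $A_N=\emptyset$ for some $N$, so $\nu(A_n)\to 0$; continuity from above at $\emptyset$ together with finite additivity yields countable additivity of $\nu$ on $\mathcal{R}$. The Hahn--Kolmogorov (Carath\'eodory) extension theorem then produces a measure on $\sigma(\mathcal{R})$, which is the Borel $\sigma$-algebra of $\Psi$ since $\mathcal{R}$ contains the countable clopen basis generating $\cO(\Psi)$; as $\nu(\Psi)=1$, this extension is a probability measure. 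Finally, the basic opens form a $\pi$-system generating the Borel $\sigma$-algebra, so by Dynkin's $\pi$--$\lambda$ theorem any probability measure agreeing with $\nu$ on the basic opens coincides with the extension, which gives uniqueness.

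I expect the only real work to be in (i): pinning down the one-step additivity identity, handling the degenerate bookkeeping (a parent of $\nu$-mass zero, or a level all of whose children have belief $0$ — precisely the case that forces the ``assign $0$'' convention and rules out division by zero), and using the tree structure of $T$ to reduce consistency of $\nu$ on $\mathcal{R}$ to repeated application of that identity. Steps (ii) and (iii) are standard facts about valuations on clopen bases of compact zero-dimensional spaces, already anticipated by the footnote accompanying the definition of $\hilbsp$.
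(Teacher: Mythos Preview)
Your proposal is correct and follows essentially the same approach as the paper: verify the one-step additivity identity $\nu(B_{\mvconst^{(r)}}) = \sum_{\mvconst^{(r+1)} \in \child(\mvconst^{(r)})} \nu(B_{\mvconst^{(r+1)}})$ (by induction on $r$), then extend to a measure in the standard way. The paper's proof is a two-line sketch of exactly this; you have simply spelled out what ``the usual way'' means (compactness for countable additivity, Carath\'eodory for existence, $\pi$--$\lambda$ for uniqueness).
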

\begin{proof}
We see that $\nu$ defines a valuation on the basic clopens of $\Psi$ such that $\nu(B_{\mvconst^{(r)}}) = \sum_{\mvconst^{(r+1)} \in \child(\mvconst^{(r+1)})} \nu(B_{\mvconst^{(r+1)}})$ for any $\mvconst^{(r)}$ by induction on $r$. Thus we can extend this to a measure in the usual way.
\end{proof}

Because an agent only has a finite amount of knowledge, it follows that an agent's probabilities are not logically omniscient.
\begin{proposition}[Failure of logical omniscience]
There are agents that are not logically omniscient.
\end{proposition}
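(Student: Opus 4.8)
The goal is only to exhibit \emph{one} agent whose induced probabilities fail logical omniscience, reading ``logically omniscient'' in the sense relevant to this section: logically equivalent sentences are assigned the same probability (so, in particular, every logical validity gets probability $1$). The plan is to use the ``maximally ignorant'' agent --- the one with no known models and whose resource-bounded prover $\operatorname{refute}$ and model checkers $\operatorname{check}_{\cM_k}$ never certify anything. Both the constant prover and the constant model checkers soundly approximate the deduction relation and model checking (vacuously), so this is a legitimate agent. For it, every matrix entry not forced to be $1$ is a model count equal to $0$, hence $N_{i0} = 0$ for all $i$ and $\belief(\mvform) = 1$ for every sentence $\mvform$. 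Since every belief is strictly positive, the recursion defining $\nu$ multiplies each node's measure by a strictly positive fraction of its parent's, so $\nu(B_{\mvconst^{(r)}}) > 0$ for every constituent $\mvconst^{(r)}$.

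Next I would invoke the existence of an unsatisfiable rank-$r$ constituent $\mvconst^{(r)}_\bot$ at some rank $r$ (one may take $r = 1$: a rank-$1$ constituent that positively asserts the existence of an individual $z$ with $\lnot(z = z)$ is unsatisfiable; more generally the discussion of trivially inconsistent constituents guarantees them). By the previous paragraph $\nu(B_{\mvconst^{(r)}_\bot}) > 0$, so the agent places strictly positive probability on a logically impossible description. Now let $\mvform^{(r)}$ be the disjunction of all \emph{satisfiable} rank-$r$ constituents --- a logical validity whose distributive normal form is exactly that set, so $\mvconst^{(r)}_\bot \notin \dnf(\mvform^{(r)})$ --- and set $\mvform' \eqdef \mvform^{(r)} \lor \mvconst^{(r)}_\bot$. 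Both $\mvform^{(r)}$ and $\mvform'$ are valid, hence logically equivalent; but, identifying the probability of a sentence with the $\nu$-measure of the union of the basic opens indexed by its distributive normal form, $\Pr(\mvform') = \Pr(\mvform^{(r)}) + \nu(B_{\mvconst^{(r)}_\bot}) > \Pr(\mvform^{(r)})$. Thus this agent assigns different probabilities to logically equivalent sentences, so it is not logically omniscient, as claimed.

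The step I expect to be the real work is making that last clause precise. The text defines $\nu$ only on the clopen basis and merely asserts that it extends to a measure on $\Psi$; it does not spell out the probability of an arbitrary sentence. I would make the convention $\Pr(\mvform^{(r)}) \eqdef \nu\big(\bigcup_{\mvconst^{(r)} \in \dnf(\mvform^{(r)})} B_{\mvconst^{(r)}}\big)$ explicit, check it is well defined independently of the rank at which the sentence is written (using that expansion refines the partition into basic opens together with the parent/child additivity of $\nu$), handle the degenerate indexing in the definition of $K$ when the agent has no models, and confirm the convention is consistent with the belief-based reading sketched earlier. The remaining ingredients --- vacuous soundness of the trivial prover and model checkers, existence of an inconsistent constituent, disjointness of distinct basic opens, and $\dnf(\mvform_1 \lor \mvform_2) = \dnf(\mvform_1) \cup \dnf(\mvform_2)$ --- are immediate from the definitions already in the paper.
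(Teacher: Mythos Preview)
Your proposal is correct and follows essentially the same approach as the paper: both exhibit a maximally ignorant agent (the paper writes $K=(1)$ with no known models; you spell out the trivial $\operatorname{refute}$ and $\operatorname{check}$) and observe that such an agent assigns strictly positive $\nu$-measure to every constituent, including unsatisfiable ones. The paper's argument is a one-liner that stops at ``positive probability on unsatisfiable constituents,'' whereas you additionally unwind what this means for sentences by explicitly producing two logically equivalent validities with different probabilities and flagging the missing convention $\Pr(\mvform^{(r)}) = \nu\big(\bigcup_{\mvconst \in \dnf(\mvform^{(r)})} B_\mvconst\big)$; this extra care is justified, since the paper never makes that convention explicit, but it does not change the underlying idea.
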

\begin{proof}
Consider the agent with matrix $K = (1)$ and knows no models. Then it assigns positive probability to unsatisfiable constituents so it cannot be logically omniscient.
\end{proof}

\section{Size}
\label{sec:size}

\emph{How much space does a first-order theory occupy?}
In this section, we explore two views of a ``models as boundary" principle that highlight the difficulty of approximating a theory. In the first, we view distinct models as spanning the theory and quantify the accuracy of approximation as a function of the number of distinct models (Section~\ref{subsec:size:span}). In the second, we view the valuation for some theory's basis as a Boolean hypercube (Section~\ref{subsec:size:iso}). A familiar (edge) isoperimetric principle then says that a theory with small ``volume" (\ie, `variance) can still have large ``perimeter" (\ie, many models).

\subsection{Approximate Spanning}
\label{subsec:size:span}

The first view we have of ``models as boundary" is that a theory's models span it. Intuitively, knowledge of distinct models suffices to describe the theory because each model describes exactly one basis element. It then follows that the number of distinct models we require as a function of quantifier rank is exactly the dimension of the theory.

How many models do we need to approximately span a theory? As it turns out, it depends on how we represent what we know to be true. Let $\bm{\true_U} \eqdef \sum_{\mvconst^{(r)} \in \consts^{(r)}} \frac{1}{|\consts^{(r)}|}\hat{\mvconst^{(r)}}$ be a representation of truth for an \emph{uninformed} agent. Let $\bm{\true_O} = \sum_{\hat{\mvconst}^{(r)} \in \consts^{(r)}|_+} \frac{1}{D(r)}\hat{\mvconst}^{(r)}$ where $\consts^{(r)}|_+$ are the satisfiable constituents of rank $r$ be a representation of truth for an \emph{omniscient} agent.
\begin{proposition}
\begin{enumerate}
    \item There exist $k$ sentences $\mvform^{(r)}_1, \dots, \mvform^{(r)}_k$ such that
    \[
    \lVert \bm{\true_U} - \frac{1}{k}\sum_{j=1}^k \bm{\mvform}^{(r)}_j \rVert_2 \leq \sqrt{\frac{2}{k|\consts^{(r)}|}} \,.
    \]
    \item There exist $k$ sentences $\mvform^{(r)}_1, \dots, \mvform^{(r)}_k$ such that
    \[
    \lVert \bm{\true_O} - \frac{1}{k}\sum_{j=1}^k \bm{\mvform}^{(r)}_j \rVert_2 \leq \sqrt{\frac{2}{k D(r)}} \,.
    \]
\end{enumerate}
\end{proposition}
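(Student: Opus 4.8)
\noindent The plan is to invoke the empirical method of Maurey (a.k.a.\ approximate Carath\'eodory): realize each of $\bm{\true_U}$ and $\bm{\true_O}$ as a \emph{convex combination} of legitimate rank-$r$ sentence-vectors of small $\ell_2$-norm, and then use the standard fact that the average of $k$ independent samples from such a combination is within the claimed distance of the target in expectation, so that some particular $k$ sentences realize it.

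For (1), put $N \eqdef |\consts^{(r)}|$ and $s \eqdef \lceil N/2\rceil$. For each $s$-element subset $S \subseteq \consts^{(r)}$, let $\mvform_S$ denote the rank-$r$ sentence $\lOr_{\mvconst^{(r)} \in S}\mvconst^{(r)}$; it is already in distributive normal form, so $\dnf(\mvform_S) = S$, and (exploiting the freedom to choose any nonzero coefficients in the representation of a sentence) I take its vector to be $\bm{\mvform}_S \eqdef \tfrac1s\sum_{\mvconst^{(r)} \in S}\hat{\mvconst}^{(r)}$. Each such vector has $\lVert\bm{\mvform}_S\rVert_2^2 = |S|/s^2 = 1/s$. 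If $S$ is drawn uniformly among $s$-subsets then $\Pr[\mvconst^{(r)} \in S] = s/N$ by symmetry, so $\mathbb{E}_S[\bm{\mvform}_S] = \tfrac1N\sum_{\mvconst^{(r)} \in \consts^{(r)}}\hat{\mvconst}^{(r)} = \bm{\true_U}$; thus $\bm{\true_U}$ is literally a convex combination of the $\bm{\mvform}_S$, while $\lVert\bm{\true_U}\rVert_2^2 = 1/N$.

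Now take $\mvform^{(r)}_1, \dots, \mvform^{(r)}_k$ to be $\mvform_{S_1}, \dots, \mvform_{S_k}$ for i.i.d.\ uniform $s$-subsets $S_1, \dots, S_k$, and write $\bm{\mvform}_j$ for the vector of $\mvform^{(r)}_j$. Expanding $\lVert\bm{\true_U} - \tfrac1k\sum_j\bm{\mvform}_j\rVert_2^2 = \lVert\tfrac1k\sum_j(\bm{\mvform}_j - \bm{\true_U})\rVert_2^2$ and taking expectations, the cross terms with $j \neq l$ vanish because the summands are independent with mean $\vec{0}$, leaving $\mathbb{E}\,\lVert\bm{\true_U} - \tfrac1k\sum_j\bm{\mvform}_j\rVert_2^2 = \tfrac1k\big(\mathbb{E}_S\lVert\bm{\mvform}_S\rVert_2^2 - \lVert\bm{\true_U}\rVert_2^2\big) = \tfrac1k\big(\tfrac1s - \tfrac1N\big) \le \tfrac2{kN}$, using $s \ge N/2$. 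Since the expected squared distance is at most $2/(kN)$, some realization of the $k$ sentences achieves squared distance $\le 2/(kN)$; taking square roots proves (1). Part (2) is the identical argument run over $\consts^{(r)}|_+$ (of size $D(r)$) with $s \eqdef \lceil D(r)/2\rceil$: now every $\mvform_S$ is a disjunction of satisfiable constituents, hence satisfiable, the sampled average is centered at $\bm{\true_O}$, and the same computation gives expected squared distance $\tfrac1k\big(\tfrac1s - \tfrac1{D(r)}\big) \le \tfrac2{kD(r)}$.

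The delicate point — and essentially the only nontrivial idea — is the choice of averaging family. Averaging $k$ single constituents $\hat{\mvconst}^{(r)}$ (the obvious attempt) yields only expected squared distance $\tfrac1k\big(1 - \tfrac1N\big)$, i.e.\ an $O(1/\sqrt k)$ bound with no dependence on $N$; the extra $1/\sqrt N$ (resp.\ $1/\sqrt{D(r)}$) comes precisely from spreading each sentence's plausibility \emph{uniformly} across roughly half the (satisfiable) constituents, which the non-standard uniform coefficients allowed by the sentence representation make available, and which cuts the per-sample second moment from $1$ down to about $2/N$. The remaining ingredients — vanishing of the cross terms under independence, and the passage from an expectation bound to an existence statement — are routine.
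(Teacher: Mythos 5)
Your proof is correct and follows essentially the same route as the paper: the paper's proof exhibits $\bm{\true_U}$ (resp.\ $\bm{\true_O}$) inside a convex set of sentence-vectors of diameter $O(1/\sqrt{|\consts^{(r)}|})$ (resp.\ $O(1/\sqrt{D(r)})$) and cites the approximate Carath\'eodory theorem, while you simply inline the standard Maurey sampling proof of that theorem with an explicit family (uniform coefficients on half-size disjunctions of constituents). The only difference is that your version is self-contained and makes explicit the coefficient normalization and the choice of spanning sentences that the paper leaves implicit.
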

\begin{proof}
\begin{enumerate}
    \item Let $\set{\mvconst^{(r)}_1, \dots, \mvconst^{(r)}_N} = \consts^{(r)}$. The set $\set{\sum_{i=1}^N a_i \hat{\mvconst}^{(r)}_i \ST a_1, \dots a_N \in [0, 1/|\consts^{(r)}|]}$ forms a convex subset of $\mathbb{V}^{(r)}$ with diameter $\sqrt{2/|\consts^{(r)}|}$ that contains $\bm{\true_U}$.\footnote{The diameter of a convex set $S$ is $\sup \set{\lVert x - y\rVert_2 \ST x, y \in S}$.} The result follows by an application of the approximate Catheodory theorem.
    \item Similar to item $1$.
\end{enumerate}
\end{proof}

At first blush, it appears that a theory can be ``accurately" spanned by a constant number of models independent of its dimension. Nevertheless, we should keep in mind that logical validity is brittle no matter how it is represented---being off by any $\epsilon > 0$ is enough to lose it. 

Comparing the results for $\bm{\true_U}$ versus $\bm{\true_O}$, we see that it is easier to approximate $\bm{\true_U}$ than it is to approximate $\bm{\true_O}$ (\ie, smaller $k$) because $D(r) \leq |\consts^{(r)}|$. Put another way, it is easier to approximate the beliefs of an uninformed agent compared to the beliefs of an omniscient agent. 

Our ability to approximate is different if we know models of a theory.
\begin{proposition}
Let $\cM_1 \vDash \mvform^{(r)}_1, \dots, \cM_k \vDash \mvform^{(r)}_k$ be $k$ distinct models of a theory that satisfy some $\mvform^{(r)}$. Then 
\begin{align*}
\lVert \bm{\true_U} - \bm{\mvform}^{(r)} \rVert_2 & \leq \sqrt{\frac{|\consts^{(r)}|^2 - 2N|\consts^{(r)}| + N^2}{N|\consts^{(r)}|^2}} \\
\lVert \bm{\true_O} - \bm{\mvform}^{(r)} \rVert_2 & \leq \sqrt{\frac{D(r)(D(r) - 2k) + kN}{ND(r)^2}}
\end{align*}
where $\bm{\mvform}^{(r)} \eqdef \sum_{\mvconst^{(r)} \in \dnf(\mvform^{(r)})} \frac{1}{N}\mvconst^{(r)}$ and $N = |\dnf(\mvconst^{(r)})|$.
\end{proposition}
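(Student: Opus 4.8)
The plan is to reduce both inequalities to an elementary coordinate computation in the standard basis $\consts^{(r)}$, using only the identity $\lVert v \rVert_2^2 = \sum_{\mvconst^{(r)} \in \consts^{(r)}} v_{\mvconst^{(r)}}^2$ coming from the definition of the inner product. Each of the three vectors in play is supported on a simple set with a single nonzero coordinate value: $\bm{\true_U}$ has coordinate $1/|\consts^{(r)}|$ on every constituent; $\bm{\true_O}$ has coordinate $1/D(r)$ on each satisfiable constituent and $0$ elsewhere; and $\bm{\mvform}^{(r)}$ has coordinate $1/N$ on each of the $N$ constituents of $\dnf(\mvform^{(r)})$ and $0$ elsewhere. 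So the entire task is to partition $\consts^{(r)}$ into the blocks on which the two vectors being compared take a fixed pair of coordinate values, count the size of each block, and sum (value difference)$^2 \times$ (block size).

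For the first inequality I would split $\consts^{(r)}$ into $\dnf(\mvform^{(r)})$, of size $N$, and its complement, of size $|\consts^{(r)}| - N$. On the first block the coordinate of $\bm{\true_U} - \bm{\mvform}^{(r)}$ is $\tfrac{1}{|\consts^{(r)}|} - \tfrac1N$ and on the second it is $\tfrac{1}{|\consts^{(r)}|}$; adding the squared contributions gives a closed form in $N$ and $|\consts^{(r)}|$ alone, from which the stated right-hand side follows after clearing denominators and simplifying. This is the easier block because $\bm{\true_U}$ is constant over all constituents, so satisfiability (and hence the models) plays no role here; the $k$ witnessing models enter only implicitly, through the fact that $\dnf(\mvform^{(r)})$ must contain at least $k$ satisfiable constituents for $\bm{\mvform}^{(r)}$ to be a well-formed nonzero approximation at all.

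For the second inequality the relevant partition is finer: split $\consts^{(r)}$ into four blocks according to (i) satisfiable or not and (ii) in $\dnf(\mvform^{(r)})$ or not. Each $\cM_j$ realizes a satisfiable constituent of $\dnf(\mvform^{(r)})$, so the ``satisfiable $\cap$ $\dnf$'' block has size $m$ with $k \le m \le \min(N, D(r))$. On that block the coordinate of $\bm{\true_O} - \bm{\mvform}^{(r)}$ is $\tfrac{1}{D(r)} - \tfrac1N$; on ``satisfiable $\setminus$ $\dnf$'' (size $D(r) - m$) it is $\tfrac{1}{D(r)}$; on ``unsatisfiable $\cap$ $\dnf$'' (size $N - m$) it is $-\tfrac1N$; and on the last block it is $0$. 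Summing the four contributions yields a rational expression that is monotone in $m$, so the worst case over the admissible range of $m$ is attained at an endpoint; substituting that endpoint and simplifying should produce $\sqrt{(D(r)(D(r)-2k)+kN)/(ND(r)^2)}$.

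The main obstacle I anticipate is purely bookkeeping: determining which endpoint of $m$ is extremal (this is where $k \le D(r) \le |\consts^{(r)}|$ and $k \le N$ get used in the right direction), handling the edge case where $\dnf(\mvform^{(r)})$ contains unsatisfiable constituents, and pushing the resulting rational functions into exactly the stated form. A secondary point to settle before the final write-up is the reading of ``$k$ distinct models'': since two distinct models of the theory may realize the same rank-$r$ constituent, the argument genuinely needs $k$ distinct \emph{witnessed} constituents, so I would first observe that without loss of generality the $\cM_j$ can be taken to realize pairwise-distinct constituents (replacing $k$ by the number of distinct ones only sharpens the estimate), and then run the coordinate count above.
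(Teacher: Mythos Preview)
Your approach is exactly the paper's: the proof there reads in full ``By direct calculation,'' and your proposal simply spells out that calculation coordinate-by-coordinate in the standard basis. The block partition of $\consts^{(r)}$ and the monotonicity-in-$m$ step are the natural way to carry it out.
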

\begin{proof}
By direct calculation.
\end{proof}
\noindent Let us unpack the assertion. Observe that when an agent has uninformative beliefs, the error depends on how well $\mvform^{(r)}$ approximates the $k$ known models. When an agent has omniscient beliefs, we have that the information content of a theory is approximately encoded by its models, \ie, its examples. Intuitively, this makes sense as the more distinct kinds of models that one knows, the better the approximation to the ground truth.
\begin{example} 
For a complete theory, we have $D(r) = 1$. Thus the above result says that we have maximal error if we know no models and zero error if we know a model. For instance, we can fully describe a complete theory like UDLO with either the models $(\R, <)$ or $(\Q, <)$.
\end{example}
\begin{example}
When $D(r)$ is unknown as in the case of an undecidable theory, then the quality of approximation depends on the theory. If $D(r) \gg k$, then the error term is essentially $1/\sqrt{N}$. Another way to put this is that if one knows only a few examples of ``complex" theory, then $\mvform^{(r)}$ should be chosen so that leaves as many possibilities as open in order to obtain a good approximation.
\end{example}

\subsection{Hypercube Isoperimetric Principle}
\label{subsec:size:iso}

The second view of ``models as boundary" takes inspiration from the Boolean hypercube and Boolean Fourier analysis (\eg, see~\cite{o2014analysis}). The basic idea is that we view the valuation for some theory's basis as a Boolean hypercube with vertices labeled according to whether they are satisfiable or not. We make this concrete now.

\paragraph{Hypercube}
A \emph{labeled Boolean hypercube} of dimension $N$ is a tuple $(V, E, \ell)$ where $V \eqdef \two^N$ is a set of vertices of the length $N$ bit-strings, $E \eqdef \set{ \set{x, y} \ST d(x, y) = 1}$ is a set of edge containing those bit-string with Hamming distance one (\ie, $d(x, y) = |\set{n \ST x_n \neq y_n}|$), and $\ell: V \rightarrow \set{-1, 1}$ is a \emph{labeling} that assigns to each vertex a Boolean value. We write $x^{\lnot i}$ to indicate that we flip the $i$-th bit in the bit-string $x$. A \emph{boundary edge} along the $i$-th dimension is then any $x$ such that $\ell(x) \neq \ell(x^{\lnot i})$. The \emph{influence} at coordinate $i$ of $f: \two^N \rightarrow \set{-1, 1}$ is
\[
\textbf{Inf}_i[f] \eqdef \sum_{x \in V} \frac{1}{2^N} \bm{[}f(x) \neq f(x^{\lnot i})\bm{]}
\]
where the notation $\bm{[}\cdot\bm{]}$ indicates an Iverson bracket gives the fraction of boundary edges along the $i$-th dimension. The \emph{total influence} $\textbf{Inf}[f] \eqdef \sum_{i=1}^N \textbf{Inf}_i[f]$ then gives the total fraction of boundary edges.

Our next task is to associate each bit-string with a constituent. Recall that a constituent has the form
\[
\mvconst^{(r)}_s \eqdef \lAnd_{\mvaconst[z] \in \aconsts^{(r-1)}[z]} (\pm)^{s(\mvaconst[z]) } (\exists^E z) \mvaconst[z]
\]
where $s: \aconsts^{(r-1)}[z] \rightarrow \two$. Thus a constituent can be identified with a bit-string of length $|\aconsts^{(r-1)}[z]|$ corresponding to $s$ which forms a vertex of a Boolean hypercube of dimension $|\aconsts^{(r-1)}[z]|$. A valuation on $\tau: \consts^{(r)} \rightarrow \set{-1, 1}$ where $\tau(\mvconst^{(r)}) = -1$ if $\mvconst^{(r)}$ is satisfiable and $\tau(\mvconst^{(r)}) = 1$ otherwise is then a labeling for the \emph{constituent hypercube}.\footnote{We follow the convention that $-1 \mapsto 1$ and $1 \mapsto 0$ when converting between $\set{-1, 1}$ and $\two$ (\eg, see~\cite{o2014analysis}).}

A boundary edge along the $i$-th dimension on the constituent hypercube corresponds to one where changing whether or not an individual that satisfies the $i$-th constituent of rank $r-1$ exists or not flips the valuation. In other words, we have a boundary edge if we cross between those constituents with models and those without models. As we will see next through a familiar edge isoperimetric principle, the number of crossings can be quite large even if the theory has small variance (\ie, small ``volume").

\paragraph{Edge isoperimetric principle}
We begin by calculating the variance of a theory's valuation $\tau$.
\begin{proposition}
\[
\textbf{Var}[\tau] = 1 - \left( \frac{|\consts^{(r)}| - 2D(r)}{|\consts^{(r)}|} \right)^2 \,.
\]
\end{proposition}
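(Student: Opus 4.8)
The plan is to compute $\textbf{Var}[\tau]$ directly from its definition using the uniform distribution on the vertices of the constituent hypercube, taking advantage of the fact that $\tau$ is $\{-1,1\}$-valued so that its second moment is trivial.

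First I would fix notation: write $M \eqdef |\consts^{(r)}|$ for the number of rank $r$ constituents and $D \eqdef D(r)$ for the number of satisfiable ones. Since each bit-string $s \colon \aconsts^{(r-1)}[z] \to \two$ is identified with a constituent $\mvconst^{(r)}_s$, the vertex set of the hypercube is $\consts^{(r)}$ equipped with the uniform measure, and for any $f$ on the vertices $\mathbb{E}[f] = \frac{1}{M}\sum_{\mvconst^{(r)} \in \consts^{(r)}} f(\mvconst^{(r)})$. By the definition of $\tau$, it takes the value $-1$ on exactly the $D$ satisfiable constituents and $+1$ on the remaining $M - D$ constituents. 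Hence the first moment is $\mathbb{E}[\tau] = \frac{1}{M}\bigl(D\cdot(-1) + (M-D)\cdot(+1)\bigr) = \frac{M - 2D}{M}$, and since $\tau(\mvconst^{(r)})^2 = 1$ for every constituent, the second moment is $\mathbb{E}[\tau^2] = 1$.

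Finally I would assemble these via $\textbf{Var}[\tau] = \mathbb{E}[\tau^2] - \mathbb{E}[\tau]^2 = 1 - \left(\frac{M - 2D}{M}\right)^2$ and substitute $M = |\consts^{(r)}|$ and $D = D(r)$ to recover the claimed identity. There is essentially no obstacle here; the only point requiring a moment's care is the labelling convention that satisfiable constituents receive the value $-1$ (so that the "interesting" fraction is the one with models), but this merely changes the sign of $M - 2D$ inside the square and so does not affect the final formula.
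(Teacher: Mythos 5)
Your proposal is correct and follows essentially the same route as the paper: compute $\mathbb{E}[\tau]$ from the counts of satisfiable versus unsatisfiable constituents, note $\mathbb{E}[\tau^2]=1$ since $\tau$ is $\{-1,1\}$-valued, and conclude via $\textbf{Var}[\tau]=\mathbb{E}[\tau^2]-\mathbb{E}[\tau]^2$. Your remark about the sign convention (the paper's expression for $\mathbb{E}[\tau]$ is the negative of yours, which is immaterial after squaring) is the only difference, and it is harmless.
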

\begin{proof}
We have $\E[\tau] = \frac{D(r)}{|\consts^{(r)}|} - \frac{|\consts^{(r)}| - D(r)}{|\consts^{(r)}|}$ and $\E[\tau^2] = \frac{D(r)}{|\consts^{(r)}|} + \frac{|\consts^{(r)}| - D(r)}{|\consts^{(r)}|} = 1$. The result follows by routine calculation.
\end{proof}

We obtain edge isoperimetric principles by recalling two classic results from Boolean Fourier analysis.
\begin{proposition} We have
\begin{itemize}
    \item $\textbf{Var}[\tau] \leq \textbf{Inf}[\tau]$ (Poincare's inequality~\citep[][pg. 36]{o2014analysis}) and
    \item $ \Omega \! \left( \frac{\log n}{n}\right) \textbf{Var}[\tau] \leq \max_i \textbf{Inf}_i[f]$ (KKL theorem~\citep[][pg. 260]{o2014analysis}).\footnote{As a reminder, $\Omega(\cdot)$ is an asymptotic lower-bound.}
\end{itemize}
\end{proposition}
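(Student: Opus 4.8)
The plan is to recognize that both displayed inequalities are specializations of classical theorems from Boolean Fourier analysis to the constituent hypercube built in the preceding paragraph, so the work is almost entirely in setting up the correspondence and then invoking the standard machinery. First I would record the identification already laid out above: the valuation $\tau \colon \consts^{(r)} \to \set{-1,1}$ is precisely a Boolean function on $\two^N$ with $N = |\aconsts^{(r-1)}[z]|$, where the constituent $\mvconst^{(r)}_s$ corresponds to the bit-string $s$, and $\tau$ is a bona fide labeling because every constituent is either satisfiable (value $-1$) or not (value $1$). Consequently any general statement about $f \colon \two^N \to \set{-1,1}$ applies verbatim to $\tau$, with the $n$ appearing in the statement being this hypercube dimension $N$.

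For the first bound I would recall the Fourier expansion $\tau = \sum_{S \subseteq [N]} \hat{\tau}(S)\,\chi_S$, so that $\textbf{Var}[\tau] = \sum_{S \neq \emptyset} \hat{\tau}(S)^2$ while $\textbf{Inf}[\tau] = \sum_{S} |S|\,\hat{\tau}(S)^2$; since $|S| \geq 1$ for every non-empty $S$, a term-by-term comparison yields $\textbf{Var}[\tau] \leq \textbf{Inf}[\tau]$, which is exactly Poincar{\'e}'s inequality on the discrete cube as in~\citep[][pg.~36]{o2014analysis}. For the second bound I would simply quote the KKL theorem in the form of~\citep[][pg.~260]{o2014analysis}: for every $f \colon \two^N \to \set{-1,1}$ there is a coordinate $i$ with $\textbf{Inf}_i[f] \geq \Omega\!\left(\frac{\log N}{N}\right)\textbf{Var}[f]$; applied to $\tau$ this is the claimed inequality. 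Combining both with the variance formula of the previous proposition then converts them into concrete edge-isoperimetric statements phrased through $D(r)$ and $|\consts^{(r)}|$.

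The only thing that needs care — more bookkeeping than genuine obstacle — is pinning down the dimension of the constituent hypercube and confirming that $\tau$ is total and two-valued, so that the cited theorems genuinely apply; once that is in place there is no new combinatorics to carry out, and both inequalities follow directly from~\cite{o2014analysis}.
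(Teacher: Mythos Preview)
Your proposal is correct and aligns with the paper's treatment: the paper offers no proof beyond the two citations to~\cite{o2014analysis}, so both items are meant to be direct invocations of the stated classical results once $\tau$ is viewed as a $\set{-1,1}$-valued function on the Boolean cube of dimension $N = |\aconsts^{(r-1)}[z]|$. If anything, your sketch of the Fourier argument for Poincar\'e's inequality goes beyond what the paper supplies.
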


\begin{example}
For a complete theory, the variance is $\frac{4}{|\consts^{(r)}|}(1 - \frac{1}{\consts^{(r)}})$ which is essentially zero. Observe that every coordinate is influential so that $\textbf{Inf}[\tau] = \max_i \textbf{Inf}_i[\tau] = 1$. 
\end{example}

\section{Related Work}
\label{sec:rel}

The primary building blocks for this paper come from Hintikka's work on distributive normal forms~\cite{hintikka1965distributive,hintikka1970surface,hintikka1973logic} and Huang's work~\cite{huang2019oltp} on using them to assign probabilities to first-order sentences. Due to the nature of our exploration, we have also touched upon a variety of topics in bits and pieces. We highlight related work in some of the areas that we touch upon so we can compare and contrast perspectives.

\paragraph{Logic and geometry}
The connections between logic and geometry have been recognized in the literature, toposes being a prime example (\eg, see~\cite{maclane1992sheaves}). Our approach is more specialized and concrete as we work with first-order logic directly and look at as a linear space.

High-dimensional phenomena such as phase transitions (\eg, see~\cite{gent1994sat}) and isoperimetric principles (\eg, see~\cite{o2014analysis}) have also been considered in the context of propositional logic. We explored some high-dimensional phenomena for first-order logic.

\paragraph{Logical uncertainty}
Logical uncertainty and the problem of logical omniscience is a problem of interest in philosophy (\eg, see~\cite{hacking1967slightly,corfield2003towards,parikh2010sentences,seidenfeld2012uncertainty}) and there have been many solutions proposed for addressing it. One approach proposes assigning probabilities to sentences such that logically equivalent statements are not necessarily assigned the same probability (\eg, see~\cite{garrabrant2016logical} and~\cite{huang2019oltp}).\footnote{There are also approaches focused on assigning probabilities to sentences (\eg, see~\cite{gaifman1964concerning} and~\cite{scott1966assigning} for first-order sentences and ~\cite{hutter2013probabilities} for higher-order sentences).} Another approach syntactically models an agent's knowledge (\eg, see~\cite{gaifman2004reasoning} and~\cite{garber1983old}). A third approach models an agent's reasoning ability as bounded (\eg, see~\cite{icard2014mind} and~\cite{bjerring2018dynamic}). Our approach has aspects of all three. We model the agent's knowledge directly and show how it leads to probabilities that fail logical omniscience. We explain bounded rationality by the entropic nature of deduction.

\paragraph{Conjecturing}
A popular approach for automatic conjecture generation is based on enumerating candidate sentences and pruning the list using heuristics (\eg, the shape of the sentence) and model-checking (\eg, keep sentences that have no known counter-example)~\cite{lenat1976artificial,fajtlowicz1988conjectures,haase1990invention,colton1999automatic}. We suggest using pca with respect to an agent's knowledge, which is constructed with model-checking, as a heuristic. \citet{huang2019oltp} proposes conjecturing as hypothesis selection with respect to a probability distribution on first-order sentences.

\section{Conclusion}
\label{sec:concl}

In summary, we explored some consequences of a shift in perspective to ``signals in space" for first-order logic. We close with some thoughts on what a signal perspective of first-order logic provides to guide further research.

One view of the signal encoding of sentences seen in this paper is that it provides another reduction of first-order logic to a propositional setting. It would be interesting to see how techniques typically applied in the propositional setting apply to the representation of first-order logic introduced here. In particular, we have already seen examples of how Boolean Fourier analysis apply in this paper.

A second view of the signal encoding is that it provides another information theoretic take on first-order logic that counts semantic possibilities as opposed to syntactic aspects of the sentences themselves. It would be interesting to see what other aspects of first-order logic that information theoretic ideas can reveal in addition to the one we saw that deduction is entropic.

Lastly, we saw how a signal perspective enables orthogonal decomposition, a technique that is essential in the sciences and statistical inference more generally, to be applied to a logical system. Thus the ideas in this paper may be useful in illuminating the application of machine learning to theorem proving (\eg, see~\cite{kaliszyk2014machine,neurosat,holstep,premise,guided,mctsrl,huang2018gamepad,huang2019oltp}). In particular, the ``models as boundary" principle highlights the difficulty of approximating a theory and so we have a ``negative" result. On a more encouraging note, we have seen how model-checking and theorem proving can be combined in a principled manner to construct a theorem prover.

\begin{acks}                            
  We would like to thank Dawn Song for support to pursue this direction of research.
\end{acks}

\bibliography{references}


\begin{thebibliography}{36}


\ifx \showCODEN    \undefined \def \showCODEN     #1{\unskip}     \fi
\ifx \showDOI      \undefined \def \showDOI       #1{#1}\fi
\ifx \showISBNx    \undefined \def \showISBNx     #1{\unskip}     \fi
\ifx \showISBNxiii \undefined \def \showISBNxiii  #1{\unskip}     \fi
\ifx \showISSN     \undefined \def \showISSN      #1{\unskip}     \fi
\ifx \showLCCN     \undefined \def \showLCCN      #1{\unskip}     \fi
\ifx \shownote     \undefined \def \shownote      #1{#1}          \fi
\ifx \showarticletitle \undefined \def \showarticletitle #1{#1}   \fi
\ifx \showURL      \undefined \def \showURL       {\relax}        \fi
\providecommand\bibfield[2]{#2}
\providecommand\bibinfo[2]{#2}
\providecommand\natexlab[1]{#1}
\providecommand\showeprint[2][]{arXiv:#2}

\bibitem[\protect\citeauthoryear{Bjerring and Skipper}{Bjerring and
  Skipper}{2018}]%
        {bjerring2018dynamic}
\bibfield{author}{\bibinfo{person}{Jens~Christian Bjerring} {and}
  \bibinfo{person}{Mattias Skipper}.} \bibinfo{year}{2018}\natexlab{}.
\newblock \showarticletitle{{A Dynamic Solution to the Problem of Logical
  Omniscience}}.
\newblock \bibinfo{journal}{\emph{{Journal of Philosophical Logic}}}
  (\bibinfo{year}{2018}), \bibinfo{pages}{1--21}.
\newblock


\bibitem[\protect\citeauthoryear{Colton, Bundy, and Walsh}{Colton
  et~al\mbox{.}}{1999}]%
        {colton1999automatic}
\bibfield{author}{\bibinfo{person}{Simon Colton}, \bibinfo{person}{Alan Bundy},
  {and} \bibinfo{person}{Toby Walsh}.} \bibinfo{year}{1999}\natexlab{}.
\newblock \showarticletitle{{Automatic Concept Formation in Pure Mathematics}}.
  In \bibinfo{booktitle}{\emph{{International Joint Conference on Artificial
  Intelligence}}}. \bibinfo{pages}{786--793}.
\newblock


\bibitem[\protect\citeauthoryear{Corfield}{Corfield}{2003}]%
        {corfield2003towards}
\bibfield{author}{\bibinfo{person}{David Corfield}.}
  \bibinfo{year}{2003}\natexlab{}.
\newblock \bibinfo{booktitle}{\emph{{Towards a Philosophy of Real
  Mathematics}}}.
\newblock \bibinfo{publisher}{Cambridge University Press}.
\newblock


\bibitem[\protect\citeauthoryear{Fajtlowicz}{Fajtlowicz}{1988}]%
        {fajtlowicz1988conjectures}
\bibfield{author}{\bibinfo{person}{Siemion Fajtlowicz}.}
  \bibinfo{year}{1988}\natexlab{}.
\newblock \showarticletitle{{On Conjectures of Graffiti}}.
\newblock In \bibinfo{booktitle}{\emph{{Annals of Discrete Mathematics}}}.
  Vol.~\bibinfo{volume}{38}. \bibinfo{publisher}{Elsevier},
  \bibinfo{pages}{113--118}.
\newblock


\bibitem[\protect\citeauthoryear{Gaifman}{Gaifman}{1964}]%
        {gaifman1964concerning}
\bibfield{author}{\bibinfo{person}{Haim Gaifman}.}
  \bibinfo{year}{1964}\natexlab{}.
\newblock \showarticletitle{{Concerning Measures in First Order Calculi}}.
\newblock \bibinfo{journal}{\emph{Israel Journal of Mathematics}}
  \bibinfo{volume}{2}, \bibinfo{number}{1} (\bibinfo{year}{1964}),
  \bibinfo{pages}{1--18}.
\newblock


\bibitem[\protect\citeauthoryear{Gaifman}{Gaifman}{2004}]%
        {gaifman2004reasoning}
\bibfield{author}{\bibinfo{person}{Haim Gaifman}.}
  \bibinfo{year}{2004}\natexlab{}.
\newblock \showarticletitle{{Reasoning with Limited Resources and Assigning
  Probabilities to Arithmetical Statements}}.
\newblock \bibinfo{journal}{\emph{Synthese}} \bibinfo{volume}{140},
  \bibinfo{number}{1} (\bibinfo{year}{2004}), \bibinfo{pages}{97--119}.
\newblock


\bibitem[\protect\citeauthoryear{Garber}{Garber}{1983}]%
        {garber1983old}
\bibfield{author}{\bibinfo{person}{Daniel Garber}.}
  \bibinfo{year}{1983}\natexlab{}.
\newblock \showarticletitle{{Old Evidence and Logical Omniscience in Bayesian
  Confirmation Theory}}.
\newblock \bibinfo{journal}{\emph{Testing Scientific Theories}}
  \bibinfo{volume}{X} (\bibinfo{year}{1983}), \bibinfo{pages}{99--131}.
\newblock


\bibitem[\protect\citeauthoryear{Garrabrant, Benson-Tilsen, Critch, Soares, and
  Taylor}{Garrabrant et~al\mbox{.}}{2016}]%
        {garrabrant2016logical}
\bibfield{author}{\bibinfo{person}{Scott Garrabrant}, \bibinfo{person}{Tsvi
  Benson-Tilsen}, \bibinfo{person}{Andrew Critch}, \bibinfo{person}{Nate
  Soares}, {and} \bibinfo{person}{Jessica Taylor}.}
  \bibinfo{year}{2016}\natexlab{}.
\newblock \bibinfo{title}{{Logical Induction}}.
\newblock
\newblock
\newblock
\shownote{arXiv preprint arXiv:1609.03543.}


\bibitem[\protect\citeauthoryear{Gent and Walsh}{Gent and Walsh}{1994}]%
        {gent1994sat}
\bibfield{author}{\bibinfo{person}{Ian~P Gent} {and} \bibinfo{person}{Toby
  Walsh}.} \bibinfo{year}{1994}\natexlab{}.
\newblock \showarticletitle{{The SAT Phase Transition}}. In
  \bibinfo{booktitle}{\emph{ECAI}}, Vol.~\bibinfo{volume}{94}. PITMAN,
  \bibinfo{pages}{105--109}.
\newblock


\bibitem[\protect\citeauthoryear{Haase}{Haase}{1990}]%
        {haase1990invention}
\bibfield{author}{\bibinfo{person}{Kenneth~W. Haase}.}
  \bibinfo{year}{1990}\natexlab{}.
\newblock \emph{\bibinfo{title}{{Invention and Exploration in Discovery}}}.
\newblock \bibinfo{thesistype}{Ph.D. Dissertation}.
  \bibinfo{school}{Massachusetts Institute of Technology}.
\newblock


\bibitem[\protect\citeauthoryear{Hacking}{Hacking}{1967}]%
        {hacking1967slightly}
\bibfield{author}{\bibinfo{person}{Ian Hacking}.}
  \bibinfo{year}{1967}\natexlab{}.
\newblock \showarticletitle{{Slightly more realistic personal probability}}.
\newblock \bibinfo{journal}{\emph{Philosophy of Science}} \bibinfo{volume}{34},
  \bibinfo{number}{4} (\bibinfo{year}{1967}), \bibinfo{pages}{311--325}.
\newblock


\bibitem[\protect\citeauthoryear{Hintikka}{Hintikka}{1965}]%
        {hintikka1965distributive}
\bibfield{author}{\bibinfo{person}{Jaakko Hintikka}.}
  \bibinfo{year}{1965}\natexlab{}.
\newblock \showarticletitle{{Distributive Normal Forms in First-Order Logic}}.
\newblock In \bibinfo{booktitle}{\emph{{Studies in Logic and the Foundations of
  Mathematics}}}. Vol.~\bibinfo{volume}{40}. \bibinfo{publisher}{Elsevier},
  \bibinfo{pages}{48--91}.
\newblock


\bibitem[\protect\citeauthoryear{Hintikka}{Hintikka}{1970}]%
        {hintikka1970surface}
\bibfield{author}{\bibinfo{person}{Jaakko Hintikka}.}
  \bibinfo{year}{1970}\natexlab{}.
\newblock \showarticletitle{{Surface Information and Depth Information}}.
\newblock In \bibinfo{booktitle}{\emph{{Information and Inference}}}.
  \bibinfo{publisher}{Springer}, \bibinfo{pages}{263--297}.
\newblock


\bibitem[\protect\citeauthoryear{Hintikka}{Hintikka}{1973}]%
        {hintikka1973logic}
\bibfield{author}{\bibinfo{person}{Jaakko Hintikka}.}
  \bibinfo{year}{1973}\natexlab{}.
\newblock \bibinfo{booktitle}{\emph{{Logic, Language-Games and Information:
  Kantian Themes in the Philosophy of Logic}}}.
\newblock \bibinfo{publisher}{Oxford, Clarendon Press}.
\newblock


\bibitem[\protect\citeauthoryear{Huang}{Huang}{2019}]%
        {huang2019oltp}
\bibfield{author}{\bibinfo{person}{Daniel Huang}.}
  \bibinfo{year}{2019}\natexlab{}.
\newblock \bibinfo{title}{{On Learning to Prove}}.
\newblock
\newblock
\newblock
\shownote{arXiv preprint arXiv:1904.11099.}


\bibitem[\protect\citeauthoryear{Huang, Dhariwal, Song, and Sutskever}{Huang
  et~al\mbox{.}}{2019}]%
        {huang2018gamepad}
\bibfield{author}{\bibinfo{person}{Daniel Huang}, \bibinfo{person}{Prafulla
  Dhariwal}, \bibinfo{person}{Dawn Song}, {and} \bibinfo{person}{Ilya
  Sutskever}.} \bibinfo{year}{2019}\natexlab{}.
\newblock \showarticletitle{{GamePad: A Learning Environment for Theorem
  Proving}}. In \bibinfo{booktitle}{\emph{International Conference on Learning
  Representations}}.
\newblock
\urldef\tempurl%
\url{https://openreview.net/forum?id=r1xwKoR9Y7}
\showURL{%
\tempurl}


\bibitem[\protect\citeauthoryear{Hutter, Lloyd, Ng, and Uther}{Hutter
  et~al\mbox{.}}{2013}]%
        {hutter2013probabilities}
\bibfield{author}{\bibinfo{person}{Marcus Hutter}, \bibinfo{person}{John~W.
  Lloyd}, \bibinfo{person}{Kee~Siong Ng}, {and} \bibinfo{person}{William~T.B.
  Uther}.} \bibinfo{year}{2013}\natexlab{}.
\newblock \showarticletitle{{Probabilities on Sentences in an Expressive
  Logic}}.
\newblock \bibinfo{journal}{\emph{Journal of Applied Logic}}
  \bibinfo{volume}{11}, \bibinfo{number}{4} (\bibinfo{year}{2013}),
  \bibinfo{pages}{386--420}.
\newblock


\bibitem[\protect\citeauthoryear{{Icard III}}{{Icard III}}{2014}]%
        {icard2014mind}
\bibfield{author}{\bibinfo{person}{Thomas~F. {Icard III}}.}
  \bibinfo{year}{2014}\natexlab{}.
\newblock \emph{\bibinfo{title}{{The Algorithmic Mind: A Study of Inference in
  Action}}}.
\newblock \bibinfo{thesistype}{Ph.D. Dissertation}. \bibinfo{school}{Institute
  for Logic, Language and Computation, Universiteit van Amsterdam}.
\newblock


\bibitem[\protect\citeauthoryear{Irving, Szegedy, Alemi, Een, Chollet, and
  Urban}{Irving et~al\mbox{.}}{2016}]%
        {premise}
\bibfield{author}{\bibinfo{person}{Geoffrey Irving}, \bibinfo{person}{Christian
  Szegedy}, \bibinfo{person}{Alexander~A. Alemi}, \bibinfo{person}{Niklas Een},
  \bibinfo{person}{Francois Chollet}, {and} \bibinfo{person}{Josef Urban}.}
  \bibinfo{year}{2016}\natexlab{}.
\newblock \showarticletitle{{Deepmath-deep sequence models for premise
  selection}}. In \bibinfo{booktitle}{\emph{{Advances in Neural Information
  Processing Systems}}}. \bibinfo{pages}{2235--2243}.
\newblock


\bibitem[\protect\citeauthoryear{Kaliszyk, Chollet, and Szegedy}{Kaliszyk
  et~al\mbox{.}}{2017}]%
        {holstep}
\bibfield{author}{\bibinfo{person}{Cezary Kaliszyk},
  \bibinfo{person}{Fran{\c{c}}ois Chollet}, {and} \bibinfo{person}{Christian
  Szegedy}.} \bibinfo{year}{2017}\natexlab{}.
\newblock \showarticletitle{{Holstep: A Machine Learning Dataset for
  Higher-order Logic Theorem Proving}}. In
  \bibinfo{booktitle}{\emph{{International Conference on Learning
  Representations}}}.
\newblock


\bibitem[\protect\citeauthoryear{Kaliszyk, Mamane, and Urban}{Kaliszyk
  et~al\mbox{.}}{2014}]%
        {kaliszyk2014machine}
\bibfield{author}{\bibinfo{person}{Cezary Kaliszyk}, \bibinfo{person}{Lionel
  Mamane}, {and} \bibinfo{person}{Josef Urban}.}
  \bibinfo{year}{2014}\natexlab{}.
\newblock \bibinfo{title}{{Machine learning of Coq proof guidance: First
  experiments}}.
\newblock
\newblock
\newblock
\shownote{arXiv preprint arXiv:1410.5467.}


\bibitem[\protect\citeauthoryear{Kaliszyk, Urban, Michalewski, and
  Olšák}{Kaliszyk et~al\mbox{.}}{2018}]%
        {mctsrl}
\bibfield{author}{\bibinfo{person}{Cezary Kaliszyk}, \bibinfo{person}{Josef
  Urban}, \bibinfo{person}{Henryk Michalewski}, {and} \bibinfo{person}{Mirek
  Olšák}.} \bibinfo{year}{2018}\natexlab{}.
\newblock \bibinfo{title}{{Reinforcement Learning of Theorem Proving}}.
\newblock
\newblock
\newblock
\shownote{arXiv preprint arXiv:1805.07563.}


\bibitem[\protect\citeauthoryear{Kallenberg}{Kallenberg}{2006}]%
        {kallenberg2006foundations}
\bibfield{author}{\bibinfo{person}{Olav Kallenberg}.}
  \bibinfo{year}{2006}\natexlab{}.
\newblock \bibinfo{booktitle}{\emph{{Foundations of Modern Probability}}}.
\newblock \bibinfo{publisher}{Springer Science \& Business Media}.
\newblock


\bibitem[\protect\citeauthoryear{Lane and Moerdijk}{Lane and Moerdijk}{1992}]%
        {maclane1992sheaves}
\bibfield{author}{\bibinfo{person}{Saunders~Mac Lane} {and}
  \bibinfo{person}{Ieke Moerdijk}.} \bibinfo{year}{1992}\natexlab{}.
\newblock \bibinfo{booktitle}{\emph{{Sheaves in Geometry and Logic: A First
  Introduction to Topos Theory}}}.
\newblock \bibinfo{publisher}{Springer}.
\newblock


\bibitem[\protect\citeauthoryear{Lenat}{Lenat}{1976}]%
        {lenat1976artificial}
\bibfield{author}{\bibinfo{person}{Douglas~B. Lenat}.}
  \bibinfo{year}{1976}\natexlab{}.
\newblock \bibinfo{booktitle}{\emph{{AM: An Artificial Intelligence Approach to
  Discovery in Mathematics as Heuristic Search}}}.
\newblock \bibinfo{type}{{T}echnical {R}eport}. \bibinfo{institution}{Stanford
  University, Department of Computer Science}.
\newblock


\bibitem[\protect\citeauthoryear{Libkin}{Libkin}{2004}]%
        {libkin2004fmt}
\bibfield{author}{\bibinfo{person}{Leonid Libkin}.}
  \bibinfo{year}{2004}\natexlab{}.
\newblock \bibinfo{booktitle}{\emph{{Elements of Finite Model Theory}}}.
\newblock \bibinfo{publisher}{Springer}.
\newblock


\bibitem[\protect\citeauthoryear{Loos, Irving, Szegedy, and Kaliszyk}{Loos
  et~al\mbox{.}}{2017}]%
        {guided}
\bibfield{author}{\bibinfo{person}{Sarah~M. Loos}, \bibinfo{person}{Geoffrey
  Irving}, \bibinfo{person}{Christian Szegedy}, {and} \bibinfo{person}{Cezary
  Kaliszyk}.} \bibinfo{year}{2017}\natexlab{}.
\newblock \showarticletitle{{Deep Network Guided Proof Search}}. In
  \bibinfo{booktitle}{\emph{{LPAR}}} \emph{(\bibinfo{series}{EPiC Series in
  Computing})}, Vol.~\bibinfo{volume}{46}. \bibinfo{publisher}{EasyChair},
  \bibinfo{pages}{85--105}.
\newblock


\bibitem[\protect\citeauthoryear{Munkres}{Munkres}{2000}]%
        {munkres2000topology}
\bibfield{author}{\bibinfo{person}{James~R. Munkres}.}
  \bibinfo{year}{2000}\natexlab{}.
\newblock \bibinfo{booktitle}{\emph{Topology} (\bibinfo{edition}{2} ed.)}.
\newblock \bibinfo{publisher}{Prentice Hall}.
\newblock


\bibitem[\protect\citeauthoryear{O'Donnell}{O'Donnell}{2014}]%
        {o2014analysis}
\bibfield{author}{\bibinfo{person}{Ryan O'Donnell}.}
  \bibinfo{year}{2014}\natexlab{}.
\newblock \bibinfo{booktitle}{\emph{{Analysis of Boolean Functions}}}.
\newblock \bibinfo{publisher}{Cambridge University Press}.
\newblock


\bibitem[\protect\citeauthoryear{Parikh}{Parikh}{2010}]%
        {parikh2010sentences}
\bibfield{author}{\bibinfo{person}{Rohit Parikh}.}
  \bibinfo{year}{2010}\natexlab{}.
\newblock \bibinfo{booktitle}{\emph{{Sentences, Propositions and Logical
  Omniscience, or What does Deduction tell us?}}}
\newblock \bibinfo{type}{{T}echnical {R}eport}. \bibinfo{institution}{City
  University of New York}.
\newblock


\bibitem[\protect\citeauthoryear{P{\'o}lya}{P{\'o}lya}{1990a}]%
        {polya1990mathematics1}
\bibfield{author}{\bibinfo{person}{George P{\'o}lya}.}
  \bibinfo{year}{1990}\natexlab{a}.
\newblock \bibinfo{booktitle}{\emph{{Mathematics and Plausible Reasoning:
  Induction and Analogy in Mathematics}}}. Vol.~\bibinfo{volume}{1}.
\newblock \bibinfo{publisher}{Princeton University Press}.
\newblock


\bibitem[\protect\citeauthoryear{P{\'o}lya}{P{\'o}lya}{1990b}]%
        {polya1990mathematics2}
\bibfield{author}{\bibinfo{person}{George P{\'o}lya}.}
  \bibinfo{year}{1990}\natexlab{b}.
\newblock \bibinfo{booktitle}{\emph{{Mathematics and Plausible Reasoning:
  Patterns of plausible inference}}}. Vol.~\bibinfo{volume}{2}.
\newblock \bibinfo{publisher}{Princeton University Press}.
\newblock


\bibitem[\protect\citeauthoryear{P{\'o}lya}{P{\'o}lya}{2004}]%
        {polya2004solve}
\bibfield{author}{\bibinfo{person}{George P{\'o}lya}.}
  \bibinfo{year}{2004}\natexlab{}.
\newblock \bibinfo{booktitle}{\emph{{How to Solve It: A New Aspect of
  Mathematical Method}}}.
\newblock \bibinfo{publisher}{Princeton University Press}.
\newblock


\bibitem[\protect\citeauthoryear{Scott and Krauss}{Scott and Krauss}{1966}]%
        {scott1966assigning}
\bibfield{author}{\bibinfo{person}{Dana Scott} {and} \bibinfo{person}{Peter
  Krauss}.} \bibinfo{year}{1966}\natexlab{}.
\newblock \showarticletitle{{Assigning Probabilities to Logical Formulas}}.
\newblock In \bibinfo{booktitle}{\emph{{Studies in Logic and the Foundations of
  Mathematics}}}. Vol.~\bibinfo{volume}{43}. \bibinfo{publisher}{Elsevier},
  \bibinfo{pages}{219--264}.
\newblock


\bibitem[\protect\citeauthoryear{Seidenfeld, Schervish, and Kadane}{Seidenfeld
  et~al\mbox{.}}{2012}]%
        {seidenfeld2012uncertainty}
\bibfield{author}{\bibinfo{person}{Teddy Seidenfeld}, \bibinfo{person}{Mark~J.
  Schervish}, {and} \bibinfo{person}{Joseph~B. Kadane}.}
  \bibinfo{year}{2012}\natexlab{}.
\newblock \showarticletitle{What Kind of Uncertainty is That? Using Personal
  Probability for Expressing One's Thinking About Logical and Mathematical
  Propositions}.
\newblock \bibinfo{journal}{\emph{The Journal of Philosophy}}
  \bibinfo{volume}{109}, \bibinfo{number}{8/9} (\bibinfo{year}{2012}),
  \bibinfo{pages}{516--533}.
\newblock


\bibitem[\protect\citeauthoryear{Selsam, Lamm, B\"{u}nz, Liang, de~Moura, and
  Dill}{Selsam et~al\mbox{.}}{2018}]%
        {neurosat}
\bibfield{author}{\bibinfo{person}{Daniel Selsam}, \bibinfo{person}{Matthew
  Lamm}, \bibinfo{person}{Benedikt B\"{u}nz}, \bibinfo{person}{Percy Liang},
  \bibinfo{person}{Leonardo de Moura}, {and} \bibinfo{person}{David~L. Dill}.}
  \bibinfo{year}{2018}\natexlab{}.
\newblock \bibinfo{title}{{Learning a SAT Solver from Single-Bit Supervision}}.
\newblock
\newblock
\newblock
\shownote{arXiv preprint arXiv:1802.03685.}


\end{thebibliography}

\appendix

\section{Appendix}
\label{sec:appendix}

\begin{proposition}[Proposition~\ref{prop:statesp2:ms:prefix}]
\begin{enumerate}
    \item Every $\mvconst^{(r)}$ is a prefix of some $\mvconst^{(r+1)}$.
    \item If $\mvconst^{(r)} \sqsubseteq \mvconst^{(r+1)}$ and $\cM \vDash \mvconst^{(r+1)}$ for some $\cM$, then $\cM \vDash \mvconst^{(r)}$.
    \item If $\mvconst^{(r)} \sqsubseteq \mvconst^{(r+1)}$ and $\nvDash \mvconst^{(r)}$, then $\nvDash \mvconst^{(r+1)}$.
\end{enumerate}
\end{proposition}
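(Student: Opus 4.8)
The plan is to prove item (2) first, obtain item (3) from it by contraposition, and then prove item (1) by a direct syntactic construction. Indeed, (3) is immediate once (2) is available: if $\mvconst^{(r)} \sqsubseteq \mvconst^{(r+1)}$, $\nvDash \mvconst^{(r)}$, and $\mvconst^{(r+1)}$ were satisfiable by some $\cM$, then (2) would give $\cM \vDash \mvconst^{(r)}$, contradicting $\nvDash \mvconst^{(r)}$. So the real work is in (1) and (2).

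For (2), I would unfold the prefix condition $\dagger \mvconst^{(r)} = \dagger \tilde{\mvconst}^{(r+1)}$ and track how the positive and negative innermost existential conjuncts of $\mvconst^{(r+1)}$ force those of $\mvconst^{(r)}$. The key lemma is that stripping the deepest quantifier layer of a rank $r$ attributive constituent $\mvaconst^{(r)}[z]$ is compatible with expansion: $(\exists^E z)\, \tilde{\mvaconst}^{(r)}[z]$ is logically equivalent to the disjunction of $(\exists^E z)\, \mvaconst'^{(r)}[z]$ over all rank $r$ attributive constituents $\mvaconst'^{(r)}[z]$ whose strip equals $\tilde{\mvaconst}^{(r)}[z]$ --- this is essentially the expansion proposition applied with the free variable $z$. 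Granting this, a model $\cM \vDash \mvconst^{(r+1)}$ witnesses each positive conjunct $(\exists^E z)\mvaconst^{(r)}[z]$, and since $\vDash \mvaconst^{(r)}[z] \rightarrow \tilde{\mvaconst}^{(r)}[z]$ the same witness gives the corresponding positive conjunct of $\mvconst^{(r)}$; and $\cM$ falsifies every negative conjunct $(\exists^E z)\mvaconst'^{(r)}[z]$ that the prefix condition groups under a negative conjunct $(\exists^E z)\mvgamma^{(r-1)}[z]$ of $\mvconst^{(r)}$, so by the equivalence $\cM$ also falsifies $(\exists^E z)\mvgamma^{(r-1)}[z]$. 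Matching up the rank $0$ parts as well, this yields $\cM \vDash \mvconst^{(r)}$.

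For (1), given $\mvconst^{(r)}$ I would exhibit a witness $\mvconst^{(r+1)}$ explicitly: copy the rank $0$ part of $\mvconst^{(r)}$, and define the sign function on $\aconsts^{(r)}[z]$ by declaring a rank $r$ existential conjunct positive in $\mvconst^{(r+1)}$ exactly when its strip appeared positively in $\mvconst^{(r)}$. One checks this is well defined --- every rank $r-1$ attributive constituent is the strip of at least one rank $r$ attributive constituent, which is the rank $r-1$ instance of the claim, together with the expansion proposition --- and that applying $\tilde{(\cdot)}$ to this $\mvconst^{(r+1)}$ recovers precisely the set representation $\dagger \mvconst^{(r)}$, so that $\mvconst^{(r)} \sqsubseteq \mvconst^{(r+1)}$.

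The main obstacle is the bookkeeping in (2): correctly interpreting the $\dagger$ / $\tilde{(\cdot)}$ ``set representation'' notation and establishing the strip-versus-expansion equivalence $(\exists^E z)\tilde{\mvaconst}^{(r)}[z] \equiv \bigvee (\exists^E z)\mvaconst'^{(r)}[z]$, which is exactly what pins down how the \emph{syntactic} prefix relation interacts with semantics. Once that equivalence is in hand, the positive/negative conjunct chase in (2) and the construction in (1) are routine, and (3) is a one-line corollary.
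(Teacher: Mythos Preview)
Your proposal is correct, and for items (2) and (3) it is essentially the paper's argument: the paper also unfolds the prefix condition and reads off that $\cM \vDash \mvconst^{(r+1)}$ forces each signed conjunct of $\dagger \mvconst^{(r)}$ to hold in $\cM$, and it disposes of (3) with ``similar to item 2'' --- your contraposition is just a cleaner packaging of the same content. The strip-versus-expansion equivalence you isolate is exactly what the paper is using implicitly when it passes from $\dagger \mvconst^{(r+1)}$ to $\dagger \tilde{\mvconst}^{(r+1)} = \dagger \mvconst^{(r)}$.

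The one genuine divergence is item (1). The paper does not construct a witness; instead it argues by counting: the set of possible stripped forms $\{\tilde{\mvconst}^{(r+1)} : \mvconst^{(r+1)} \in \consts^{(r+1)}\}$ has size $3^{|\aconsts^{(r-1)}[z]|}$ (for each rank $r{-}1$ attributive constituent the stripped form can carry the positive conjunct, the negative conjunct, or both), while $|\consts^{(r)}| = 2^{|\aconsts^{(r-1)}[z]|}$, and the pigeonhole principle is invoked to conclude that every $\dagger \mvconst^{(r)}$ is realized. Your explicit construction --- assign to each $\mvaconst^{(r)}[z]$ the sign that its strip carries in $\mvconst^{(r)}$ --- is more informative: it exhibits a canonical preimage rather than merely asserting existence, and it makes transparent why no sign conflicts arise after stripping. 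The paper's counting argument is shorter but opaque about which $\mvconst^{(r+1)}$ actually works; your construction would slot in as a drop-in replacement with no loss.
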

\begin{proof}
\begin{enumerate}
    \item When $r = 0$, then $\mvconst^{(0)}$ is the prefix of every $\mvconst^{(1)}$. Suppose $r \geq 1$. Observe that $|\set{\tilde{\mvconst}^{(r+1)} \ST \mvconst^{(r+1)} \in \consts^{(r+1)}}| = 3^{|\aconsts^{(r-1)}[z_1]|}$ because the third possibility for a deletion is that it contains both $\mvaconst^{(r-1)}[z_1]$ and its negation. Meanwhile $|\consts^{(r)}| = 2^{|\aconsts^{(r-1)}[z_1]|}$. The result follows by the pigeonhole principle.
    \item We have $\cM \vDash \langle\mvaconst^{(r)}[z_1]\rangle^{\mvaconst^{(r)}[z_1]}_{\mvaconst^{(r+1)}}$ for every $\mvaconst^{(r)}[z_1] \in \dagger \mvconst^{(r+1)}$ because $\cM \vDash \mvconst^{(r+1)}$. Moreover, we have $\cM \vDash \langle \mvaconst^{(r-1)}[z_1] \rangle^{\mvaconst^{(r-1)}[z_1]}_{\mvconst^{(r)}}$ for every $\mvaconst^{(r-1)}[z_1] \in \dagger \mvconst^{(r)}$ because $\mvconst^{(r)} \sqsubseteq \mvconst^{(r+1)}$. Thus $\cM \vDash \mvconst^{(r)}$ as desired.
    \item Similar to item $2$.
\end{enumerate}
\end{proof}

\begin{proposition}[Proposition~\ref{prop:statesp2:ms:tree}]
For any $r$, $T_r$ is a tree such that $\vDash \mvconst^{(s+1)} \rightarrow \mvconst^{(s)}$ for any $(\mvconst^{(s)}, \mvconst^{(s+1)}) \in \xi^{(r)}$.
\end{proposition}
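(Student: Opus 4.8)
The plan is to argue by induction on $r$. The base case $r = 0$ is immediate: $T_0 = (\consts^{(0)}, \emptyset)$ has a single vertex and no edges (since with no constants or function symbols $|\consts^{(0)}| = 1$), so it is a tree, and the implication condition over $\xi^{(0)} = \emptyset$ holds vacuously.

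For the inductive step I assume $T_r$ is a tree and every edge of $\xi^{(r)}$ satisfies the implication property, and I verify the same for $T_{r+1} = T_r \cup (\consts^{(r+1)}, \xi^{(r+1)})$ with $\xi^{(r+1)} = \operatorname{edges}(r, \xi^{(r)})$. For the tree structure, the key observation is that Algorithm~\ref{alg:statesp2:ms:edges} adds exactly one edge for each vertex $\mvconst^{(r+1)} \in \consts^{(r+1)}$, and that edge always joins $\mvconst^{(r+1)}$ to a vertex of rank $r$. Since no edge of $\xi^{(r)}$ is incident to a rank-$(r+1)$ vertex, each such vertex enters $T_{r+1}$ as a leaf hanging off $T_r$; attaching degree-one vertices to a tree preserves connectedness and cannot create a cycle, so $T_{r+1}$ is again a tree.

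For the implication property, the edges already in $\xi^{(r)}$ are covered by the induction hypothesis, so it suffices to check each newly added edge $(\mvconst^{(r)}, \mvconst^{(r+1)})$, of which there are three kinds matching the branches of the algorithm. (i) If $\mvconst^{(r)} \sqsubseteq \mvconst^{(r+1)}$, then $\vDash \mvconst^{(r+1)} \rightarrow \mvconst^{(r)}$ is precisely item (2) of Proposition~\ref{prop:statesp2:ms:prefix}. (ii) If $\mvconst^{(r+1)}$ has no prefix and $A \eqdef \set{\mvconst^{(r)} \ST \mvconst^{(r+1)} \succ^1_r \mvconst^{(r)}}$ is empty, I claim $\mvconst^{(r+1)}$ is unsatisfiable, so the implication holds vacuously no matter which $\mvconst^{(r)}$ is chosen: if $\cM \vDash \mvconst^{(r+1)}$ for some $\cM$, then $\cM$ satisfies a unique rank-$r$ constituent $\mvconst^{(r)}_0$ (the disjunction of all rank-$r$ constituents is valid and they are mutually exclusive), and since $\mvconst^{(r)}_0$ is the disjunction of its rank-$(r+1)$ expansions, $\cM$ satisfies one of them, which by mutual exclusivity of rank-$(r+1)$ constituents (Proposition~\ref{prop:statesp:vecsp:constituent}) must be $\mvconst^{(r+1)}$ itself, so $\mvconst^{(r+1)} \succ^1_r \mvconst^{(r)}_0$ and $A \neq \emptyset$, a contradiction. (iii) If $A \neq \emptyset$ and $\mvconst^{(r)}$ is the chosen element of $A$, then $\mvconst^{(r+1)} \succ^1_r \mvconst^{(r)}$, and $\vDash \mvconst^{(r+1)} \rightarrow \mvconst^{(r)}$ by the observation recorded just after the expansion proposition that expansion is consistent with logical implication.

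I expect the only step requiring genuine care to be case (ii): the passage from ``$\mvconst^{(r+1)}$ expands no rank-$r$ constituent'' to ``$\mvconst^{(r+1)}$ is unsatisfiable'' rests on the unique-satisfiable-ancestor phenomenon already noted in the text, and one must invoke it carefully—pinning down the single rank-$r$ constituent satisfied by a model of $\mvconst^{(r+1)}$ and then using mutual exclusivity at rank $r+1$ to force the match. The remaining pieces are bookkeeping: the leaf argument for the tree structure, and the direct appeals to Proposition~\ref{prop:statesp2:ms:prefix}(2) and to the expansion--implication observation.
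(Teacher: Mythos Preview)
Your proof is correct and follows essentially the same approach as the paper: induction on $r$, the leaf-attachment argument for the tree structure, and the three-way case split on the branches of Algorithm~\ref{alg:statesp2:ms:edges} for the implication property. Your treatment of case~(ii) is in fact more careful than the paper's, which simply asserts that $\mvconst^{(r+1)}$ is ``necessarily unsatisfiable'' without spelling out the mutual-exclusivity argument you give.
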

\begin{proof}
By induction on $r$. The base case is trivial. In the inductive case, we have that $T_r$ is a tree such that $\vDash \mvconst^{(s)} \rightarrow \mvconst^{(s+1)}$ for any $(\mvconst^{(s)}, \mvconst^{(s+1)}) \in \xi^{(r)}$. In order to show that $T_{r+1}$ is a tree, it suffices to show that every $\mvconst^{(r)} \in \consts^{(r)}$ has an outgoing edge and every $\mvconst^{(r+1)} \in \consts^{(r+1)}$ has a unique incoming edge (because of the inductive hypothesis). This follows from Proposition~\ref{prop:statesp2:ms:prefix} item $1$ and the definition of $\operatorname{edges}$.
    
It remains to show that $\vDash \mvconst^{(s)} \rightarrow \mvconst^{(s+1)}$ for any $(\mvconst^{(s)}, \mvconst^{(s+1)}) \in \xi^{(r+1)}$. By the inductive hypothesis, it suffices to show that $\vDash \mvconst^{(r)} \rightarrow \mvconst^{(r+1)}$ for any $(\mvconst^{(r)}, \mvconst^{(r+1)}) \in \xi^{(r+1)}$. We proceed by case analysis on the edge set following the cases in Algorithm~\ref{alg:statesp2:ms:edges}.
    
Suppose there is an edge $(\mvconst^{(r)}, \mvconst^{(r+1)})$ such that $\mvconst^{(r)} \sqsubseteq \mvconst^{(r+1)}$. If $\mvconst^{(r+1)}$ is not satisfiable then there is nothing to show. If $\mvconst^{(r+1)}$ is satisfiable then $\mvconst^{(r)}$ is satisfiable as well by Proposition~\ref{prop:statesp2:ms:prefix} item $2$. Thus $\vDash \mvconst^{(r+1)} \rightarrow \mvconst^{(r)}$. 
    
Suppose there is an edge $(\mvconst^{(r)}, \mvconst^{(r+1)})$ such that $\mvconst^{(r)} \nsucc^1_r \mvconst^{(r+1)}$. Then $\mvconst^{(r+1)}$ is necessarily unsatisfiable so there is nothing to show.
    
Suppose there is an edge $(\mvconst^{(r)}, \mvconst^{(r+1)})$ such that $\mvconst^{(r)} \succ^1_r \mvconst^{(r+1)}$. Then the result follows as $\succ^s_r$ respects implication.
\end{proof}

\begin{proposition}[Proposition~\ref{prop:statesp2:ms:prefix2}]
$T$ is a tree such that $\vDash \mvconst^{(r+1)} \rightarrow \mvconst^{(r)}$ for any $(\mvconst^{(r)}, \mvconst^{(r+1)}) \in E$.
\end{proposition}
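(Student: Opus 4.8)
The plan is to obtain $T$ as the direct limit of the increasing chain of finite trees $T_0 \subseteq T_1 \subseteq \cdots$ and to lift the stage-by-stage properties established in Proposition~\ref{prop:statesp2:ms:tree} to the limit. First I would record that $(T_r)_{r \in \N}$ really is an increasing chain of subgraphs: by construction $T_{r+1} = T_r \cup (\consts^{(r+1)}, \xi^{(r+1)})$ with $\xi^{(r+1)} = \operatorname{edges}(r, \xi^{(r)}) \supseteq \xi^{(r)}$, since the procedure of Algorithm~\ref{alg:statesp2:ms:edges} only ever adds edges to its argument. Hence the vertex set of $T$ is $\consts = \bigcup_{r} \consts^{(r)}$, the edge set is $E = \bigcup_r \xi^{(r)}$, and any edge joining a rank-$r$ vertex to a rank-$(r+1)$ vertex already occurs in $\xi^{(r+1)}$ (it cannot be introduced earlier, as its rank-$(r+1)$ endpoint does not exist before stage $r+1$).

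Next I would show $T$ is a tree, i.e. connected and acyclic, using that both conditions are finitary. For connectedness: given vertices $\mvconst^{(r)}$ and $\mvconst^{(s)}$, both lie in $T_{\max(r,s)}$, which is a tree by Proposition~\ref{prop:statesp2:ms:tree} and hence connected; a path joining them inside $T_{\max(r,s)}$ is a path inside $T$. For acyclicity: a cycle in $T$ uses only finitely many edges, all of which lie in some $\xi^{(r)}$ and hence in $T_r$, contradicting that $T_r$ is acyclic. Thus $T$ is connected and acyclic, so it is a tree; in particular every vertex has a unique path back to the root, which is what makes the prefix relation $\sqsubseteq$ on refinement-tree paths (used in the topology section) well defined.

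Finally, for the implication property, fix an edge $(\mvconst^{(r)}, \mvconst^{(r+1)}) \in E$. By the observation above this edge lies in $\xi^{(r+1)}$, so Proposition~\ref{prop:statesp2:ms:tree} applied with $s = r$ yields $\vDash \mvconst^{(r+1)} \rightarrow \mvconst^{(r)}$, which is exactly the claim. I expect essentially no real obstacle here: all of the combinatorial and model-theoretic content has already been discharged in Propositions~\ref{prop:statesp2:ms:prefix} and~\ref{prop:statesp2:ms:tree}. The only point requiring care is the passage to the infinite union, namely noting that ``being a tree'' (connectedness together with the absence of cycles) transfers from the $T_r$ to $T = \bigcup_r T_r$ because any witness to its failure would be a finite configuration, hence already present in some $T_r$, and that each edge of $T$ is an edge of a suitable finite stage $T_r$ so that its implication annotation is inherited verbatim.
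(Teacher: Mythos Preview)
Your argument is correct and is essentially the same mathematical content as the paper's, but the packaging differs. The paper frames the passage to the limit via Zorn's lemma: it considers $\{T_r : r \in \N\} \cup \{T\}$ under subgraph inclusion, argues that every chain $\cC$ has an upper bound $\bigcup_{t \in \cC} t$ (checking along the way that any vertex of the union lies in some $T_r$ and hence has a unique path to the root, and that any edge lies in some $T_r$ and hence carries the implication), and then invokes Zorn to produce a maximal element, identified as $T$. Your proof dispenses with the Zorn wrapper and instead verifies directly that connectedness and acyclicity are finitary conditions preserved under the directed union, and that each edge of $T$ already appears at a finite stage so its implication annotation is inherited. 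The substantive step---any potential failure of ``tree $+$ implication'' would be witnessed inside some $T_r$, contradicting Proposition~\ref{prop:statesp2:ms:tree}---is the same in both; your version is simply more elementary, since the chain $T_0 \subseteq T_1 \subseteq \cdots$ is explicit and $T$ is its union by definition, so no maximality principle is actually needed.
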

\begin{proof}
Consider the set $\set{T_r \ST r \in \N} \cup \set{T}$ ordered by subgraph inclusion. We show that every chain $\cC$ has an upper bound, namely $\bigcup_{t \in \cC} t$. Let $\mvconst^{(r)}$ be some vertex of $\bigcup_{t \in \cC} t$. Then it is in some $T_r$ so it is identified by a unique path because $T_r$ is a tree by Proposition~\ref{prop:statesp2:ms:tree}. Moreover, let $(\mvconst^{(r)}, \mvconst^{(r+1)})$ be some edge of $\bigcup_{t \in \cC} t$. Again, it is an edge of some $T_r$ so that $\vDash \mvconst^{(r+1)} \rightarrow \mvconst^{(r)}$ as desired. By Zorn's lemma, there is a maximal element which in this case is exactly $T$.
\end{proof}

\end{document}